\newtheorem{theorem}{Theorem}[section]
\newtheorem{lemma}[theorem]{Lemma}
\newtheorem{claim}[theorem]{Claim}
\newtheorem{remark}[theorem]{Remark}
\newtheorem{corollary}[theorem]{Corollary}
\newtheorem{definition}[theorem]{Definition}
\definecolor{darkgreen}{rgb}{0,0.5,0}
\definecolor{darkblue}{rgb}{0,0,0.5}
\renewcommand{\paragraph}[1]{\vspace{0.15cm}\noindent {\bf #1}.}
\title{A Loosely Self-stabilizing Protocol for Randomized Congestion Control with Logarithmic Memory\footnote{This is an extended version of a paper which will appear in SSS 2019. This work was partially supported by the German Research Foundation (DFG) within the Collaborative Research Center On-The-Fly Computing (GZ: SFB 901/3) under the project number 160364472.
}
}
\author{
  Michael Feldmann\\
  \small Paderborn University \\
  \small michael.feldmann@upb.de\\
  \and
  Thorsten G\"otte\\
  \small Paderborn University \\
  \small thorsten.goette@upb.de\\
  \and
  Christian Scheideler\\
  \small Paderborn University \\
  \small scheideler@upb.de\\
}
\date{}
\begin{document}

\begin{titlepage}

\maketitle
\thispagestyle{empty}

\begin{abstract}
	We consider congestion control in peer-to-peer distributed systems. 
	The problem can be reduced to the following scenario: Consider a set $V$ of $n$ peers (called \emph{clients} in this paper) that want to send messages to a fixed common peer (called \emph{server} in this paper).
We assume that each client $v \in V$ sends a message with probability $p(v) \in [0,1)$ and the server has a capacity of $\sigma \in \mathbb{N}$, i.e., it can recieve at most $\sigma$ messages per round and excess messages are dropped.
The server can modify these probabilities when clients send messages.
Ideally, we wish to converge to a state with $\sum p(v) = \sigma$ and $p(v) = p(w)$ for all $v,w \in V$.	

We propose a \emph{loosely} self-stabilizing protocol with a slightly relaxed legitimate state.   
Our protocol lets the system converge from \emph{any} initial state to a state where $\sum p(v) \in \left[\sigma \pm \epsilon\right]$ and $|p(v)-p(w)| \in O(\frac{1}{n})$. 
This property is then maintained for $\Omega(n^{\mathfrak{c}})$ rounds in expectation.
In particular, the initial client probabilities and server variables are not necessarily well-defined, i.e., they may have arbitrary values.

Our protocol uses only $O(W + \log n)$ bits of memory where $W$ is length of node identifiers, making it very lightweight.
Finally we state a lower bound on the convergence time an see that our protocol performs asymptotically optimal (up to some polylogarithmic factor).
\end{abstract}

\end{titlepage}

\section{Introduction}
Consider a set of n nodes (called \emph{clients} in this paper) that want to continuously send messages to a fixed node (called \emph{server}) with a certain probability in each round.
The server is not aware of its connections and has limited capabilities with regard to the number of messages it is able to receive in each round and its internal memory.
The task for the server is to use a \emph{congestion control} protocol to modify the client probabilities such that the server receives only a constant amount of messages in each round (on expectation).
As client probabilities may be arbitrary at the beginning, we further require the protocol to be \emph{self-stabilizing}, i.e., it should be able to reach its goal starting from any arbitrary initial state.
Self-stabilization comes with the advantage that the protocol is able to recover from transient faults like message loss or blackout of processes automatically.
As the system grows larger, these kinds of faults occur more often, which makes self-stabilization as a concept very desirable.

At first glance, one may think that this setting only applies to client/server-architectures.
However, we believe that solving this problem is quite important for distributed systems where nodes constantly have to communicate with their neighbors.
Also there are distributed systems where nodes are not aware of their incoming connections, e.g. in rooted trees, random graphs \cite{DBLP:conf/spaa/MahlmannS06} or linearized de Bruijn networks~\cite{DBLP:conf/sss/RichaSS11}.
On these networks one is able to effectively perform many important techniques relevant to distributed computing such as aggregation, sampling, or broadcast which are important for applications like distributed data structures (e.g. hash tables \cite{DBLP:conf/wdag/KniesburgesKS13}, queues \cite{DBLP:conf/ipps/FeldmannSS18} or heaps \cite{DBLP:conf/spaa/0001S19}).
Also nodes with limited capabilities can be found in internet of things applications like wireless networks~\cite{DBLP:journals/tmc/TangGD08}.

In this paper we present a loosely self-stabilizing protocol for congestion control.
In contrast to classical self-stabilization, loose self-stabilization relaxes the closure property.
Our protocol guarantees that the server only receives a constant amount of messages on expectation in each round while only using a logarithmic amount of bits for its internal protocol variables for a period of $O(n^{\mathfrak{c}})$ rounds (and not forever as classical self-stabilization would require).
Furthermore we can guarantee \emph{fairness}, i.e., the probabilities of all clients are the same (up to some small constant deviation).
By slightly weakening the definition for a legitimate state, we are able to analyze the runtime of our protocol and show that it is able to quickly reach a state that is already practical for both, the clients and the server.

\section{Model and Definitions} \label{sec:model}

\subsection{System Model}
\paragraph{Network Model}
Since we only consider communication of nodes with their direct neighborhood in the overlay network, we consider the following directed graph $G=(V \cup \{s\}, E)$.
$V = \{v_1,\ldots,v_n\}$ represents the set of $n$ clients and $s$ represents the server.
We assume $n$ to be fixed.
The set of edges is defined by $E = \{(v,s)\ |\ v \in V\}$, i.e., all clients know the server, but the server does not know which client is connected to it.
More particularly, the server does not know the value $n$.
All clients and the server can be identified via their unique \emph{reference}, represented by values $v_i.id \in \mathbb{N}$ for all $i \in \{1,\ldots,n\}$ and $s.id \in \mathbb{N}$ respectively.
We assume that identifiers can be stored by at most $W$ bits, where $W \geq \log n$ is known to the server.
If a node $v$ knows the reference of another node $w$, then $v$ is allowed to send messages to $w$.

Each client $v \in V$ maintains a probability $p(v) \in (0,\hat{p}]$, where $\hat{p} \leq 1$ is a protocol-specific constant.
Denote by $p_{\mathit{min}} \in (0,\hat{p}]$ the minimum client probability, i.e., $p_{\mathit{min}} = \min_{v \in V}\{p(v)\}$ and denote the sum of all client probabilities by $P$, i.e., $P = \sum_{v \in V}^n p(v)$.
We assume that the probability $p(v)$ for a client $v$ cannot become smaller than $1/2^{b \cdot W}$ for some fixed constant $b > 0$, i.e., it can be encoded by $O(W)$ bits.
This means that all probabilities are multiples of $1/2^{b \cdot W}$.

\paragraph{Computational Model \& Definition of a Round}
We divide time into synchronous \emph{rounds}, where a single round consists of the following steps:

\begin{itemize}
	\item[$(i)$] Each client $v$ tosses a biased coin that shows 'heads' with probability $p(v)$. If $v$'s coin shows 'heads', $v$ sends a message $m = (v.id,p(v))$ to the server $s$. Otherwise $v$ stays idle for the rest of the round.
	We assume that the server is only able to receive up to $\sigma$ messages from clients per round for a fixed constant $\sigma \in \Theta(1)$ that is known to the server.
	If more than $\sigma$ clients decide to send a message to the server in this step, then exactly $\sigma$ of those messages are determined uniformly at random to arrive at the server, while the other ones are dropped.
	\item[$(ii)$] The server makes some internal computation based on the messages it received in the previous step.
	\item[$(iii)$] For each message $m = (v.id,p(v))$ that the server received, it may send a message $m' = (p(v)')$ back to $v$.
	\item[$(iv)$] Each client $v \in V$ that received a message $m = (p(v)')$ in the previous step sets $p(v)$ to $p(v)'$.
\end{itemize}

A message sent by a client to the server in step $(i)$ is denoted as a \emph{ping} or \emph{ping message} and we may also just say that the client \emph{pings} the server in this case.
We say that a client \emph{successfully pings} the server (in round $t$) if it sends a ping message to the server (in round $t$) that is actually being processed by the server, i.e., that is not dropped.
We may use $p_t(v)$ to refer to the probability of client $v$ in round $t$.
Note that the server $s$ is able to answer $v$ in step $(iii)$ because $v$ sent its reference $v.id$ to $s$ in step $(i)$.
Once the round is over, the server forgets about $v.id$.
Also observe that the server is not required to send an answer to each message it received in $(iii)$.

Last, the \emph{state} $S_t$ of the system before round $t$ is defined by the assignment of variables $p(v)$ at each client $v \in V$ and internal variables at the server.
The system transitions from $S_t$ to $S_{t+1}$ by performing the steps $(i)$ to $(iv)$ mentioned above.

\subsection{Problem Statement}
We wish to state a protocol that reaches a state with the following two conditions, namely \emph{Busyness} and \emph{Fairness}. 
They are defined as follows:
\begin{definition}[Busyness] \label{def:business}
	Let $L,R \in O(\sigma)$ be protocol-specific constants.
	We say that the server is \emph{busy} in some state $S$ of the system if $P \in [L,R]$ holds in $S$. We say that a state is $\mathfrak{busy}$ for short.
\end{definition}

\begin{definition}[Fairness] \label{def:fairness}
	The system satisfies \emph{fairness} in some state $S$, if $\sum_{v \in  V}\left(p(v)-\frac{P}{n}\right)^2 \leq \frac{1}{n^c}$ holds in $S$ for some constant $c > 0$. We say that a state is $\mathfrak{fair}$ for short.
\end{definition}
We believe these to be natural and reasonable safety properties given our problem and model setup.  
With the first property we ensure that the server operates close to its limits and is not under- or overutilized.
Note that $L,R \geq 1$ can be chosen freely by the server, so it can adjust these values depending on its computational power in practice.
Note that this is not fully precise in a sense that $P$ does converge to some desired fixed value, but we can guarantee that $P$ will eventually converge to some value within the interval $[L,R]$.
Moreover, the notion of busyness prevents the trivial solution of letting all clients send with probability $1$.
Fairness assures that all clients (roughly) send the same amount of data to the server and every client will eventually send. 
This prevents the trivial solution of letting $\sigma$ clients send with probability $1$ and all others with $0$.

Note that in a distributed setting errors are the norm rather than the exception, which means that the probabilities of the clients and the variables can be corrupted through malicious messages, crashes, and memory faults.
Thus, we are specifically interested in a \emph{self-stabilizing} protocol that reaches a safe state even if all probabilities and server variables are corrupted.

In the classical sense, a protocol is \emph{self-stabilizing} w.r.t. a set of legitimate states if it satisfies \emph{Convergence} and \emph{Closure}: 
Convergence means that the protocol is guaranteed to arrive at a legitimate state in a finite amount of time when starting from an arbitrary initial state.
Closure means that if the protocol is in a legitimate state, it remains in legitimate states thereafter as the set of clients does not change and no faults occur.
However, our protocol will \emph{not} meet these strong requirements of classical self-stabilization due to the clients' probabilistic nature. 
To account for this, we will instead show that our protocol is \emph{loosely self-stabilizing}.

The notion of \emph{probalistic loose self-stabilization} was introduced by Sudo et al. in \cite{DBLP:journals/tcs/SudoNYOKM12} to deal with probabilistic protocols that violate the Closure with \emph{very small} probability.
Instead of a set of legitimate states that are never left, a loosely self-stabilizing protocol maintains a safety condition for a sufficiently long time.
More precisely, a protocol is $(\alpha,\beta)$-loose self-stabilizing, if it fulfills the following two properties: 
First, it reaches a legitimate state after $\alpha$ rounds (in expectation) starting from \emph{any} possible initial state.
Second, given that the execution starts in a legitimate state, the protocol fulfills a safety condition for at least $\beta$ rounds (in expectation).
That means for $\beta$ consecutive rounds, all states fulfill a certain condition if their execution started in a legitimate state.
We call this the holding time.
To put it more formally, let $\mathfrak{S}$ be the set of all possible system states and $\mathfrak{L} \subset \mathfrak{S}$ be the set of all legitimate states.
Then the random variable $C(s,\mathfrak{L})$ denotes the convergence time if the algorithm started in $s \in \mathfrak{S}$.
Likewise, let $\mathfrak{L}^*$ be the set of all states that fulfill the safety condition, then $H(\ell, \mathfrak{L}^*)$ denotes the holding time given that we start in $\ell \in \mathfrak{L}$.
Thus, for a $(\alpha,\beta)$-loose self-stabilizing protocol, it holds
$$
\max_{\mathfrak{s} \in \mathfrak{S}} \mathbb{E}\left[C(\mathfrak{s}, \mathfrak{S})\right] \leq \alpha \,\,\, \textsl{and}
\,\,\, \min_{\ell \in \mathfrak{L}} \mathbb{E}\left[H(\ell, \mathfrak{L}^*)\right] \geq \beta
$$
Note that for an efficient protocol it should hold $\alpha << \beta$, i.e, we quickly reach a legitimate state and then stay safe for a long time.

\subsection{Technical Contributions} \label{sec:intro:problem}
Our goal is to construct a self-stabilizing protocol for the server that converges the system into a state where busyness (\Cref{def:business}) and fairness (\Cref{def:fairness}) hold.
In the following we discuss the most major obstacles that we have to overcome when constructing a solution.

\paragraph{Dealing with Arbitrary Initial States}
In initial states the variables at both the clients and the server may contain arbitrary values.
Particularly, each client probability may initially be an arbitrary value out of $(0,\hat{p}]$.
Due to the restrictions on the message size this may lead to $P$ being as low as $O(1/poly(n))$ initially which means that it may take a long time until the server receives the first ping message.
This means that our protocol needs to be designed in a way such that for initially low values of $P$ we make significant progress in reaching a legitimate state once the probability of a client is modified.

\paragraph{Knowledge of $\Theta(\log n)$}
Our algorithm requires the server to estimate $\Theta(\log n)$.
The problem of approximating $\Theta(\log n)$ can be non-trivial when additionally requiring a self-stabilizing solution for this, i.e., the server may think of any value to be $\log n$ initially.
Our loosely self-stabilizing solution for approximating $\Theta(\log n)$ at the server may be of independent interest.

\subsection{Our Contribution}
We propose a congestion control protocol that is loosely self-stabilizing. 
It converges to a legitimate state that is $\mathfrak{busy}$ and $\mathfrak{fair}$ within $\Tilde{O}(\mathfrak{c}\left(p_{min}^{-1}+ n^3)\right)$ \footnote{We use $\tilde{O}$ to hide polylogarithmic factors.} rounds starting from any initial state where clients may have arbitrary probabilities.
Then all following states are also $\mathfrak{busy}$ and $\mathfrak{fair}$ for at least another $O(n^{\mathfrak{c}})$ rounds in expectation. 
Here, $\mathfrak{c}$ is a parameter and can be chosen depending on the context.
Note that even for small $\mathfrak{c}$ the system stays stable long enough for practical purposes. 
Furthermore, the server uses only $O(W + \log n)$ bits in legitimate states.
This makes the protocol very lightweight and ideal for servers with strong memory constraints, e.g., in sensor networks.

The rest of the paper is structured as follows:
First, we review some related work in \Cref{sec:related_work}.
Then, we present our protocol in \Cref{sec:desc}. 
Last, in \Cref{sec:correctness_analysis} we rigorously analyze our protocol and show that it is loosely self-stabilizing.

\section{Related Work} \label{sec:related_work}
\paragraph{Congestion Control}
There exists a wealth of literature on \emph{congestion control} in the internet.
Classical approaches that have been considered are MIMD~(Multiplicative Increase, Multiplicative Decrease~\cite{DBLP:journals/ccr/Kelly03}) and AIMD (Additive Increase, Multiplicative Decrease~\cite{DBLP:journals/cn/ChiuJ89}).
Many other researchers studied congestion control for the AIMD model, which resulted in various extensions of the original work, see for example~\cite{DBLP:journals/automatica/CorlessS12}, \cite{DBLP:journals/algorithmica/KesselmanM05}, \cite{DBLP:conf/wwic/LahanasT05}.
Although these protocols work for arbitrary initial probabilities, their auxiliary variables are always assumed to be well-initialized.
In contrast, our protocol also tolerates completely arbitrary initial states including auxiliary variables, making it truly self-stabilizing.
Also, to the best of our knowledge, prior congestion control protocols do not provide a rigorous theoretical analysis on their convergence time.

\paragraph{Flow Control}
Close to congestion control problems are flow control problems (see \cite{1094691} for a survey).
These protocols differ from our setting in the sense that they operate on a continuous data stream, whereas we consider discrete rounds where only small self-contained control-messages are exchanged between the server and multiple clients, so flow control strategies are not applicable here.

\paragraph{Contention Resolution}
Close but different to congestion control protocols is the area of contention resolution in multiple access channels (see for example~\cite{DBLP:conf/spaa/BenderFHKL05}, \cite{DBLP:journals/jacm/BenderFGY19}, \cite{DBLP:conf/soda/ChangJP19} or~\cite{LAG02} for a survey).
A multiple access channel~(MAC) is a medium shared among all nodes through which they can send messages.
In each round a node may either send a message or sense the channel.
Messages that have been sent in the same round by two or more nodes \emph{collide} and are not transmitted.
By sensing the channel a node gets informed whether the channel is \emph{idle} (no message has been sent), \emph{busy} (a collision occurred) or it receives a message (in case there has been exactly one message sent).
Contention resolution differs from congestion control in a sense that once two or more messages are sent in the same round there already is a collision, whereas in congestion control multiple messages are allowed to be processed by the receiver.
Also the MAC allows clients to only receive binary feedback, making it less powerful compared to our server.

\paragraph{Distributed Consensus and Load Balancing}
Further related areas on a technical level are \emph{distributed average consensus} (see~\cite{DBLP:conf/ac/GuerraouiHMORS99} for a survey) and (discrete) load balancing (see \cite{berenbrink},\cite{DBLP:conf/icalp/TalwarW14} and the references therein). In both problems, multiple agents try to find the arithmetic mean of a given set of initial values.
Our protocol tries the same in order to achieve fairness.
However, we need to deal with dynamically changing probabilities as 
the adaption of the nodes' values directly influences their sampling  probabilities. 
In other settings the probabilities may be arbitrary but are fixed in advance. 

\paragraph{Self-stabilization}
Self-stabilization was first proposed in \cite{DBLP:journals/cacm/Dijkstra74}.
Since inventing self-stabilizing protocols can be quite difficult, people came up with relaxed versions for the convergence property like probabilistic self-stabilization or weak-stabilization~\cite{DBLP:journals/ijfcs/DevismesTY15}.
The notion of loose-stabilization \cite{DBLP:journals/tcs/SudoNYOKM12} that is used in this paper relaxes the closure property instead of the convergence property.

\section{Protocol Description} \label{sec:desc}
Intuitively our protocol works as follows: We constantly let the server count the number of pings it received in each round for an interval of $\Delta$ rounds.
Probabilities of clients that ping are averaged in these rounds.
Once an interval of $\Delta$ rounds ends, the server is able to precisely approximate $P$ in case $\Delta \in \Theta(\log n)$ and decide whether to either raise the probability of a client that has pinged in that round (if $P$ is too small), decrease the probability of a client (if $P$ is too large) or adjust the probabilities of clients by computing the average (if $P$ lies within a desired interval).

We describe the protocol in greater detail now starting with the introduction of variables and constants.
Afterwards we describe how the approximation for $P$ at the server works, followed by the description of the core protocol.
We refer the reader to \Cref{app:approx_log_n} where we describe how to obtain an estimation of $\Theta(\log n)$ in a self-stabilizing manner (which may be of independent interest) that is then stored in $\Delta$.

\subsection{Variables and Constants}
\Cref{table:variables} shows the variables and constants that are maintained by the server.

\begin{table*}[ht]
\centering
\begin{tabular}{@{}lp{13.5cm}@{}}
\toprule 
 $\varepsilon > 0$ & A constant used for the approximation of $P$.\\
 $L,R \in \Theta(1)$& Constants for the left and right border of the desired interval $[L,R]$ to which $P$ should converge. 	In order to guarantee that eventually $P \in [L,R]$, we require that $|R-L| > \hat{p} + 2\varepsilon$.
Note that $L,R$ are chosen such that $1 \leq L < R \leq \sigma$, i.e., on expectation, the server receives at least $L$, but no more than $R$ messages in legitimate states.\\
$\Delta \in \Theta(\log n)$& A variable indicating the interval of rounds in which the server counts the number of incoming pings.\\
$\delta \in [0,\Delta]$& A counter that is incremented each round and reset to $0$ once it is equal to $\Delta$.\\
$X \in \mathbb{N}_0$& A counter that sums up the number of incoming pings within a period of $\Delta$ rounds.\\
\bottomrule
\end{tabular}
\caption{Variables and constants used by our algorithm}
\label{table:variables}
\end{table*}

Note that the constants $L, R$ and $\varepsilon$ are protocol-based constants, which means they are chosen preemptively by the server and thus are fixed while the stabilization process of the system is going on.
On the other side the variables $\delta, \Delta$ and $X$ may contain arbitrary values out of their domains in initial states.

\subsection{Approximating \texorpdfstring{$\Theta(\log n)$}{Theta(log n)} at the Server}
In order to work properly, our protocol needs an approximation of $\Theta(\log n)$. 
In the following we sketch a protocol to obtain such an approximation given that we have one server and $n$ clients.

We let the server maintain a table of $\log \log N$ columns where each column $i$ represents a value $c_i = \sqrt[2^i]{N}$ and a timestamp $t_i \geq 0$ (see \Cref{table:log_n}).
	The first column $c_0$ represents the value $N$, which may be arbitrary large in initial states.
	Therefore the table along with its timestamps may initially be completely arbitrary.
	\begin{table}[ht]
		\centering
		\begin{tabular}{|c|c|c|c|c|c|}
			\hline
				$c_0 = N$ & $c_1 = \sqrt{N}$ & $c_2 = \sqrt[4]{N}$ & ... & $c_{\log \log N - 1} = 2$\\ \hline
				$t_0$ &  $t_1$ & $t_2$ & ... & $t_{\log \log N - 1}$ \\ \hline
		\end{tabular}
		\caption{Table maintained at the server.}
		\label{table:log_n}
	\end{table}
	The table is maintained as follows by the server: We map the identifiers of the server and the clients to the interval $[0,1)$ via a uniform hash function $h: \mathbb{N} \rightarrow [0,1)$.
	Whenever a client $v$ with $|h(s.id) - h(v.id)| \leq \frac{1}{c_i}$ successfully pings the server, the server resets all timestamps $t_i,\ldots,t_{\log \log N -1}$ to $0$.
	Aside from this, each timestamp $t_i$ gets incremented by one in each round.
Once the entry $t_i$ for column $c_i$ gets larger than $O(c_i \cdot polylog(c_i))$, all columns $c_0,\ldots,c_i$ are deleted from the table and the value $N$ is set to the column $c_{i+1}$.
	On the other side, once a client $v$ pings for which $|h(s.id) - h(v.id)| \leq \frac{1}{c_0^2}$ holds we update the table by adding that many columns to the left until $\frac{1}{c_0^2} < |h(s.id) - h(v.id)| \leq \frac{1}{c_0}$ holds.
	The server always sets $\Delta = \Theta(\log c_0)$ to approximate $\Theta(\log n)$.

This protocol will run in parallel to anything described in the remainder of this section.

\subsection{Approximating \textit{P} at the Server}
At the end of an interval of rounds of size $\Delta$, the server checks whether $P$ is (approximately) less than $L$, larger than $R$ or within $[L,R]$.
We use the operator $\prec$ to indicate the result of the approximation, for example if $P$ is approximately less than $L$ we say $P \prec L$ and otherwise $P \succ L$.
In order to check whether $P \prec L$ or $P \succ L$, the server checks whether $X/\Delta < L$ holds.
If that is the case then the server decides on $P \prec L$, otherwise it decides $P \succ L$.
By comparing $X/\Delta$ to $R$ the server can do the same to decide whether $P \prec R$ or $P \succ R$ holds. 

\subsection{Core Protocol}
The server executes \Cref{algo:protocol} in each round after each client has decided whether to ping the server or not (\Cref{algo:client_protocol}, \Cref{algo:client_protocol:1}).
Here $v_1,\ldots,v_k$ are the clients that successfully pinged the server in round $t$.

\begin{algorithm}
\caption{Pseudocode executed at each client $v$ in each round}
\label{algo:client_protocol}
\begin{algorithmic}[1]
	\State Toss a coin that shows 'heads' with probability $p(v)$
	\If{Coin shows 'heads'}
		\State Send $m = (v.id, p(v))$ to $s$ \label{algo:client_protocol:1}
	\EndIf
	\If{$v$ received $p'(v)$ from $s$}
		\State $p(v) \gets p'(v)$
	\EndIf
\end{algorithmic}
\end{algorithm}

\begin{algorithm}[ht]
\caption{Pseudocode executed at the server in each round}
\label{algo:protocol}
\begin{algorithmic}[1]
	\State Let $v_1,\ldots,v_k$ be the clients that successfully pinged the server  in ascending order of their probabilities, i.e., $p(v_1) \leq \ldots \leq p(v_k)$\label{algo:line1}
	\State $X \gets X + k$ \label{algo:update_X}
	\State $\delta \gets (\delta + 1) \mod \Delta$ \label{algo:update_delta}
	\If{$\delta = 0$}
		\If{$P \prec L$}
			\State Send $\hat{p}$ to $v_1$ \Comment{Increase minimum probability} \label{algo:p_less_l} 
		\ElsIf{$P \succ R$ and $k \geq 2$}
			\State Send $p(v_k)/(1+1/\sigma)$ to $v_k$ \Comment{Decrease maximum probability}\label{algo:average_rule:1} 
		\EndIf
	\State $X \gets 0$ \label{algo:reset_X}
	\Else
		\ForAll{$i \in \{1,\ldots,k\}$}
		\State Send $\lfloor \sum_{i=1}^k p(v_i)/k) \rfloor + r_i$ to $v_i$ \Comment{Average probailities} \label{algo:average_rule:2}
		\EndFor
	\EndIf
\end{algorithmic}
\end{algorithm}

The protocol given by \Cref{algo:protocol} works as follows: At the beginning of each round we let clients ping the server with their corresponding probabilities.
Assume that $k$ clients $v_1,\ldots,v_k$ pinged the server ordered by their probabilities, i.e., $p(v_1) \leq \ldots \leq p(v_k)$.
The server first increments $X$ by $k$ (\Cref{algo:update_X}) and then sets $\delta$ to $(\delta + 1) \mod \Delta$ (\Cref{algo:update_delta}).
In case $\delta \neq 0$, the server sets each probability $p \in \{p(v_1),\ldots,p(v_k)\}$ to the average of these probabilities (\Cref{algo:average_rule:2}).
In a round where $\delta = 0$ holds the server instead approximates $P$ based on $X$ and $\Delta$.
Using the approximation for $P$, the server checks whether $P \prec L$, i.e., whether $P$ is currently too low.
If that is the case, then the server raises the minimum probability $p(v_1)$ to $\hat{p}$ (\Cref{algo:p_less_l}).
On the other hand, if $P$ is too large ($P \succ R$) and at least $k \geq 2$ clients pinged, the server sets the maximum probability $p(v_k)$ to $p(v_k)/(1+1/\sigma)$ (\Cref{algo:average_rule:1}).
Once this has been done, the server resets $X$ to $0$ (\Cref{algo:reset_X}).

Notice that parts of our algorithm (specifically the way we choose client probabilities to be decreased) are related to the well-known \emph{two-choice process} where we (greedily) choose the process with minimum probability to have its probability reduced (\Cref{algo:average_rule:1}).
As it turns out in the analysis, we can make use of this by modelling our setting as a balls-and-bins process for which we can apply a result from \cite{DBLP:conf/icalp/TalwarW14}.

Due to messages being restricted to only $O(W)$ bits it may happen that we lose accuracy on the overall sum of probabilities $P$ if we were to simply compute the averages of client probabilities and round it up or down.
To overcome this problem, we use the following rounding approach when computing average client probabilities (\Cref{algo:average_rule:2}): In a round where $k$ clients ping the server and the average of these clients has to be computed, we initially set the probabilities to the average rounded down on $W$ bits, i.e., the least significant bit is set to $0$.
As the real average value leaves some residue value of the form $r \cdot \frac{1}{2^{b \cdot W}}$ for an integer $r < k$, we set the least significant bit of $r$ clients (chosen randomly among the $v_i$'s) to $1$.
This is indicated by the values $r_i \in \{0,\frac{1}{2^{b \cdot W}}\}$.
By doing so we ensure that $P$ does not get modified when only computing averages and all the client probabilities remain multiples of $\frac{1}{2^{b \cdot W}}$.
For the analysis we assume for simplicity that we compute the average value without rounding and only consider the rounding approach when it actually influences a proof.

\section{Analysis} \label{sec:correctness_analysis}
We analyze our algorithm in this section and show that it is loosely self-stabilizing.
Therefore, we need to give a formal definition for a legitimate state and a safety condition.
Obviously, we want our system to be in a \emph{busy} and \emph{fair} state, but moreover, 
in order to guarantee a long holding time, we need a correct estimate of $\Theta(\log n)$. 
Therefore, we introduce the notion of stability.
\begin{definition}[Stability]
A state $s \in S$ fulfills the stability property, if $c_0$, the biggest entry in the table, is in $\Omega(n^{\frac{1}{2}})$ and all $t_i$ are $0$. 
We call such a state $s$ $\mathfrak{stable}$ for short. 
\end{definition}
As we will see, this ensures that the protocol correctly estimates $\Theta(\log n)$ for at least $\Omega(n^{\mathfrak{c}})$ rounds in expectation.

Furthermore, we need to weaken the fairness property a bit to get more practical results.
This comes from the fact that the algorithm may erroneously increase or decrease the probabilities, even if $\Delta \in \Theta(\log n)$.
We wish to acknowledge that our protocol does reach an arbitrarily fair state after $O(poly(n))$ rounds and then stays that way for another $O(poly(n))$ rounds (both in expectation), i.e., it would hold $\alpha \approx \beta$.
We sketch this in \Cref{app:correctness_analysis}.
We therefore focus on the so-called weakly fair state as we deem it more practical.
It is defined as follows:
\begin{definition}[Weakly Fairness] \label{def:weak_legal_state}
	A state $S$ of the system is a $\mathfrak{weakly fair}$ state if  $\forall v \in V: p(v) \in \Omega \left(\frac{P}{n}\right)$. 
\end{definition}
Given this definition, we can now simply define the legitimate state.
Over the course of this chapter, we will show that the following holds:
\begin{theorem} \label{theorem:overall_convergence_time}
Let $\mathfrak{c}$ be a big enough constant.
Further, let $L,R \in O(\sigma)$ and $\varepsilon > 0$ be protocol-specific constants. Then it holds:
\begin{itemize}
    \item A state $\ell \in \mathfrak{L}(L,R,\varepsilon)$ of the system is a \emph{legitimate state} if it is $\mathfrak{busy}$, $\mathfrak{weakly fair}$, and $\mathfrak{stable}$.
    \item A state $\ell \in \mathfrak{L}^*(L,R,\varepsilon)$ of the system furfills the safety condition if it is $\mathfrak{busy}$ and $\mathfrak{weakly fair}$.
\end{itemize}
Then, our protocol is $\left(\Tilde{O}(p_{min}^{1}+n^3),\Omega(n^{\mathfrak{c}})\right)$-loosely self-stabilizing with regard to the legal states $\mathfrak{L}(L,R,\varepsilon)$ and safe states $\mathfrak{L}^*(L,R,\varepsilon)$.
\end{theorem}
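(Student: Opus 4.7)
The proof of \Cref{theorem:overall_convergence_time} naturally splits into (i) an upper bound of $\tilde{O}(p_{min}^{-1}+n^3)$ on the expected time to reach $\mathfrak{L}$ from any initial state, and (ii) a lower bound of $\Omega(n^{\mathfrak{c}})$ on the expected number of consecutive rounds for which $\mathfrak{busy}$ and $\mathfrak{weakly fair}$ continue to hold once the system is in $\mathfrak{L}$. I would analyze (i) in three sequential phases, one per legitimacy condition. For \emph{Phase A (Stability)} I would invoke the analysis of the log-estimator subprotocol of \Cref{app:approx_log_n} to show that in $\tilde{O}(\mathrm{poly}(n))$ rounds the server's table collapses to some column $c_0 \in \Omega(\sqrt{n})$ with all timestamps $t_i=0$. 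Since this subprotocol runs in parallel with the core protocol, we may thereafter condition on $\Delta \in \Theta(\log n)$ throughout the remaining phases.

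For \emph{Phase B (Busyness)}, once $\Delta \in \Theta(\log n)$ a Chernoff bound shows that $X/\Delta$ is a $(1\pm\varepsilon)$-approximation of $P$ with probability $1-n^{-\Omega(1)}$ per window of $\Delta$ rounds. If initially $P<L$, then in expectation $O(1/P)\le O(p_{min}^{-1})$ rounds pass until the first client pings, which triggers the $P\prec L$ branch and raises $p(v_1)$ to $\hat{p}$; only $O(\sigma/\hat{p})=O(1)$ such boosts suffice to lift $P$ into $[L,R]$, so this phase contributes $\tilde{O}(p_{min}^{-1})$. The symmetric case $P>R$ is handled by the $P\succ R$ branch, which multiplies the maximum probability by $\sigma/(\sigma+1)$ once per $\Delta$-window and drives $P$ into $[L,R]$ geometrically in $\tilde{O}(\log n)$ windows. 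For \emph{Phase C (Weak Fairness)}, once the system is busy the averaging rule on line~\ref{algo:average_rule:2} of \Cref{algo:protocol} equalizes the probabilities of the clients that ping in a common round -- a discrete load-balancing step on the probability masses (quantized in units of $1/2^{bW}$). Modelling this as a balls-and-bins two-choice process and applying the bound of Talwar--Wieder~\cite{DBLP:conf/icalp/TalwarW14}, one shows that the quadratic potential $\Phi = \sum_{v\in V} (p(v) - P/n)^2$ drops below $n^{-c}$ within $\tilde{O}(n^3)$ rounds, which already implies weak fairness. Summing the three phases yields the claimed convergence bound.

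For (ii), the crux is that $\Delta$ continues to be a correct estimate of $\Theta(\log n)$ throughout the epoch. Starting from a $\mathfrak{stable}$ state, each timestamp $t_i$ behaves as a biased random walk against a threshold of $O(c_i\,\mathrm{polylog}(c_i))$: in every round the expected number of successful pings close enough to $h(s.id)$ to cause a reset is $P\cdot O(1/c_i) = O(1/\sqrt{n})$ for $c_i \ge c_0 \in \Omega(\sqrt{n})$, so a stopping-time argument shows that no column is deleted or inserted within $n^{\mathfrak{c}}$ rounds with probability $1-n^{-\Omega(1)}$. Given $\Delta\in\Theta(\log n)$ throughout, a Chernoff bound shows that per $\Delta$-window the server misclassifies $P$ relative to $[L,R]$ with probability at most $n^{-(\mathfrak{c}+2)}$, and a union bound over $n^{\mathfrak{c}}$ windows rules out any erroneous boost or decrement, keeping $P\in[L-\varepsilon,R+\varepsilon]$. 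Weak fairness is preserved since the averaging step only replaces subsets of already-balanced probabilities by their mean plus a $1/2^{bW}$ rounding residue, which cannot drop any $p(v)$ below $\Omega(P/n)$.

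The main technical obstacle is Phase C. The set of clients participating in each averaging step is itself stochastic and biased toward those with larger current probabilities, so standard symmetric load-balancing bounds cannot be applied as a black box; the crux will be to set up a coupling between our process and a symmetric balls-and-bins dynamics so that the probability-dependent sampling slows the contraction of $\Phi$ by at most a polylogarithmic factor. A secondary difficulty is coupling Phases B and C: the rare boost and decrement events of Phase B perturb the total mass $P$, and one must verify that these perturbations do not undo the potential decrease established in Phase C.
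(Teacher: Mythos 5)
Your three-phase decomposition (stability, then busyness, then fairness) plus a separate holding-time argument matches the paper's proof architecture exactly, and the ingredients you name — the log-estimator subprotocol of \Cref{app:approx_log_n}, Chernoff concentration of $X/\Delta$, a quadratic potential for the balancing step, and a stopping-time argument for the table in the holding phase — all appear in the paper. You also correctly flag that biased, probability-dependent sampling is the central technical obstacle to the load-balancing argument. However, there are two substantive gaps.

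First, your Phase B silently treats $p_{\mathit{min}}$ as a fixed quantity, but during Phase A the core protocol is already running with a potentially wrong $\Delta$ and may erroneously decrement probabilities, driving the smallest probability below its initial value — in principle ad infinitum. The paper devotes a dedicated lemma (\Cref{lemma:negative_feedback}, proved via a layered-induction / $d$-choice argument in \Cref{app:negative_feedback}) to show that no client probability falls below $O\bigl(\min\{p_{\mathit{min}}, n^{-2}\}/\log n\bigr)$ w.h.p.\ during Phase A. Without some such bound your $O(1/P) \le O(p_{\mathit{min}}^{-1})$ estimate for the time to the first ping has no a-priori control, and the convergence bound of \Cref{theorem:P_convergence_time} does not follow.

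Second, your Phase C overclaims. You assert that the full quadratic potential $\Phi = \sum_v (p(v) - P/n)^2$ falls below $n^{-c}$ in $\tilde{O}(n^3)$ rounds and then derive weak fairness as a corollary. The paper explicitly \emph{does not} prove this in the convergence window: it observes that full fairness can only be attained and maintained on comparable polynomial time scales ($\alpha \approx \beta$), which is why the legitimate state is defined via the relaxed condition $p(v) \in \Omega(P/n)$ in the first place. The paper's \Cref{theorem:deckel} instead splits the phase into (i) $O(n^3\log n)$ rounds until the per-window probability of a spurious decrement drops to $o(n^{-2})$, and then (ii) $O(p_{\mathit{min}}^{-1}\log n)$ rounds of balancing (analysed via a \emph{truncated} potential counting only deviations below $P/(16n)$, following Berenbrink et al.~\cite{berenbrink}, not Talwar--Wieder) to establish weak fairness. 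So both the potential function and the citation for Phase C differ from what the argument actually uses, and the conclusion you draw from your potential bound is stronger than what the protocol can guarantee within the stated time.
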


\subsection{Convergence Time} \label{sec:analysis:convergence_time}
Now we show that the system converges to a legitimate state after $\tilde{O}(p_{\mathit{min}}^{-1} + n^3)$ rounds w.h.p.
We split the analysis into three phases: First we analyze the time it takes until $\Delta \in \Theta(\log n)$ is fixed.
In the second phase we analyze the time it takes for $P$ to reach a value within $[L,R]$.
Finally we show a bound on the time it takes until weak fairness is reached, i.e., until all probabilities are in $\Omega(P/n)$.
The full proofs are deferred to \Cref{app:approx_log_n}, \Cref{app:P_convergence_time} and \Cref{app:fairness_complete}.
Note that these phases exist purely for analytical purposes and the algorithm itself is oblivious of them.

\subsubsection*{Phase I: Approximating $\Theta(\log n)$}
We start by showing that there exists a appropriate self-stabilizing approximation algorithm for $\Theta(\log n)$ given that the communication graph is a star graph of $\Theta(n)$ nodes. 
In particular, the following holds:

\begin{theorem} \label{theorem:log_n:approx}
Our protocol provides a fixed estimation of $\Theta(\log n)$ for the server within $O(p_{\mathit{min}}^{-1} + n^2 \cdot polylog(n))$ rounds w.h.p. starting from any configuration, and reaches a $\mathfrak{stable}$ state every $O(n^2)$ rounds with probability $1-o(n^{-\mathfrak{c}})$.
\end{theorem}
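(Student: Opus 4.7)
The plan is to decompose the theorem into two sub-claims and attack each separately: (i) the $\tilde{O}(p_{\min}^{-1} + n^2)$ bound on the time until $c_0$ is fixed in a range $[\Omega(n^{1/2}), O(n \cdot polylog(n))]$ (which yields $\Delta \in \Theta(\log n)$), and (ii) the claim that, once in that range, a $\mathfrak{stable}$ state (all $t_i = 0$) is reached within every window of $O(n^2)$ rounds with probability $1 - o(n^{-\mathfrak{c}})$.

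For (i), I would first wait out the $O(p_{\min}^{-1} \log n)$ rounds needed w.h.p.\ before any client pings at all. Then, in the adversarial initial table, I would identify the smallest index $i^*$ for which $c_{i^*} = O(n \cdot polylog(n))$; by the definition $c_i = N^{1/2^i}$ this cutoff is essentially unique. For every $i \geq i^*$ the value $c_i$ is small enough that a $\Theta(n/c_i)$-sized expected fraction of clients lies at hash distance $\leq 1/c_i$, so $t_i$ either resets or the column is shed quickly; for every $i < i^*$ essentially no client lies at distance $\leq 1/c_i$ and $t_i$ simply grows. The column $t_{i^*-1}$ has the smallest among the ``never-reset'' timeouts, namely $O(c_{i^*-1} \cdot polylog(c_{i^*-1})) = O(n^2 \cdot polylog(n))$ since $c_{i^*-1} \approx c_{i^*}^2$; it therefore fires first, deleting columns $0,\ldots,i^*-1$ in one step and setting the new $c_0$ to $c_{i^*} = O(n \cdot polylog(n))$. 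A symmetric argument handles $c_0$ too small: when $c_0 < n^{1/2}$, $\Omega(1)$ clients lie at distance $\leq 1/c_0^2$ from $s$, so using only the $p_{\min}$ lower bound such a client pings within $\tilde{O}(p_{\min}^{-1})$ rounds and triggers a table expansion that pushes $c_0$ past $n^{1/2}$.

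For (ii), once $c_0$ is in the stable range, the per-round probability $q$ that some client at hash distance $\leq 1/c_0$ pings is at least $\Omega(n \cdot p_{\min} / c_0) = \Omega(p_{\min}/polylog(n))$ from the $p_{\min}$ bound alone (and much better once $P \in [L,R]$). Since any such ping resets $t_0$, which by the protocol rule resets every other $t_i$ simultaneously, the probability of seeing no reset during $Cn^2$ consecutive rounds is at most $(1-q)^{Cn^2} \leq \exp(-\Omega(n^2 q))$, which can be made $o(n^{-\mathfrak{c}})$ by choosing the constant $C$ large enough.

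The main obstacle is that in an adversarial initial configuration both the table and the individual $p(v)$'s can be pathological, so the hash-distance counting above interacts badly with a potentially very concentrated sampling distribution. My plan is to decouple the two sources of randomness: first take a single union bound over the hash function $h$ to pin the counts $|\{v : |h(s.id)-h(v.id)| \leq 1/c\}|$ to $\Theta(n/c)$ simultaneously for every relevant threshold $c$ with very high probability, and only then invoke Chernoff over the per-round coin tosses. With the hash counts pinned, the remaining analysis depends on the $p(v)$'s only through the crude worst-case bound $p(v) \geq p_{\min}$, which is exactly what allows the argument to start from an arbitrary configuration.
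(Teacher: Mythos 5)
Your part (i) decomposition is reasonable and roughly matches the paper's two-step argument: first $O(n^2\cdot polylog(n))$ rounds to shed the superfluous columns (the paper's Lemma about $\Delta \le \Theta(\log n)$), then $\tilde{O}(p_{\min}^{-1}+n)$ rounds until enough clients have pinged to populate the table correctly. However, there is a genuine gap in your part (ii), and it is exactly the place the paper has to work hardest.

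You argue that the per-round reset probability is at least $\Omega(n\,p_{\min}/c_0)=\Omega(p_{\min}/polylog(n))$, and then conclude the no-reset probability over $Cn^2$ rounds is $\exp(-\Omega(n^2 p_{\min}/polylog(n)))$. This breaks down in two ways. First, $p_{\min}$ is the \emph{initial} minimum; the protocol's decrease rule can push a client's probability below $p_{\min}$ during stabilization, so $p(v)\ge p_{\min}$ is not an invariant you can lean on. Second, even treating $p_{\min}$ as a permanent floor, the model allows $p_{\min}$ as small as $2^{-bW}$ with $W\ge\log n$, i.e.\ $p_{\min}=n^{-\Theta(1)}$. If, say, $p_{\min}=n^{-3}$, your bound gives $\exp(-\Omega(1/(n\cdot polylog(n))))$, which is not $o(n^{-\mathfrak{c}})$ — it is $1-o(1)$. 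So the crude $p_{\min}$ bound simply cannot deliver the stated failure probability, no matter how you tune $C$.

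The paper closes this hole with two ingredients you do not mention. It introduces the notion of a \emph{visible} client — one with $p(v)\in\Omega\bigl(P/(n\cdot polylog(n))\bigr)$ — and first shows that within $\tilde{O}(p_{\min}^{-1}+n)$ rounds $\Theta(n/\log n)$ clients become visible. It then \emph{modifies the server} so that it never assigns a probability below $1/(c_0\cdot polylog(c_0))$; once $c_0\in\Theta(n)$ this floor is $\Omega(1/(n\cdot polylog(n)))$, so visible clients stay visible regardless of how many times they are decreased. This gives the per-round reset probability a floor of $\Omega(1/(n\cdot polylog(n)))$ — independent of $p_{\min}$ — and only then does the $\exp(-\Omega(n\cdot polylog(n)))$-style bound produce the claimed $1-o(n^{-\mathfrak{c}})$. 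Your plan to "decouple hash randomness from coin tosses" is sound for the static hash-count estimates, but without tracking the dynamics of the $p(v)$'s (and without the probability floor the protocol enforces) the closure argument does not go through.
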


\begin{proof}[Proof Sketch] For the analysis of this approach we first show that after $O(n^2 \cdot polylog(n))$ rounds all superfluous columns that may exist in initial states have been deleted and thus $\Delta \leq \Theta(\log n)$ holds.
	Afterwards we show that after $\Theta(n/\log n)$ clients have successfully pinged the server at least once (which needs $O(p_{\mathit{min}}^{-1} +  n \cdot \log^2 n)$ rounds, see the analysis in \Cref{app:sec:n_pings}), at least $\Theta(n/\log n)$ clients are \emph{visible}, i.e., they have a probability of at least $\Omega\left(\frac{P}{n \cdot polylog(n)}\right)$.
	This suffices to show convergence for our strategy.
	
	For the second property we show that no columns gets added or deleted w.h.p. and that a visible client remains visible throughout the algorithm via a slight adaptation of the server's behavior.
	This leads to the timestamp $t_0$ of the first column $c_0$ being reset to $0$ after at most $O(n^2)$ rounds w.h.p.
\end{proof}

\subsubsection*{Phase II: Convergence for $P$}

In the following we bound the time until we arrive at a configuration with $P \in [L,R]$ once $\Delta \in \Theta(\log n)$ has stabilized.
Here, we need to take into account that in the first phase all probabilities could be arbitrarily adapted by the algorithm.
In particular, through negative feedback the smallest probability $p_{\mathit{min}}$ could be further reduced.
This could potentially delay the stabilization of our algorithm ad infinitum.
However, recall that the minimal probability is only decreased when two nodes of (almost) minimal probability successfully ping the server.
Thus, the smaller $p_{\mathit{min}}$ gets, the more unlikely it is for it to be reduced further.

Formally, we can show the following:

\begin{lemma}\label{lemma:negative_feedback}
During the execution of the first phase, no node will be assigned a probability smaller than $O\left(\frac{\min\{p_{\mathit{min}},n^{-2}\}}{\log{n}} \right)$ w.h.p.
\end{lemma}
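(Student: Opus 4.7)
My plan is to identify the only mechanism that can lower the minimum probability and then bound the total number of such events during Phase~I. Throughout, let $m_t := \min_{s \leq t}\min_v p_s(v)$ denote the running (historical) minimum, which is non-increasing with $m_0 = p_{\mathit{min}}$. I would first show that only the ``decrease max'' step (\Cref{algo:average_rule:1} of \Cref{algo:protocol}) can strictly lower $m_t$: the averaging step (\Cref{algo:average_rule:2}) replaces each pinger's probability by their arithmetic mean, which lies between the minimum and maximum of those values and is hence at least the current global minimum (the $O(2^{-bW})$ rounding residue is negligible for sufficiently large~$b$), while the ``increase min'' rule (\Cref{algo:p_less_l}) only raises a probability. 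Hence $m_t$ can drop only at the end of a $\Delta$-round interval, and at most once per such interval.

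Next I would characterise when the decrease-max step is effective. The rule assigns $p(v_k) \gets p(v_k)/(1+1/\sigma)$, so for the result to fall strictly below $m_{t-1}$ we need $p(v_k) < m_{t-1}(1+1/\sigma)$; because $v_k$ is the largest pinger and no client has probability less than $m_{t-1}$ by definition of the historical minimum, all $k \geq 2$ pingers must lie in $[m_{t-1}, m_{t-1}(1+1/\sigma))$. The resulting value is at least $m_{t-1}\sigma/(\sigma+1)$, so each strict drop scales $m$ by at most a constant factor bounded below $1$; thus forcing $m_t < q^\ast := c\min\{p_{\mathit{min}},n^{-2}\}/\log n$ requires $\Omega(\log n)$ strict drops. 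A pair-union bound gives per-round drop probability at most $\binom{n}{2}(m_{t-1}(1+1/\sigma))^2 = O(n^2 m_{t-1}^2)$, which itself shrinks geometrically as $m$ shrinks.

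To bound the number $N$ of strict drops over the $T = \tilde{O}(p_{\mathit{min}}^{-1} + n^2)$ rounds of Phase~I, I would track the hitting times through the geometric layers $m \leq p_{\mathit{min}}(\sigma/(\sigma+1))^j$. At layer $j$ the per-$\Delta$-interval drop probability is $O(n^2 p_{\mathit{min}}^2 (\sigma/(\sigma+1))^{2(j-1)})$, so stochastic domination by a geometric random variable gives expected layer-occupancy $\Omega(\Delta\cdot ((\sigma+1)/\sigma)^{2(j-1)}/(n^2 p_{\mathit{min}}^2))$. Summing the $J = O(\log n)$ layers and invoking concentration on the (dominating) independent geometric waiting times yields a total hitting time of $\tilde\Omega(1/(n^2(q^\ast)^2))$ with high probability, which comfortably exceeds the Phase~I budget $T$ and so proves that no client is ever assigned a probability below $q^\ast$.

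\emph{Main obstacle.} The delicate regime is $m \gtrsim 1/n$, where the naive bound $O(n^2 m^2)$ on the per-round drop probability is vacuous. The saving grace is that every strict drop there shrinks $m$ by a constant factor, so $m$ cannot linger at large values for many rounds; turning this self-slowing behaviour into a high-probability statement (rather than a mere expectation bound) is the main technical challenge, and I expect it will require a careful stopping-time or martingale argument coupling the drop count to a multiplicative potential such as $\log(1/m)$.
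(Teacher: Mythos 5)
Your approach differs structurally from the paper's: you track the running minimum $m_t$ sequentially and bound the number of drops via geometric waiting times at each level, whereas the paper runs a population-level balls-into-bins \emph{layered induction} in the style of the power-of-two-choices analysis. There they define $\eta_i$ as the number of nodes whose probability has been reduced to level $p_i := p_{\mathit{min}}(1+1/\sigma)^{-i}$ and show $\eta_i \leq \beta_0^{2^i} n$ w.h.p.\ by induction; the doubly-exponential decay means that after $\log\log n + O(1)$ levels only $O(\log n)$ nodes remain, and a final union bound shows these cannot acquire more than a constant number of additional reductions. The decisive advantage of the population argument is that the failure probability at each level of the induction is exponentially small in $n$ (it is a Chernoff bound over $\Omega(n)$ nodes), so the ``w.h.p.'' claim is essentially free.

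Your waiting-time argument has a genuine gap precisely here. If you try to lower-bound the hitting time by stochastic domination, the cleanest conclusion you can extract is $\Pr[\text{hit } q^\ast \text{ within } T] \leq T \cdot q_J$, where $q_J = O(n^2 (q^\ast)^2 / \Delta)$ is the per-interval drop probability at the final layer. Plugging in $T = \tilde O(p_{\mathit{min}}^{-1})$ and $q^\ast = p_{\mathit{min}}/\log n$ for the regime $p_{\mathit{min}} \leq n^{-2}$ yields $T q_J = \tilde O\!\left(n^2 p_{\mathit{min}} / \log^3 n\right)$, which with $p_{\mathit{min}} \leq n^{-2}$ is only $O(1/\mathrm{polylog}(n))$ --- not $n^{-\Theta(1)}$. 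So ``comfortably exceeds the Phase~I budget'' is not established with high probability, only with probability $1 - o(1)$. To push this to a w.h.p.\ guarantee via your route you would have to descend $\Theta(\log n)$ more layers past $q^\ast$, i.e.\ lose an extra $\mathrm{poly}(n)$ factor rather than $\log n$, which is a strictly weaker statement than the lemma claims. Separately, you correctly flag the regime $m \gtrsim 1/n$ as the obstacle (where $n^2m^2$ is vacuous), and you are right that it needs a different idea; the paper in fact finesses this by \emph{assuming} $p \in o(1/n)$ inside the induction step, which is why the $\min\{p_{\mathit{min}}, n^{-2}\}$ appears in the statement. In short: the missing idea is exactly the $d$-choice/double-exponential population bound, which both supplies the tight $\log\log n$ cap on the number of reductions and, crucially, the exponential concentration that your geometric-waiting-time bound does not deliver on its own.
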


\begin{proof}[Proof Sketch] The proof works similar to the analysis of a ball-into-bins process with $d$ choices. 
Whenever the probabilities are reduced through the algorithm, this can be seen as throwing a ball to the biggest of the $d$ randomly chosen nodes that pinged in that round.
Through a careful adaption of the corresponding proof, we see that the minimal node's probability is reduced at most $\log{\log{n}}$ times if the protocol runs for $O(p_{min}^{-1})$ rounds.
This corresponds to reducing the probability by a factor $\left({1+\frac{1}{\sigma}}\right)^{-\log{\log{n}}}$. Since $\sigma$ is constant, this is within  $O\left(\log{n}^{-1}\right)$. 
We defer the full proof to \Cref{app:negative_feedback}.
\end{proof}
Given this insight, we can now show the following:
\begin{theorem}[Convergence Time for $P$] \label{theorem:P_convergence_time}
	Let the system be in any state where $\Delta \in \Theta(\log n)$ is already fixed.
	After $O((p_{\mathit{min}}^{-1} + n) \log^2 n)$ rounds, the system reaches a state where $P \in [L,R]$ w.h.p.
\end{theorem}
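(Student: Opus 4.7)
The plan is to analyze the two non-trivial regimes separately --- $P > R$ (too much traffic) and $P < L$ (too little traffic) --- while reusing a Chernoff-type bound for the server's approximation throughout.

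First, I would establish that once $\Delta \in \Theta(\log n)$, the server's estimator $X/\Delta$ correctly places $P$ in one of $(-\infty,L)$, $[L,R]$, or $(R,\infty)$ up to an additive $\varepsilon$-slack, with failure probability $n^{-c}$ per probe round. This is a standard Chernoff bound on $X$, the sum of $\Delta$ per-round ping counts that are independent across rounds. The edge case $P\cdot\Delta = o(1)$ is handled separately by noting that $X=0$ w.h.p., which safely falls into the $P\prec L$ regime. The condition $|R-L| > \hat{p}+2\varepsilon$ from \Cref{table:variables} provides enough hysteresis to absorb both the estimator error and the $\hat{p}$-sized jump from a single adjustment, which I would use later to rule out oscillation between the two regimes.

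Second, for the high case $P > R$, a Chernoff bound on the $\delta = 0$ round shows $k \geq 2$ with constant probability (since $P > R \geq 1$), so within $O(\log n)$ consecutive probe rounds at least one genuinely triggers a decrease of $p(v_k)$ by the factor $1/(1+1/\sigma)$ w.h.p. The averaging rule in non-probe rounds pulls the probabilities of co-pinging clients toward their mean, so after a burn-in of $O(\log n)$ probes the maximum pinger's probability is within a constant factor of $P/n$. Hence each decrease shrinks $P$ by $\Omega(P/n)$, i.e., geometrically with rate $1-\Theta(1/n)$. Starting from the worst case $P \leq n\hat{p}$, this requires $O(n\log n)$ probe rounds, i.e., $O(n\log^2 n)$ total rounds, to pull $P$ below $R$.

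Third, for the low case $P < L$, \Cref{lemma:negative_feedback} still guarantees $p_{\mathit{min}} \in \Omega(p_{\mathit{min}}/\log n)$, and the probability that at least one ping arrives in a given round is $1-\prod_{v}(1-p(v)) \in \Omega(\min\{P,1\})$. Hence the expected wait until the next probe round containing a pinger is $O(\Delta / P) \subseteq O(p_{\mathit{min}}^{-1}\log n)$. Each such probe sets one client's probability to $\hat{p}$, raising $P$ by at least $\hat{p} - p(v_1) = \Omega(\hat{p})$. Since $L = O(1)$ and $\hat{p} = \Theta(1)$, only $O(1)$ such raises are needed to push $P$ into $[L,R]$, so this phase contributes $O(p_{\mathit{min}}^{-1}\log^2 n)$ rounds. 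The margin $|R-L| > \hat{p}+2\varepsilon$ ensures we cannot overshoot into the high regime by more than an $\varepsilon$-error the approximation already absorbs.

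The hardest step will be the high case: rigorously bounding how much $P$ drops per probe requires controlling the maximum pinger's probability relative to $P/n$, for which I would appeal to the balls-into-bins / two-choice viewpoint highlighted after \Cref{algo:protocol} and the potential-function argument of~\cite{DBLP:conf/icalp/TalwarW14}. A union bound over the at most $O(n\log n)$ probe rounds keeps the aggregate failure probability in $o(1)$, and combining the two cases yields the claimed $O((p_{\mathit{min}}^{-1}+n)\log^2 n)$ rounds w.h.p.
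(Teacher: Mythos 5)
The overall structure you chose — two cases ($P>R$ via geometric decrease, $P<L$ via $O(1)$ large raises), concentration on the server estimator, hysteresis from $|R-L|>\hat p+2\varepsilon$ — matches the paper's decomposition into Lemmas \ref{lemma:convergence_time_1} and \ref{lemma:convergence_time_2}. However, there are two substantive issues.

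For the high case, you identify the crux as lower-bounding $p(v_k)$ (the max pinger) against $P/n$, but then propose a ``burn-in'' of $O(\log n)$ probes under averaging plus a balls-into-bins argument to establish that the max pinger is within a constant factor of $P/n$. This is both overkill and unsubstantiated: fairness (probabilities concentrated near $P/n$) is not achieved in $O(\log^2 n)$ rounds, and no such uniformity is needed. The paper instead observes directly (via the elementary Lemma~\ref{lemma:app:ping_amount} / \ref{lemma:pre:ping_amount}) that the set $V'=\{v: p(v)\geq P/(2n)\}$ carries probability mass $\geq P/2$, so with constant probability per probe some $v\in V'$ successfully pings; since the server decreases the \emph{maximum} pinger, that maximum is already $\geq P/(2n)$, giving the $\Omega(P/n)$ decrement without any control on the rest of the distribution. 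Your two-choice machinery belongs to the negative-feedback and weak-fairness analyses, not here.

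For the low case, the step ``raising $P$ by at least $\hat p - p(v_1) = \Omega(\hat p)$'' is unjustified: the minimal pinger $v_1$ in a given probe round could already carry probability close to $\hat p$, in which case the raise is negligible. The paper closes this by noting that when $P<L=O(1)$ the set $S=\{v: p(v)\leq \hat p/2\}$ has $\Theta(n)$ members (otherwise $P\geq L$), and the probability that \emph{no} client of $S$ pings in a $\delta=0$ round is $(1-p_{\mathit{min}})^{\Theta(n)}$; over $O(p_{\mathit{min}}^{-1})$ such rounds this fails only with probability $e^{-\Theta(n)}$, yielding the w.h.p. guarantee. Relatedly, your bound is phrased in expectation (``expected wait'') where the theorem requires w.h.p.; you would need an explicit repetition/union-bound step as the paper does. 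With those two repairs — drop the balls-into-bins detour for $P>R$ and add the set-$S$ argument for $P<L$ — your outline becomes essentially the paper's proof.
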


\begin{proof}[Proof Sketch]
	We need to consider the cases $P < L$ and $P > R$.
	In case $P < L$ we can show that it takes $O(p_{\mathit{min}}^{-1} \log n)$ rounds until $P \in [L,R]$ w.h.p.
	This follows from the time needed to set the probabilities of at least $\alpha$ different clients to $\hat{p}$ for a constant $\alpha \in \mathbb{N}$ with $\alpha \cdot \hat{p} > L$.
	For $P > R$ we can conclude that, with at least constant probability, at least one client out of the set $V' = \{v \in V\ |\ p(v) \geq \frac{P}{2n}\}$ pings the server successfully in a round where $\delta = 0$ and thus gets its probability reduced.
	It follows by calculation that after $O(n \log n)$ of these reductions $P < R$ holds.
	These reductions can be achieved within $O(n \log^2 n)$ rounds w.h.p.
\end{proof}

\subsubsection*{Phase III: Convergence to Weak Fairness}

Finally, we show that we reach (weakly) fair state after at most $\tilde{O}(p^{-1}_{\mathit{min}} + n^3)$ rounds w.h.p. given that the initial state is already $\mathfrak{busy}$ and $\mathfrak{stable}$.
Our definition of a weakly fair state requires that all probabilities are close to $\frac{P}{n}$ (and moreover will \emph{stay} close for $O(n^{\mathfrak{c}})$ rounds).
To achieve this, the protocol must not in- or decrease the client probabilities too often.
On the first glance, one might think that $\Delta \in \Theta(\log n)$ and $P \in [L,R]$ are sufficient to ensure that.
However, a closer look reveals that in cases where $P$ is close to the borders of the interval $[L,R]$ this might not be the case.

However, if we assume that $P$ only changes very infrequent, then we can adapt the results of Berenbrink et. al. \cite{berenbrink} and obtain the following result (the adapted proof can be found in \Cref{app:fairness_complete}).

\begin{theorem} \label{theorem:deckel}
	Let the system be in a legitimate state where $P \in [L,R]$.
	Then it holds:
	\begin{enumerate}
		\item After at most $O(n^3\log n)$ rounds, $P$ changes only with prob $o(\frac{1}{n^2})$. 
			\item After $O(p_{min}^{-1} \cdot \log n)$ rounds the system reaches a $\mathfrak{weakly fair}$ state w.h.p.
	\end{enumerate}		
\end{theorem}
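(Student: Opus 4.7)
The two dynamics at play decouple nicely: the averaging step \Cref{algo:average_rule:2} preserves $P$ exactly (by our rounding scheme) while reducing the variance of the probability vector, whereas only the check-round rules \Cref{algo:p_less_l,algo:average_rule:1} can alter $P$. For Part~1, the strategy is to show that after sufficient averaging the server's estimate $X/\Delta$ is so tightly concentrated that it correctly classifies $P$ with probability $1-o(1/n^2)$ in every check round. Concretely, $X = \sum_{\tau=1}^{\Delta} k_\tau$ is a sum of truncated Bernoulli counts with expectation at most $\Delta \cdot P$; once the empirical variance $\Phi = \sum_v (p(v)-P/n)^2$ is small enough that the per-round count $k_\tau$ exceeds $\sigma$ only with probability $o(1/n^3)$, a standard Chernoff bound yields $|X/\Delta - P| < \varepsilon/2$ with probability $1-o(1/n^2)$. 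Together with the constant-gap condition $|R-L| > \hat p + 2\varepsilon$, this means neither the $\prec L$ nor the $\succ R$ branch fires, so $P$ remains unchanged with probability $1-o(1/n^2)$ per check round. It then suffices to bound the mixing time of the averaging process to this low-variance regime. Viewing the set of pinging clients in a non-check round as a hyper-edge activation on the star, I would adapt the potential-drop argument of Berenbrink et al.~\cite{berenbrink}: conditional on $k \geq 2$, the averaging operator reduces $\mathbb{E}[\Phi]$ by a multiplicative factor of at least $1 - \Omega(1/n^2)$, and since $k \geq 2$ holds with constant probability per round in a busy state, $\Phi$ decays as $\Phi_0 \cdot e^{-\Omega(t/n^2)}$, falling below $n^{-\mathfrak{c}}$ within $O(n^3 \log n)$ rounds. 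A Markov and union bound lifts this from expectation to a high-probability guarantee over the whole time window.

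For Part~2, I would track each individual client. Fix $v$ with current probability $p$. A Chernoff argument shows that $v$ pings at least once within $O(\log n / p) \leq O(\log n / p_{\mathit{min}})$ rounds w.h.p. In a busy state with $P = \Theta(1)$, a pigeonhole argument shows the ``heavy'' clients $\{w : p(w) \geq P/(2n)\}$ collectively carry mass at least $P/2$, so with constant probability at least one heavy $w$ pings together with $v$, and moreover the total count $k$ of simultaneously pinging clients is $O(1)$ w.h.p.\ (since it is dominated by a Poisson with mean $P$). The averaging rule of \Cref{algo:average_rule:2} then sets $p(v) \leftarrow (\sum_i p(v_i))/k \geq p(w)/k = \Omega(P/n)$. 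After $O(\log n)$ such successful rounds, $v$ is permanently above the weak-fairness threshold, and a union bound over all $v \in V$ gives the claimed $O(p_{\mathit{min}}^{-1} \log n)$ bound.

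The main obstacle is the adaptation of the Berenbrink et al.\ potential argument in Part~1: their bounds assume that averaging pairs are selected uniformly at random over a fixed graph, whereas here a client participates in an averaging event with probability proportional to its current (and changing) probability $p(v)$. This self-reinforcing bias -- heavy clients average more often, which keeps them heavy -- must be controlled carefully to avoid spurious drift in the potential. The cleanest way I see is to restrict the potential analysis to the $\Omega(n/\log n)$ visible clients guaranteed by the proof of \Cref{theorem:log_n:approx}, whose probabilities lie in a relatively narrow band, and to absorb the contribution of the remaining low-probability clients into a negligible additive error using \Cref{lemma:negative_feedback}. The residual argument then reduces to a classical random-pair averaging analysis on a set of $\Theta(n/\log n)$ items, for which the Berenbrink et al.\ machinery applies directly and yields the required $O(n^3 \log n)$ mixing bound.
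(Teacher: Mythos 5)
Your Part~1 takes a genuinely different route from the paper, and unfortunately that route contains a fatal flaw. You assert that once the empirical variance $\Phi = \sum_v (p(v)-P/n)^2$ is small, the per-round count $k_\tau$ exceeds $\sigma$ only with probability $o(1/n^3)$, so truncation is rare and $X/\Delta$ concentrates on $P$. This is false: when $P = \Theta(\sigma)$ and the $p(v)$ are all close to $P/n$, the number of sending clients is (approximately) Poisson with mean $P$, so $\Pr[k_\tau > \sigma]$ is a \emph{constant}, not $o(1/n^3)$. In fact, making the probability vector uniform \emph{maximizes} the per-round count variance $\sum p(v)(1-p(v)) = P - \sum p(v)^2$, so low $\Phi$ makes truncation more, not less, likely. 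More fundamentally, the server's estimate $X/\Delta$ cannot distinguish $P = R - o(1)$ from $P = R + o(1)$ with probability better than $1 - n^{-O(1)}$ for any fixed $\Delta = \Theta(\log n)$, regardless of $\Phi$. The correct mechanism, which the paper uses (see the sketch following \Cref{theorem:deckel} and \Cref{lemma:convergence_time_2}), is that when $P$ sits near a boundary the server's occasional wrong $P \succ R$ decisions slowly \emph{push $P$ away} from $R$: each reduction shrinks $P$ by a factor $1 - \Theta(1/n)$, and after $\Theta(n)$ reductions (which take $\Theta(n^3\log n)$ rounds since each reduction only fires at a $\delta = 0$ round with probability $\Omega(1/n)$) the gap $R - P$ exceeds $\varepsilon$, at which point \Cref{lemma:approx_correctness} with suitably tuned $\Delta$ gives the $o(1/n^2)$ failure bound. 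The accuracy of the estimate does not improve; the quantity being estimated moves into the estimator's sweet spot.

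Your Part~2 is also a different approach than the paper's — a per-client union-bound argument rather than a global potential — and it has a subtle but real gap. You establish that each $v$ is, with constant probability per ping, paired with a heavy client and thus \emph{raised} to $\Omega(P/n)$, and then assert ``after $O(\log n)$ such successful rounds, $v$ is permanently above the weak-fairness threshold.'' The permanence does not follow. A client with $p(v) = \Theta(P/n)$ that is subsequently averaged with $k-1$ near-zero clients drops to $\approx p(v)/k$, and if this happens several times in succession $p(v)$ falls by a factor $\sigma^{-t}$, leaving $\Omega(P/n)$. The events ``$v$ reaches the threshold within $T$ rounds'' (which you union-bound over) are not the same as ``$v$ is above the threshold \emph{at time} $T$,'' which is what weak fairness requires; the latter needs you to show that the newly-heavy client cannot be dragged back down before the remaining light clients have also been raised. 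The paper's \Cref{lemma:fairness:convergence_time} avoids this trap by tracking the \emph{aggregate} deficit potential $\sum_v d_{\mu}(p(v))^2$: a drop in $p(v)$ is exactly compensated by an increase in its averaging partner's probability, so the potential is monotone under averaging (via \Cref{lemma:helper-square}) and the analysis never has to argue about any individual client in isolation. If you want to keep the per-client framing, you would need an additional contagion/coupling argument showing that the set of heavy clients only grows, which is essentially rediscovering the potential argument in disguise.
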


\begin{proof}[Proof Sketch]
	For the first claim, we present a simple technical argument in the appendix.
	Note that the probability to decrease $P$ depends on $P$ itself.
	The main idea is that after $n$ reductions (which take $n^3$ rounds in expectation), $P$ is so small that further reductions are very unlikely. 
For the second part, we model our system as a balls-and-bins process.
The clients represent the bins and the client probabilities represent balls, where the number of balls depends on $P$, i.e., if the probability of a client $v$ is $p(v) = \frac{c}{2^{b \cdot W}}$ (recall that client probabilities are multiples of $1/2^{b \cdot W}$), then we say that $v$ has $c$ balls.
	At the beginning the $P \cdot 2^{b \cdot W}$ balls are arbitrarily distributed among all clients.
	Then, we use an adaption of the potential function analysis from \cite{berenbrink}. 
	As potential, we/they use the sum of squared differences, i.e.,
	$
		\Phi_t := \sum_{v \in V} \left(p_t(v) - Pn^{-1}\right)^2
	$.
	In particular, we need to adapt the following:
	\begin{enumerate}
	    \item The probabilities are not uniform \emph{and} change dynamically during the process. 
	    We solve this by observing that with constant probability, the sampled values are close to the arithmetic mean.
	    Therefore, clients with small probability are increased quickly once they send.
	    \item The probabilities can be reduced. However, since the probability for a change is small, i.e., $o(n^{-2})$, we can amortize it through the balancing.
	\end{enumerate}
	Given these adaptations, we can show that after $O(p^{-1}_{min} \cdot \log n)$ rounds the sum of the squared differences between all clients and the average is at most $n$. This corresponds to the sum of the squared differences between all client probabilities and the average probability $P/n$ being at most $1/n$, which suffices to show fairness. 
Together with the time it takes for probabilities to be small enough, the theorem follows.
\end{proof}

\subsection{Holding Time}
It remains to bound the holding time.
However, this simply follows from the observations we made so far.
\begin{theorem}
Let $\mathfrak{c}$ be an arbitrary constant.
Suppose the system is in a legitimate state $\ell \in \mathfrak{L}(L,R,\epsilon)$, then it will remain in a safe state for $\Omega(n^{\mathfrak{c}})$ rounds in expectation.
\end{theorem}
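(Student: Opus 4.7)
The plan is to combine the three ingredients already established---accurate estimation of $\log n$, accurate estimation of $P$, and the balls-and-bins potential analysis---to argue that neither busyness nor weak fairness is violated for $\Omega(n^{\mathfrak{c}})$ rounds in expectation. Since being $\mathfrak{stable}$ is not required for the safety condition itself, stability is used only as the ``engine'' that keeps $\Delta \in \Theta(\log n)$ accurate throughout the holding period, so that the approximation of $P$ done in \Cref{algo:protocol} remains trustworthy.

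First I would invoke \Cref{theorem:log_n:approx}: starting from a $\mathfrak{stable}$ configuration, the system returns to a $\mathfrak{stable}$ configuration within $O(n^2)$ rounds with probability $1-o(n^{-\mathfrak{c}'})$ for an arbitrarily large constant $\mathfrak{c}'$ (by inflating the polylogarithmic cutoff in the table-maintenance rule). A union bound over the $O(n^{\mathfrak{c}-2})$ consecutive $O(n^2)$-windows inside the holding period shows that, for $\mathfrak{c}' \geq \mathfrak{c}+2$, the value $c_0$ stays in the correct band and hence $\Delta \in \Theta(\log n)$ holds for every round of the interval with probability $1-o(1)$.

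Next, conditional on $\Delta \in \Theta(\log n)$ and $P \in [L,R]$, the counter $X$ accumulated over one $\Delta$-window has expectation $P \cdot \Delta = \Theta(\log n)$, so a standard Chernoff bound gives $|X/\Delta - P| \leq \varepsilon$ with probability $1-n^{-\mathfrak{c}''}$ per window, by choosing the hidden constant inside $\Delta$ sufficiently large. Because the constants are picked so that $|R-L| > \hat{p} + 2\varepsilon$, this suffices for the server to correctly decide $P \not\prec L$ and $P \not\succ R$ at every decision round, so neither \Cref{algo:p_less_l} nor \Cref{algo:average_rule:1} ever fires. The remaining rounds invoke only the averaging rule, which by the rounding scheme in \Cref{sec:desc} preserves $P$ exactly. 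A union bound over the $O(n^{\mathfrak{c}}/\log n)$ decision windows then yields $P \in [L,R]$ throughout, i.e.\ busyness.

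For weak fairness I would adapt the potential argument underlying \Cref{theorem:deckel}, taking $\Phi_t := \sum_{v\in V}(p_t(v) - P/n)^2$ as in \cite{berenbrink}. Under the averaging rule $\Phi_t$ is contractive in expectation, and the rare event that $P$ actually changes contributes only $o(n^{-2})$ extra drift per round. Starting from a $\mathfrak{weakly fair}$ state, where $\Phi_t$ is small, a Markov-type tail bound on $\Phi_t$ shows that $\min_v p_t(v) \in \Omega(P/n)$ persists for $\Omega(n^{\mathfrak{c}})$ rounds in expectation. The main obstacle will be the self-consistency between the three invariants: the Chernoff argument needs $P \in [L,R]$, while the potential argument needs $P$ to be nearly fixed, and both in turn rely on a stable $\Delta$. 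I would resolve this by a joint induction over windows, conditioning on the good event from all previous windows and arguing that in each window all three invariants survive with probability $1-n^{-\mathfrak{c}''}$ for $\mathfrak{c}'' > \mathfrak{c}$; a final union bound and the standard conversion from ``holds w.h.p.\ for $n^{\mathfrak{c}}$ rounds'' to ``holds for $\Omega(n^{\mathfrak{c}})$ rounds in expectation'' then completes the proof.
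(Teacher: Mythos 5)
Your overall decomposition (stability keeps $\Delta$ accurate, accurate $\Delta$ keeps $P$ in range, small drift on $P$ lets the potential argument run) matches the paper's approach. However, there is a genuine gap in the busyness step. You claim that once $|X/\Delta - P| \leq \varepsilon$ holds per window, ``neither \Cref{algo:p_less_l} nor \Cref{algo:average_rule:1} ever fires,'' so $P$ is frozen. This is false when $P$ is near a boundary of $[L,R]$. Busyness only guarantees $P \in [L,R]$, not $P \in (L+\varepsilon, R-\varepsilon)$; if $P \in [L, L+\varepsilon]$, the estimate $X/\Delta$ can legitimately land below $L$ with constant probability, so the increase rule will fire in a constant fraction of decision windows, and similarly the decrease rule can fire when $P \in [R-\varepsilon, R]$. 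Your union bound is therefore bounding the probability of events that actually occur routinely, not rare failures.

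The correct argument has to accept that the rules do fire near the boundary and instead show that each firing cannot push $P$ \emph{out} of $[L,R]$: an increase raises $P$ by at most $\hat p$ and (by \Cref{lemma:approx_correctness}) happens w.h.p. only when $P \leq L+\varepsilon$, so the result stays below $R$ because $|R-L| > \hat p + 2\varepsilon$; a decrease shrinks $P$ by only $O(P/n)$ and happens w.h.p. only when $P \geq R-\varepsilon$, so reaching $L$ would require $\Omega(n)$ decreases all occurring while $P < R-\varepsilon$, each of which is an $o(n^{-\mathfrak{c}})$-probability ``wrong'' prediction. This is exactly the $\omega(n)$-wrong-predictions argument the paper sketches, and it is the missing idea in your write-up. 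Relatedly, for weak fairness you cannot assume $P$ is exactly constant; you need the paper's observation that decreases are rare enough that the averaging rule re-balances between consecutive decreases, so the potential stays bounded despite occasional increase of the potential by a decrease event.
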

\begin{proof}[Proof Sketch]
We show that both busyness and fairness are maintained with probability $1-o(n^{\mathfrak{c}})$ if we start in a $\mathfrak{stable}$ state. 
First, note that starting in a $\mathfrak{stable}$ state, the system maintains $\Delta \in \Theta(\log{n})$ until the first entry in the table is deleted.
For a deletion, a node (which pings with probability $\frac{P}{n}$) must not ping for consecutive $O(\mathfrak{c} n \log n)$ rounds.
The probability for this is $O(n^{-\mathfrak{c}})$ and hence this holds for $O(n^{\mathfrak{c}})$ round in expectation.
Given that $\Delta \in \Theta(\log{n})$ remains fixed, we can show the following.
\begin{enumerate}
\item The system remains in a busy state. 
We violate busyness if and only if $P$ leaves the interval $[L,R]$.
Therefore, the probabilities need to be de- or increased at least $\omega(n)$ times. 
This only happens if the server (wrongly) predicts $P\succ L$ or $P \prec R$, which happens with prob. $1-o(n^{\mathfrak{c}})$ given that $\Delta \in \Theta(\log{n})$.
\item The system remains in a weakly fair state. 
By a similar argument, we see that fairness is violated if few nodes are decreased too often. 
This also only happens if the server (wrongly) predicts $P\succ L$ or $P \prec R$ and is therefore evenly unlikely.
In particular, the times between two decreases are so long that the nodes can re-balance themselves and thus stay $\mathfrak{weakly fair}$.
\end{enumerate}
\end{proof}
\subsection{Tightness}
Last, we observe the tightness of our convergence time.
One can easily see that \emph{any} self-stabilizing protocol needs $\Omega(p_{\mathit{min}}^{-1} \log n + n)$ rounds to reach a legitimate state (the full proof is deferred to \Cref{app:lower_bound}).

This follows from the fact that each client need to ping the server at least once to get a probability in $O(\frac{P}{n})$.
As we see, our protocol is indeed optimal if $p_{min} \in O(\frac{1}{n^3})$, but is slower otherwise. 
However, note that the slowdown only happens because of two important properties that our protocol fulfills.
First, it takes an additional $O(n^3)$ rounds in phase $I$, i.e., during approximation of $\Theta(\log n)$.
Given that the protocol has a stable estimation of $\Theta(\log{n})$ (which is reasonable in many contexts) the convergence time is asymptotically optimal in this phase.
Second, it takes $O(n^3)$ rounds until the probability for a decrease is so low that the protocol converges to a weakly fair state.
For an even notion of fairness (e.g. at most $o(n)$ nodes may have very low probability) this could be improved.

\section{Conclusion} \label{sec:conclusion}
We proposed a self-stabilizing protocol for congestion control in overlay networks that performs reasonably well in our model.
Finally we want to make a remark on the system's performance in arbitrary topologies.

\begin{remark}
	Consider an overlay network $G = (V,E)$ with indegree at most $\zeta$. Further, let each node know a (probably rough) estimation $N$ of $n$.
	Assume we apply our protocol for loose self-stabilizing congestion control such that each node acts as a server for its incoming connections and as a (separate) client for each of its outgoing connections.
	This way we obtain ($O((p_{\mathit{min}}^{-1} + \zeta^3) \cdot polylog(N))$,$N^{\mathfrak{c}}$) loosely self-stabilizing protocols for all servers.
\end{remark}

This follows from \Cref{theorem:overall_convergence_time},
if we assume that all nodes $v \in V$ run our algorithm with neighbors as clients.
However note that in cases where $\zeta$ is constant our results would hold only with probability in $\Theta(e^{-\zeta})$ and not w.h.p.
To circumvent this we just use the estimation $N$ of $n$ for nodes and let the value for $\Delta$ at each node be in $\Theta(\log N)$ instead of $\Theta(\log \zeta)$.
Given that all nodes know $\Theta(\log N)$, the approximation algorithm is obsolete and all states are $\mathfrak{stable}$.
Note that all other bounds only depend on the number of client.
Thus, we plug in the maximum degree $\zeta$ of a node instead of $n$.
This gives us the $polylog(N)$-factor in the runtime above.

\bibliographystyle{halpha-abbrv}
\bibliography{literature}

\newpage

\appendix

\section{General Notions from Probability Theory} \label{sec:preliminaries}
In this section we introduce some well-known facts in probability theory.
We may use $\exp(x)$ to denote $e^x$.

We make extensive use of the following Chernoff bounds:

\begin{lemma}[Chernoff Bounds] \label{lemma:chernoff}
	Let $X_1,\ldots,X_n$ be a set of independent binary random variables.
	Let $X = \sum_{i=1}^n X_i$ and $\mu = \mathbb{E}[X]$.
	Then it holds:
	\begin{itemize}
		\item[(i)] For any $\delta \geq 1$ it holds \[\Pr[X > (1+\delta)\mu] \leq \exp\left(\frac{-\delta \mu}{3}\right).\] 
		\item[(ii)] For any $0 \leq \delta \leq 1$ it holds \[\Pr[X > (1+\delta)\mu] \leq \exp\left(\frac{-\delta^2 \mu}{3}\right).\] 
		\item[(iii)] For any $0 \leq \delta \leq 1$ it holds \[\Pr[X \leq (1-\delta)\mu] \leq \exp\left(\frac{-\delta^2 \mu}{2}\right).\] 
	\end{itemize}		
\end{lemma}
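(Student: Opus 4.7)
The plan is to use the classical Chernoff method: combine Markov's inequality applied to the moment-generating function $e^{tX}$ with independence of the $X_i$, and then optimize over the free parameter $t$. This yields a single master inequality from which (i)--(iii) all follow via elementary analytic estimates.

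First I would fix $t>0$ and apply Markov's inequality to get $\Pr[X>(1+\delta)\mu]\leq e^{-t(1+\delta)\mu}\,\mathbb{E}[e^{tX}]$. By independence, $\mathbb{E}[e^{tX}]=\prod_{i=1}^n \mathbb{E}[e^{tX_i}]$, and for each binary $X_i$ with $p_i:=\mathbb{E}[X_i]$ we have $\mathbb{E}[e^{tX_i}]=1+p_i(e^t-1)\leq \exp(p_i(e^t-1))$ using $1+x\leq e^x$. Multiplying over $i$ and summing exponents gives $\mathbb{E}[e^{tX}]\leq \exp(\mu(e^t-1))$, and the choice $t=\ln(1+\delta)$ minimizes the resulting upper bound, producing the master inequality
\[
\Pr\bigl[X>(1+\delta)\mu\bigr] \leq \left(\frac{e^\delta}{(1+\delta)^{1+\delta}}\right)^\mu. \qquad(\star)
\]

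Parts (i) and (ii) then reduce to analytic bounds on $\phi(\delta):=\delta-(1+\delta)\ln(1+\delta)$, the logarithm of the base of $(\star)$. For (ii) I would show $\phi(\delta)\leq -\delta^2/3$ on $[0,1]$ by setting $f(\delta):=\phi(\delta)+\delta^2/3$, noting $f(0)=f'(0)=0$, and analyzing $f''(\delta)=-1/(1+\delta)+2/3$; combined with the check $f'(1)<0$, this forces $f'\leq 0$ on $[0,1]$, hence $f\leq 0$. For (i) I would show $\phi(\delta)\leq -\delta/3$ for $\delta\geq 1$ by verifying the claim at $\delta=1$ (where $2\ln 2>4/3$) and observing that the derivative of $(1+\delta)\ln(1+\delta)-4\delta/3$ equals $\ln(1+\delta)-1/3$, which is positive for $\delta\geq 1$. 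For (iii) I would repeat the argument with $t<0$ (writing $t=-s$ with $s>0$) to obtain the analogous lower-tail master bound $\bigl(e^{-\delta}/(1-\delta)^{1-\delta}\bigr)^\mu$, and then verify that $g(\delta):=-\delta-(1-\delta)\ln(1-\delta)+\delta^2/2$ satisfies $g\leq 0$ on $[0,1]$ via $g(0)=g'(0)=0$ together with $g''(\delta)=-\delta/(1-\delta)\leq 0$, which forces $g'\leq 0$ and hence $g\leq 0$.

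The main obstacle is purely technical: establishing the function inequalities for $\phi$ and $g$ with the precise constants claimed. In particular, in (ii) a naive second-order Taylor expansion $\ln(1+\delta)\approx \delta-\delta^2/2$ would yield the weaker constant $1/2$ in the exponent rather than $1/3$, so a careful second-derivative argument on the full interval $[0,1]$ is needed; the same subtlety explains why (i) requires the hypothesis $\delta\geq 1$ and cannot be derived uniformly from the bound in (ii).
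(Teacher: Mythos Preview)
Your proof is correct and follows the standard moment-generating-function derivation of the Chernoff bounds; the analytic estimates you outline for $\phi$ and $g$ check out with the stated constants. The paper, however, does not prove this lemma at all: it is placed in the appendix on ``General Notions from Probability Theory'' and is simply quoted as a well-known fact, so there is no argument in the paper to compare against. In that sense your proposal goes beyond what the paper does, supplying a full self-contained proof where the authors merely cite the result.
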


We also use the following generalization of \Cref{lemma:chernoff}$(iii)$:

\begin{lemma}[Hoeffding Bound] \label{lemma:hoeffding}
	Let $X_1,\ldots,X_n$ be a set of independent random variables with $X_i \in [0,b]$.
	Let $X = \sum_{i=1}^n X_i$ and $\mu = \mathbb{E}[X]$.
	Then for any $0 \leq \delta \leq 1$ it holds \[\Pr[X \leq (1-\delta)\mu] \leq \exp\left(\frac{-\delta^2 \mu}{2b}\right).\] 
\end{lemma}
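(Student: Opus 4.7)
The plan is to prove this Hoeffding-style lower tail inequality by adapting the classical exponential-moment (Chernoff) argument from Bernoulli variables to variables supported on the interval $[0,b]$. The key observation is that the only place the bound $[0,1]$ is used in the standard proof of Lemma~\ref{lemma:chernoff}(iii) is in bounding the individual moment generating functions, and the passage to $[0,b]$ simply contributes the factor $b$ in the exponent's denominator.

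First, I would fix $t>0$ and apply Markov's inequality to the nonnegative random variable $e^{-tX}$:
\begin{equation*}
\Pr[X \leq (1-\delta)\mu] \;=\; \Pr\!\left[e^{-tX} \geq e^{-t(1-\delta)\mu}\right] \;\leq\; e^{t(1-\delta)\mu}\cdot \mathbb{E}\!\left[e^{-tX}\right].
\end{equation*}
By independence, $\mathbb{E}[e^{-tX}] = \prod_{i=1}^{n}\mathbb{E}[e^{-tX_i}]$. Next, I would exploit convexity of $x\mapsto e^{-tx}$ on $[0,b]$: for every $x\in[0,b]$,
\begin{equation*}
e^{-tx} \;\leq\; 1 \;-\; \tfrac{x}{b}\bigl(1 - e^{-tb}\bigr).
\end{equation*}
Taking expectations (with $\mu_i := \mathbb{E}[X_i]$), using $1-y\leq e^{-y}$, and multiplying over $i$ yields
\begin{equation*}
\mathbb{E}\!\left[e^{-tX}\right] \;\leq\; \exp\!\left(-\tfrac{\mu}{b}\bigl(1 - e^{-tb}\bigr)\right),
\end{equation*}
so that
\begin{equation*}
\Pr[X \leq (1-\delta)\mu] \;\leq\; \exp\!\left(t(1-\delta)\mu \;-\; \tfrac{\mu}{b}\bigl(1-e^{-tb}\bigr)\right).
\end{equation*}

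The third step is to optimize over $t>0$. Substituting $s = tb$, the exponent becomes $\tfrac{\mu}{b}\bigl[s(1-\delta) - (1-e^{-s})\bigr]$; differentiation gives the minimizer $e^{-s} = 1-\delta$, i.e.\ $s = -\ln(1-\delta)$, which is nonnegative since $\delta\in[0,1]$. Plugging this back in produces the classical form
\begin{equation*}
\Pr[X \leq (1-\delta)\mu] \;\leq\; \left(\frac{e^{-\delta}}{(1-\delta)^{1-\delta}}\right)^{\!\mu/b}.
\end{equation*}

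Finally, I would derive the clean exponential $\exp(-\delta^2\mu/(2b))$ from the elementary scalar inequality
\begin{equation*}
-\delta \;-\; (1-\delta)\ln(1-\delta) \;\leq\; -\tfrac{\delta^2}{2}, \qquad \delta\in[0,1],
\end{equation*}
which can be verified by noting equality at $\delta=0$ and comparing derivatives, or by the Taylor series $(1-\delta)\ln(1-\delta) = -\delta + \sum_{k\geq 2}\delta^k/(k(k-1))$. Exponentiating and raising to the power $\mu/b$ produces the claimed bound. The only mildly delicate step is the scalar inequality in the final line: all earlier steps are syntactic generalizations of the Bernoulli Chernoff proof, whereas this Taylor-type estimate is where the quadratic $\delta^2/2$ in the exponent is actually produced, so I would be careful to state it explicitly rather than leaving it as a calculation.
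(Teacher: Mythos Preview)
Your proof is correct and is exactly the standard exponential-moment derivation of the lower-tail Hoeffding/Chernoff bound for $[0,b]$-valued summands. The paper itself does not prove this lemma: it is listed in the appendix among ``General Notions from Probability Theory'' as a well-known fact, alongside the Chernoff bounds and Markov's inequality, and is simply quoted without argument. So there is no alternative approach in the paper to compare against; your write-up supplies a complete proof where the paper gives none.
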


Besides the Chernoff Bound, we will also use another standard bound, namely Markov's inequality.
It is defined as follows:
\begin{lemma}[Markov's Inequality] \label{lemma:markov}
Let $X \geq 0$ be non-negative random variable and $a > 0$ be a constant.
Then it holds
\[
	\Pr\left[X > a\mathbb{E}[X]\right] \leq \frac{1}{a}.
\]
\end{lemma}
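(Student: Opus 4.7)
The plan is to prove that each of the three safety-related properties — stability of the estimate $\Delta$, busyness, and weak fairness — is preserved round-by-round with probability $1-o(n^{-\mathfrak{c}})$, and then combine these via a union bound over $\Omega(n^{\mathfrak{c}})$ rounds. Concretely, I would identify a set of ``bad events'' whose non-occurrence for $T = n^{\mathfrak{c}}$ consecutive rounds guarantees that the system remains in $\mathfrak{L}^*(L,R,\varepsilon)$, and then bound the probability of each by $n^{-\Omega(\mathfrak{c})}$ using the Chernoff bounds from the appendix.

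First I would address preservation of the estimate $\Delta \in \Theta(\log n)$. Starting in a $\mathfrak{stable}$ state, no column is ever added unless a client with hash very close to $h(s.id)$ successfully pings, and columns get deleted only if some $t_i$ exceeds its $O(c_i \cdot \mathrm{polylog}\,c_i)$ threshold. Since we are in a $\mathfrak{weakly fair}$ state, every client pings with probability $\Omega(P/n) = \Omega(1/n)$ each round, so the expected number of clients in the ``reset zone'' of column $c_0$ that ping in any given round is $\Omega(1)$. A Chernoff bound then shows that the probability that $t_0$ is not reset during any window of $\Theta(\mathfrak{c} \, n \log n)$ rounds is $O(n^{-2\mathfrak{c}})$; the same argument applied to the larger columns shows they are reset even more frequently. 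A union bound over $T$ rounds gives that $\Delta$ stays fixed w.h.p.\ for the whole horizon.

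Next, conditioning on $\Delta \in \Theta(\log n)$ remaining fixed, I would argue that the server's decision on $P \prec L$ vs.\ $P \succ L$ (and similarly for $R$) is correct with probability $1 - o(n^{-\mathfrak{c}})$ each time $\delta = 0$. This is where \Cref{lemma:chernoff} enters: $X$ is a sum of $\Delta \in \Theta(\log n)$ binomial draws with mean $P \cdot \Delta$, and because $|R-L| > \hat{p} + 2\varepsilon$, whenever $P \in [L+\varepsilon, R-\varepsilon]$ the estimator $X/\Delta$ deviates enough to cause a wrong decision only with probability $\exp(-\Omega(\Delta)) = n^{-\Omega(1)}$; by choosing the hidden constant in $\Delta$ large enough, this is $o(n^{-2\mathfrak{c}})$. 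Since changing $P$ out of $[L,R]$ requires $\Omega(n)$ consecutive raises or $\Omega(n \log n)$ consecutive shrinkings of $p(v_k)/(1+1/\sigma)$-type, and each such event can only occur after at least one wrong decision, a union bound over the $T$ rounds preserves busyness.

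Finally, for weak fairness, I would reuse the balls-and-bins potential-function analysis invoked in \Cref{theorem:deckel}: once $P$ has settled and the system is $\mathfrak{weakly fair}$, the averaging steps (\Cref{algo:average_rule:2}) keep the potential $\Phi_t = \sum_v (p(v) - P/n)^2$ small in expectation. The only events that inject potential are successful increments to $\hat{p}$ or decrements by factor $1+1/\sigma$, and by the previous paragraph these occur with probability $o(n^{-\mathfrak{c}})$ per round; amortized against the balancing performed in between, $\Phi_t$ stays $O(1/n)$ w.h.p.\ and no $p(v)$ falls out of $\Omega(P/n)$. The main obstacle in my view is exactly this amortization argument at the boundary of $[L,R]$: one must show that the expected time between two ``spurious'' increment/decrement events is long enough for the averaging process to redistribute the injected potential, which requires combining the Chernoff concentration of $X$ with the mixing estimate from \cite{berenbrink}. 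Taking expectations and using that each of the three failure events occurs with probability at most $n^{-c\mathfrak{c}}$ per round then yields an expected holding time of $\Omega(n^{\mathfrak{c}})$ by linearity.
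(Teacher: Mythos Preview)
Your proposal does not address the stated lemma at all. The statement you were asked to prove is Markov's Inequality --- a one-line, textbook fact about nonnegative random variables --- but what you have written is a proof sketch for the \emph{holding time} theorem (the statement that, starting from a legitimate state, the system stays safe for $\Omega(n^{\mathfrak{c}})$ rounds in expectation). These are entirely unrelated claims.

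The paper, incidentally, gives no proof of \Cref{lemma:markov}; it is listed in the preliminaries as a standard bound. If a proof were required, it would simply be: for $t = a\,\mathbb{E}[X] > 0$, $\mathbb{E}[X] \geq \mathbb{E}[X \cdot \mathbbm{1}_{\{X > t\}}] \geq t \cdot \Pr[X > t]$, hence $\Pr[X > t] \leq \mathbb{E}[X]/t = 1/a$. None of the machinery you invoke (stability of $\Delta$, Chernoff bounds on $X/\Delta$, the balls-and-bins potential, the amortization of spurious increments) has any bearing on this inequality. You should double-check which statement label you were handed before writing the argument.
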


\section{Strategies for Estimating \texorpdfstring{$\log n$}{log n}} \label{app:approx_log_n}
We provide the details on how the server is able to get an approximation for $\Theta(\log n)$ that is stored in the variable $\Delta$ at the server.
As pointed out in \Cref{sec:intro:problem} this cannot be done trivially in a self-stabilizing fashion.
Still, there are multiple ways of doing this that come with both advantages and disadvantages.
We want to present and discuss two strategies in this section.

\subsection{Strategy 1: Storing a Single Client Identifier} \label{app:approx_log_n:1}
We first introduce the following simple strategy.

\paragraph{Description}
We map the identifiers of the server and the clients to the interval $[0,1)$ via a uniform hash function $h: \mathbb{N} \rightarrow [0,1)$.
The server $s$ maintains an additional variable $\hat{v} \in \mathbb{N}$ that stores the identifier of the client $v$ that minimizes $|h(s.id) - h(v.id)|$.
$\hat{v}$ is constantly updated whenever a client pings $s$, i.e., if a client $v$ pings $s$, then $s$ checks whether $|h(s.id) - h(\hat{v})| < |h(s.id)-h(v.id)|$ and updates $\hat{v}$ if necessary.
Whenever $s$ updates $\hat{v}$, it also updates $\Delta$ by setting $\Delta = -\log |h(s.id) - h(\hat{v})| \cdot c$, where $c$ is a protocol-specific constant (see \Cref{lemma:approx_correctness}).
We show that $-\log |h(s.id) - h(\hat{v})| \in \Theta(\log n)$ w.h.p. once at least $n/\log n$ clients have successfully pinged the server.

\paragraph{Correctness}
It is easy to see that if all clients have pinged the server at least once, then $\hat{v}$ stores the identifier of the client $v$ that minimizes $|h(s.id) - h(v.id)|$.
Also $\hat{v}$ does not change from this point on, since the set of clients does not change.
Now we show that the value $\Delta$ is a correct estimate of $\log n$ once $n/\log n$ clients have successfully pinged the server.
For this we need the following lemma from~\cite{DBLP:conf/podc/MalkhiNR02}:

\begin{lemma}[\hspace{1sp}\cite{DBLP:conf/podc/MalkhiNR02}] \label{lemma:distance_approximation}
	Given $n$ clients $v_1,\ldots,v_n$ and a server $s$.
	Let $\hat{v}$ store the identifier of the client $v \in V$ such that $|h(s.id)-h(\hat{v})|$ is minimized.
	Then \[|h(s.id)-h(\hat{v})| \in \left[\frac{1}{n^2}, \frac{\log n}{n}\right] \text{ w.h.p.}\]
\end{lemma}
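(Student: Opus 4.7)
The plan is to exploit the uniform hash assumption: once we condition on $h(s.id)$ being a fixed point $x \in [0,1)$, the values $h(v_1.id),\ldots,h(v_n.id)$ are i.i.d.\ uniform on $[0,1)$. Writing $D := \min_{i} |h(s.id) - h(v_i.id)|$, the claim splits into two one-sided tail bounds: a lower bound $\Pr[D < 1/n^2] = o(1)$ and an upper bound $\Pr[D > (\log n)/n] = o(1)$ (both strengthenable to polynomial tails $n^{-c}$ by tweaking constants). I would prove each by a short, elementary concentration argument and then combine them via a union bound.

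For the lower bound, I would use a union bound over the $n$ clients. For each fixed $i$, since $h(v_i.id)$ is uniform on $[0,1)$, the probability that it lands in the interval $(h(s.id) - 1/n^2, h(s.id) + 1/n^2)$ is at most $2/n^2$. Summing over $i$ yields
\[
\Pr[D < 1/n^2] \le n \cdot \frac{2}{n^2} = \frac{2}{n},
\]
which already suffices for the w.h.p.\ statement as phrased, and can be sharpened to $n^{-c}$ by replacing the threshold by $1/n^{c+2}$.

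For the upper bound, I would consider the event $E$ that \emph{no} client hash lies in the interval $I := [h(s.id) - a, h(s.id) + a]$ for $a = (c\log n)/(2n)$ with a suitable constant $c$. Since the $h(v_i.id)$ are independent and each misses $I$ with probability at most $1 - a$ (using only one side of the interval, which is safe even when $h(s.id)$ is near the boundary of $[0,1)$),
\[
\Pr[E] \le (1 - a)^n \le \exp(-a\, n) = \exp\!\left(-\tfrac{c}{2}\log n\right) = n^{-c/2}.
\]
Choosing $c$ large enough gives the desired polynomial tail, and on the complementary event at least one hash lies inside $I$, so $D \le a \le (\log n)/n$. A final union bound of the two events completes the proof.

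The argument is essentially textbook and the only technical subtlety is the minor boundary effect at $0$ and $1$ of the hash codomain: it is most cleanly handled either by viewing $|\cdot|$ as toroidal distance (in which case the length of $I$ is exactly $2a$ unconditionally) or by the one-sided estimate I used above, which only loses a constant factor that is absorbed into $c$. I do not expect any genuine difficulty; the result is the standard nearest-neighbor gap statistic for $n$ i.i.d.\ uniform samples around a fixed point, and indeed one could alternatively cite the classical fact that the minimum distance from a fixed point is $\Theta(1/n)$ and the minimum inter-sample gap is $\Omega(1/n^2)$ w.h.p.
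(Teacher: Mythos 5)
This lemma is not proved in the paper; it is imported verbatim from \cite{DBLP:conf/podc/MalkhiNR02}, so there is no internal proof to compare against. Your self-contained argument is the standard nearest-neighbor gap calculation and it is essentially correct: conditioning on $h(s.id)$, the client hashes are i.i.d.\ uniform, so a union bound over the $n$ clients gives $\Pr[D < 1/n^2] \le 2/n$, and a product bound over independent misses of a window of radius $a$ around $h(s.id)$ gives $\Pr[D > a] \le (1-a)^n \le e^{-an}$.

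One thing you gloss over a little too quickly is the interaction between ``choose $c$ large'' and ``conclude $D\le(\log n)/n$.'' With $a=(c\log n)/(2n)$ you only get $a\le(\log n)/n$ when $c\le 2$, and then $\Pr[E]\le n^{-1}$; pushing $c$ above $2$ strengthens the tail but lands you outside the interval the lemma states. So with the thresholds fixed at exactly $1/n^2$ and $(\log n)/n$, both failure events have probability $\Theta(1/n)$ and one cannot do better -- this is intrinsic to the order statistics of $n$ uniform points, not a weakness in your argument. You do flag that the tails can be sharpened to $n^{-c}$ only ``by tweaking constants'' (i.e.\ by widening the interval to $[1/n^{c+2}, c'(\log n)/n]$), which is the right reading: the lemma as written should be understood with implicit constants in the interval endpoints, and the paper's downstream use (\Cref{lemma:log_n}) is robust to that. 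Apart from this bookkeeping, the proposal is a clean and complete proof of the cited fact.
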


\begin{lemma} \label{lemma:log_n}
	Given $n$ clients $v_1,\ldots,v_n$ and a server $s$.
	Once at least $n/\log n$ arbitrary clients have pinged the server it holds \[\Delta = - \log|h(s.id)-h(\hat{v})| \in \Theta(\log n) \text{ w.h.p.}\]
\end{lemma}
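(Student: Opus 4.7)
The key observation is that Lemma~\ref{lemma:distance_approximation} does not actually require all $n$ identifiers to be hashed~--- any i.i.d.\ uniform sample of size $m$ yields the same bound with $m$ in place of $n$. My approach is therefore to apply Lemma~\ref{lemma:distance_approximation} to the set $S$ of clients that have already successfully pinged the server. Let $m := |S| \geq n/\log n$. Because the coin flips of the clients and the reaction of the server under Strategy~1 do not consult the hash function $h$ (the server inspects $h$ only \emph{after} a ping arrives, in order to update $\hat{v}$ and $\Delta$), the hashes $\{h(v.id) : v \in S\}$ are still independent uniform random variables on $[0,1)$. Applying the lemma to $S$ then yields
$$|h(s.id) - h(\hat{v})| \in \left[\frac{1}{m^2},\; \frac{\log m}{m}\right] \text{ w.h.p.}$$

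Taking $-\log$ on both sides and using $m \in [n/\log n,\, n]$, which implies $\log m = \log n - O(\log\log n) = \Theta(\log n)$, I obtain
$$\log m - \log \log m \;\leq\; -\log|h(s.id) - h(\hat{v})| \;\leq\; 2\log m,$$
and both of these quantities lie in $\Theta(\log n)$. Multiplying by the protocol constant $c$ then gives $\Delta \in \Theta(\log n)$ as claimed. If a self-contained argument is preferred over citing Lemma~\ref{lemma:distance_approximation}, I would reprove its two halves for the set $S$: for the upper bound on the distance, setting $r := c_1 \log m / m$, the probability that no client of $S$ has hash distance at most $r$ from $h(s.id)$ is at most $(1-2r)^m \leq \exp(-2c_1 \log m) = m^{-2c_1}$, which is polynomially small in $n$ once $c_1$ is large enough; for the lower bound, setting $r' := 1/m^{c_2}$ with $c_2 > 2$, a union bound over $S$ gives that some client lies within $r'$ of $h(s.id)$ with probability at most $2m \cdot r' = 2/m^{c_2-1} = o(1)$.

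The single delicate point, and the step I would spell out most carefully, is the independence claim: the identity of the $m$ pinging clients must not be correlated with their hash values. Under Strategy~1 this holds because $h$ is never consulted in any decision that affects whether a client pings~--- the client's biased coin depends only on $p(v)$, and the server feedback rules of \Cref{algo:protocol} do not involve $h$. Once this is established, the rest of the lemma is immediate, and there is no further obstacle beyond bookkeeping of constants.
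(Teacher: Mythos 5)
Your proof is correct and takes essentially the same route as the paper, which also applies Lemma~\ref{lemma:distance_approximation} with $m = n/\log n$ in place of $n$ and then computes $-\log$ of the resulting interval (the paper's expressions $\frac{\log^2 n}{n^2}$ and $\frac{\log^2 n - \log n \cdot \log\log n}{n}$ are exactly $\frac{1}{m^2}$ and $\frac{\log m}{m}$ under that substitution). The one thing you add --- spelling out why the hashes of the pinged clients remain i.i.d.\ uniform, because the pinging decisions never consult $h$ --- is a legitimacy check for applying Lemma~\ref{lemma:distance_approximation} to a protocol-dependent subset that the paper's proof leaves implicit.
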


\begin{proof}
	Using \Cref{lemma:distance_approximation}, we know that \[|h(s.id)-h(\hat{v})| \in \left[\frac{\log^2n}{n^2}, \frac{\log^2 n - \log n \cdot \log \log n}{n}\right] \text{ w.h.p.}\] after $n/\log n$ arbitrary clients have pinged the server.
	Computing $-\log|h(s.id)-h(\hat{v})|$ results in a value within the interval \[\left[\log n - \log(\log^2n-\log n\cdot \log \log n), 2\log n - \log(\log^2 n)\right] \text{ w.h.p.}\] which is within $\Theta(\log n)$.
\end{proof}

Note that it is easy to verify that once all $n$ clients have pinged the server at least once, then $-\log|h(s.id)-h(\hat{v})|$ is within $\Theta(\log n)$ as well.

\subsection{Strategy 2: Maintaining a Table} \label{app:approx_log_n:2}
In the following we introduce a more involved strategy that takes ideas from Strategy $1$, but is able to handle initially corrupted entries at the server.

\paragraph{Description}
Assume the server stores a value $N$ which it thinks is equal to $n$.
The server maintains a table (see \Cref{table:log_n}) with $\log \log N$ columns denoted by $c_0,\ldots,c_{\log \log N - 1}$, where column $c_i$ represents the value $\sqrt[2^i]{N}$.
Furthermore there is a timestamp $t_i \in \mathbb{N}_0$ for all columns $c_i$.

The table is maintained as follows by the server: Same as for the previous strategy, we map the identifiers of the server and the clients to the interval $[0,1)$ via a uniform hash function $h: \mathbb{N} \rightarrow [0,1)$.
Whenever a client $v$ with $|h(s.id) - h(v.id)| \leq \frac{1}{c_i}$ successfully pings the server, the server (re-)sets all timestamps $t_i,\ldots,t_{\log \log N -1}$ to $0$.
Aside from this, each timestamp $t_i$ gets incremented by one in each round.
Once the entry $t_i$ for column $c_i$ gets larger than $\mathfrak{c} \cdot c_i \cdot O(polylog(c_i))$ for an appropriately chosen constant $\mathfrak{c} \geq 1$, all columns $c_0,\ldots,c_i$ are deleted from the table and the value $N$ is set to the column $c_{i+1}$.
On the other side, once a client $v$ pings for which $|h(s.id) - h(v.id)| \leq \frac{1}{c_0^2}$ holds we update the table by adding that many columns to the left until $\frac{1}{c_0^2} < |h(s.id) - h(v.id)| \leq \frac{1}{c_0}$ holds.
The server always sets $\Delta = \Theta(\log c_0)$ to approximate $\Theta(\log n)$.

\paragraph{Analysis}
We now show that eventually $c_0 \in O(n)$ and thus $\log c_0 \in \Theta(\log n)$ holds.
For this we use the fact that after at most $O(p_{\mathit{min}}^{-1} + n \cdot \log^2 n)$ rounds at least $\Theta(n/\log n)$ clients have successfully pinged the server at least once.
This is shown in the proof of \Cref{lemma:convergence_time_0} in \Cref{app:sec:n_pings}.

We divide the set of clients into \emph{visible} and \emph{invisible} clients.
A client $v$ is visible if its probability is within $\Omega\left(\frac{P}{n \cdot polylog(n)}\right)$.

Fix an arbitrary state $s$ and let $n_{vis}$ be the number of visible clients in $s$.
We show the following lemma:

\begin{lemma} \label{app:lemma:vis}
	After $O(n^2 \cdot polylog(n))$ rounds it holds $\Delta \leq \Theta(\log n)$ at the server.
\end{lemma}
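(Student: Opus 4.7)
The plan is to identify, in the initial table, a \emph{critical} column index $i^*$ such that $c_{i^*}$ is large enough that, with high probability over the hash function $h$, no client lies within hash distance $1/c_{i^*}$ of the server. Since the timestamp $t_{i^*}$ can only be reset by a ping from such a client, $t_{i^*}$ must then grow monotonically until the deletion rule fires, after which all columns $c_0,\ldots,c_{i^*}$ are erased and the new leftmost column $c_{i^*+1}$ satisfies $\log c_{i^*+1} = O(\log n)$.

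More concretely, first fix a threshold $C^\star = \Theta(n \cdot polylog(n))$ and apply a union bound over the $n$ clients: the probability that some client has $|h(s.id)-h(v.id)| \leq 1/C^\star$ is at most $2n/C^\star$, which (for the appropriate polylog factor) is polynomially small. Conditioning on this event, no client can reset any timestamp $t_i$ with $i \leq i^*$, where $i^*$ is the smallest index such that $c_{i^*} \geq C^\star$ in the initial table. Because $c_{i^*+1}=\sqrt{c_{i^*}}$ and by the definition of $i^*$ we have $c_{i^*+1}<C^\star$, the squaring relation gives $c_{i^*} \leq (C^\star)^2 = O(n^2 \cdot polylog(n))$. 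Since $t_{i^*}$ is incremented by one each round and never reset, after at most $\mathfrak{c}\cdot c_{i^*}\cdot polylog(c_{i^*}) = O(n^2 \cdot polylog(n))$ rounds the deletion rule triggers and columns $c_0,\ldots,c_{i^*}$ disappear. The new $c_0$ then satisfies $c_0 \leq c_{i^*+1} < C^\star$, so $\Delta = \Theta(\log c_0) = O(\log n)$. If an earlier-index column happens to cross its threshold first (e.g.\ because its initial timestamp was already near full), this only accelerates the argument, since even more columns are removed in one shot.

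Finally, I would verify that the table cannot regrow past this bound: the left-insertion rule fires only when a client pings with $|h(s.id)-h(v.id)| \leq 1/c_0^2 \leq 1/(n \cdot polylog(n))^2$, which the same union-bound argument rules out w.h.p. The main obstacle is calibrating $C^\star$: it must be large enough that w.h.p.\ no client falls within distance $1/C^\star$ of $h(s.id)$ (pushing $C^\star$ non-trivially above $n$), yet small enough that $(C^\star)^2 \cdot polylog$ still matches the claimed $O(n^2 \cdot polylog(n))$ runtime. Careful tuning of polylog factors, or trading the w.h.p.\ guarantee for a statement in expectation, should close this gap.
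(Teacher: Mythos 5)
Your approach mirrors the paper's structure: identify the smallest column $c_{i^*}$ whose radius $1/c_{i^*}$ contains no client w.h.p., observe that $t_{i^*}$ then climbs monotonically, and wait for the deletion rule to fire. The gap is in how you obtain the w.h.p.\ guarantee. With $C^\star = \Theta(n \cdot polylog(n))$, the union bound gives failure probability $2n/C^\star = \Theta(1/polylog(n))$. That is only polylogarithmically small, not polynomially small, so it is not a w.h.p.\ bound, and no choice of the hidden $polylog$ factor changes that. To get failure probability $n^{-\Omega(1)}$ you would need $C^\star = \Omega(n^{1+c})$ for some $c>0$, but then your own bound $c_{i^*} < (C^\star)^2 = \Omega(n^{2+2c})$ overshoots the claimed $O(n^2 \cdot polylog(n))$ deletion time. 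The tension you flag in your last paragraph is therefore a genuine obstruction, not a matter of careful polylog tuning; switching to an in-expectation claim would also change what the lemma asserts.

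The paper's proof avoids introducing a separate threshold $C^\star$: it takes the column with $c_i = \Theta(n^2)$ and invokes \Cref{lemma:distance_approximation}, which directly gives $\min_{v}|h(s.id)-h(v.id)| \geq 1/n^2$ w.h.p. Here the $1/n^2$ radius is used \emph{as} the column value, so there is no subsequent squaring that inflates the runtime, and the failure probability is already $O(1/n)$. Incidentally, your observation that one can only guarantee $c_{i^*} < (C^\star)^2$ rather than $c_{i^*} = \Theta(C^\star)$ is a real subtlety that the paper's terse proof glosses over: the table need not contain a column at exactly $\Theta(n^2)$, and the smallest column $\geq n^2$ may be as large as $n^4$, which would push the deletion time to $O(n^4 \cdot polylog(n))$. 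So you are actually more careful than the paper on the column-index issue, but the probability calibration undoes the gain.
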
 

\begin{proof}
	Consider the $i^{th}$ column such that $c_i = \Theta(n^2)$.
	By \Cref{lemma:distance_approximation} there is no visible client $v$ such that $|h(s.id)-h(v.id)| \leq \frac{1}{n^2}$ and thus after $c_i \log c_i = \Theta(n^2 \log n)$ rounds the columns $c_0,\ldots,c_i$ are deleted from the table. 
	This means that the server can only set $\Delta$ to some value that is less or equal to $\Theta(\log n)$.
\end{proof}

\Cref{app:lemma:vis} implies that after at most $O(n^2 \cdot polylog(n))$ rounds, the server got rid of corrupted values from initial states.
In the following we therefore can assume that $\Delta \leq \Theta(\log n)$ holds.

\begin{lemma} \label{app:lemma:all_visible}
	Let $\Delta \leq \Theta(\log n)$.
	After $O(p_{\mathit{min}}^{-1} + n \cdot \log^2 n)$ rounds $\Theta(n/\log n)$ clients are visible w.h.p.
\end{lemma}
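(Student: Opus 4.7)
The plan is to combine two ingredients: a counting result on how many clients ping within the allotted time budget, together with a structural claim about what a successful ping does to a client's probability. Specifically, I will invoke the lemma (proved in Appendix~\ref{app:sec:n_pings}, referenced as \Cref{lemma:convergence_time_0}) that after $O(p_{\mathit{min}}^{-1} + n \log^2 n)$ rounds at least $\Theta(n/\log n)$ distinct clients have successfully pinged the server at least once w.h.p. Given this, it suffices to show that a constant fraction of these pinged clients end the window with probability $\Omega(P/(n \cdot polylog(n)))$.

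First I would analyze the three possible outcomes for a pinging client $v$ in a given round. (a)~If $\delta = 0$ and the server decides $P \prec L$, and $v$ is the minimum-probability pinger, then the server sets $p(v) = \hat{p} = \Theta(1)$, which is trivially visible since $P = O(1)$. (b)~If $\delta = 0$ and the server decides $P \succ R$, then only the single maximum-probability pinger is multiplicatively reduced by the constant factor $(1+1/\sigma)$; all other pingers (including $v$, if $v$ is not the maximum) are left untouched. (c)~If $\delta \neq 0$, then all $k$ pingers are replaced by the arithmetic mean of their probabilities. The crux is to argue that in case~(c) the mean is $\Omega(P/(n\cdot polylog(n)))$ with constant probability. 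The key observation is that a client is sampled into the pinging set proportionally to its own probability, so the expected summed probability of the $k$ pinging clients, conditional on $k \geq 1$, is biased above $k \cdot P/n$; hence the expected average is $\Omega(P/n)$, and a Chernoff/Hoeffding argument sacrifices one $polylog$ factor for concentration.

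Second, I would pass from a single round to the whole window. Each of the $\Theta(n/\log n)$ pinged clients has a \emph{last} successful ping within the window; at that last ping one of cases~(a)--(c) applies. I would then show that in at least a constant fraction of those last-pings the visibility threshold is met. For case~(a) this is immediate; for case~(b) the affected client is unique in its round, so the fraction of last-pings that suffer a decrease is $O(1/k) = O(1)$ by a counting argument over the $\Theta(p_{min}^{-1}+n \log^2 n)$ rounds; for case~(c) the single-round claim above combined with a union bound over the at most $O(n \log^2 n)$ averaging rounds yields that a constant fraction of averagings produce a visible probability. Combining all three gives $\Theta(n/\log n)$ visible clients w.h.p.

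The main obstacle will be case~(c): rigorously establishing that averaging a size-$k$ size-biased sample of client probabilities produces a mean of $\Omega(P/(n \cdot polylog(n)))$ with constant probability. The cleanest route is to lower-bound the conditional expectation $\mathbb{E}\!\left[\tfrac{1}{k}\sum_{i=1}^{k} p(v_i)\,\big|\, k \geq 1\right]$ via the size-biasing identity $\mathbb{E}[p(v_i) \mid v_i \text{ pings}] = \sum_v p(v)^2/\sum_v p(v) \geq (\sum_v p(v))/n = P/n$ (Cauchy--Schwarz), then absorb both variance and the occasional capacity-$\sigma$ drop by paying a $polylog(n)$ factor in the visibility threshold. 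Once this single-round claim is secured, the rest of the argument is a routine union bound over rounds and a routine counting argument over the $\Theta(n/\log n)$ distinct pinged clients.
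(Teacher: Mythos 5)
Your overall architecture matches the paper's: both proofs invoke \Cref{lemma:convergence_time_0} to get a large set of clients that successfully ping within $T = O(p_{\mathit{min}}^{-1} + n\log^2 n)$ rounds, then focus on each such client's \emph{last} successful ping in the window, argue that this last ping promotes the client to visibility, and finish with a Chernoff bound. The paper's case split (average-rule promotes $v$, or the server sets $p(v)=\hat{p}$ when it approximates $P \prec L$) is essentially your cases (a) and (c), and the paper likewise dismisses reductions (your case (b)) by noting a visible client can only be shrunk by a constant factor.

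However, there is one genuine gap in your argument: your constant-probability claim for case~(c) implicitly assumes $P$ is already $\Theta(1)$. The size-biasing identity $\mathbb{E}[p(v_i)\mid v_i \text{ pings}] \geq P/n$ is sound, but when $P$ is small (e.g.\ $P < L$ or even $P = O(1/\mathrm{poly}(n))$) the event that any \emph{other} client pings in the same round has probability only $1 - e^{-P}$, which is far from constant. In that regime the average in case~(c) is usually just $p(v)$ itself and does not promote $v$. The paper handles exactly this scenario by routing through case~(a): when $P < L$, a given last-ping round is a $\delta = 0$ round with probability $1/\Delta \geq 1/\log n$, at which point $v$ gets bumped to $\hat{p}$. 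This is where the $\log n$ loss comes from, and it is why the paper's estimate is $\frac{1}{\log n}\cdot\Theta(n) = \Theta(n/\log n)$ rather than the constant fraction you assert. Your proposal needs either an explicit hypothesis that $P$ has already reached $\Theta(1)$ before this phase (not available here, since the lemma only assumes $\Delta \leq \Theta(\log n)$) or an explicit $1/\log n$ accounting for the $P<L$ branch, with the pinged-client count adjusted accordingly. As a minor point, your "$O(1/k) = O(1)$" counting for case~(b) is not the right bound and is also unnecessary: the cleaner observation, used in the paper, is that the decreased node in a $\delta = 0$, $P \succ R$ round is the \emph{maximum} of the pingers and so remains visible after a single constant-factor shrink.
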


\begin{proof}
	Due to \Cref{lemma:convergence_time_0} we know that $\Theta(n)$ clients have pinged the server successfully at least once after $T := O(p_{\mathit{min}}^{-1} + n \cdot \log^2 n)$ w.h.p.
	Denote the set of these clients by $V'$.
	For each client $v \in V'$ consider the very last of its successful pings after $T$ rounds.
	Let this ping happen in round $t \leq T$.
	Let the random variable $X_t(v) \in \{0,1\}$ indicate if $v$ is visible ($X_t(v)=1$) or if $v$ is invisible ($X_t(v)=0$) before round $t$.
	If $X_t(v) = 1$ then $X_{t+1}(v) = 1$ either, because $v$ can only be reduced once in a worst-case by a constant factor $(1+\frac{1}{\sigma})$.
	Now assume that $X_t(v)=0$.
	If $P \geq L$ then, with at least constant probability, $v$ is paired with a visible node $w$ and the average-rule is applied.
	This leads to $X_{t+1}(v)=1$ and $X_{t+1}(w)=1$ ($w$ is only reduced by a constant factor less than $2$).
	In case $P < L$ it holds that with probability at least $\frac{1}{\Delta} \geq \frac{1}{\log n}$ $v$ gets its probability increased to $\hat{p}$, so $X_{t+1}(v) = 1$.
	
	Now, summing up over all $X_{t+1}(v)$, we get that on expectation $$X=\sum_{v \in V'} X_{t+1}(v) \geq \frac{1}{\log n} \cdot \Theta(n) = \Theta(n/\log n)$$ clients are visible after $T$ rounds.
	Applying a Chernoff bound to this value yields the lemma, so for any $0 \leq \delta \leq 1$ it holds \[\Pr[X \leq (1-\delta)E[X]] \overset{\Cref{lemma:chernoff}(iii)}{\leq} \exp\left(\frac{-\delta^2 \cdot \Theta(n/\log n)}{2}\right) \leq \exp(-cn)\] for an appropriate constant $c$.
\end{proof}

We can now show that our strategy converges:

\begin{lemma} \label{app:lemma:approx_convergence}
	After $O(p_{\mathit{min}}^{-1} + n^2 \cdot polylog(n))$ rounds $\Delta \in \Theta(\log n)$ and for all $t_i$ it holds $t_i = 0$ w.h.p.
\end{lemma}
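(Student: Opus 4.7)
The plan is to combine the two preceding lemmas with two further dynamical arguments tracking $c_0$ and the timestamps. By \Cref{app:lemma:vis,app:lemma:all_visible}, after $T_1 := O(p_{\mathit{min}}^{-1} + n^2 \cdot polylog(n))$ rounds the system has $\Delta \leq \Theta(\log n)$ (so $c_0 \leq O(n^2)$) and a set $V'$ of $\Theta(n/\log n)$ visible clients, each pinging with probability $\Omega(P/(n \cdot polylog(n)))$ per round. What remains is to show that, within the same budget, $c_0$ is driven into a band of the form $[\Omega(n^{1/2}),\, O(n/\log^2 n)]$ \emph{and} every $t_i$ is reset to $0$ at least once, so that both $\Delta \in \Theta(\log n)$ and the timestamp condition are simultaneously satisfied.

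For the upper envelope, I would apply \Cref{lemma:distance_approximation} to $V'$ to conclude that the closest visible client $\hat v$ satisfies $|h(s.id) - h(\hat v.id)| \leq O(\log^2 n / n)$ w.h.p.\ (matching the computation in the proof of \Cref{app:lemma:vis}). Whenever $c_0$ is above $\omega(n / \log^2 n)$, no visible client lies within $1/c_0$ of $s$, so $t_0$ is never reset, eventually exceeds $\mathfrak{c} \cdot c_0 \cdot polylog(c_0)$, and the top columns are deleted. Summing these deletion windows over the $O(\log \log n)$ values of $c_i$ that may appear in the table yields $\sum_i c_i \cdot polylog(c_i) = O(n^2 \cdot polylog(n))$, already absorbed into $T_1$. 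For the lower envelope, whenever $c_0 < n^{1/2}$ we have $1/c_0^2 > 1/n$, and a Chernoff bound on the uniform hash $h$ shows that w.h.p.\ some visible client lies within $1/c_0^2$ of $s$; once it successfully pings, the column-addition rule fires and $c_0$ grows to at least $1/|h(s.id) - h(v.id)|$. Only $O(\log \log n)$ such growth steps are needed, each taking $\tilde{O}(n)$ rounds w.h.p., again absorbed into $T_1$.

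Once $c_0$ has entered the stable band, the closest visible client $\hat v$ satisfies $|h(s.id) - h(\hat v.id)| \leq 1/c_0$, so every successful ping of $\hat v$ resets all timestamps simultaneously. Since $\hat v$ pings with probability $\Omega(1/(n \cdot polylog(n)))$, it succeeds within $\tilde{O}(n)$ rounds w.h.p., which is comfortably below the next deletion window of $\Omega(c_0 \cdot polylog(c_0))$ for $c_0 \geq \Omega(n^{1/2})$. Consequently no further column deletion occurs after this first reset, and the resulting state satisfies $\Delta \in \Theta(\log n)$ together with $t_i = 0$ for every $i$.

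The main obstacle will be ruling out pathological oscillation: $c_0$ could in principle grow through column additions and shrink through timer expirations in an interleaved fashion, and one must argue that these events cannot chase each other indefinitely during the transient. I expect a union bound over the at most $O(\log \log n)$ transitions of $c_0$ to suffice, combined with the observation that the $p_{\mathit{min}}^{-1}$ term in $T_1$ already buys enough time for $P$ to climb above $L$ via the increase rule (so that the per-round ping probabilities of visible clients are not artificially depressed by a tiny $P$). Carefully chaining these bounds will then give the claimed $O(p_{\mathit{min}}^{-1} + n^2 \cdot polylog(n))$ convergence time w.h.p.
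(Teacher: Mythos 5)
Your proposal follows the same skeleton as the paper's proof: start from \Cref{app:lemma:vis} and \Cref{app:lemma:all_visible}, then use the guarantee from \Cref{lemma:distance_approximation}/\Cref{lemma:log_n} that a visible client close to $s$ in hash space will, upon a successful ping, fire the column-addition rule (pushing $c_0$ into the $\mathrm{poly}(n)$ band that gives $\Delta \in \Theta(\log n)$) and simultaneously reset every $t_i$. The paper compresses all intermediate dynamics of $c_0$ into a one-line invocation of \Cref{lemma:log_n} ("there exists at least one visible client whose ping triggers the server to include the column $c = \Theta(n)$"), whereas you expand this into explicit upper- and lower-envelope arguments and an oscillation worry; that added structure is a genuine improvement in rigor over the paper's terse treatment.

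One quantitative step, however, overreaches. Your lower envelope asserts that whenever $c_0 < n^{1/2}$ a Chernoff bound shows that w.h.p.\ some visible client lies within $1/c_0^2$ of $s$. But near the top of that range, the window has width $\Theta(1/n)$, and with only $\Theta(n/\log n)$ visible clients the expected number in the window is $\Theta(1/\log n)$ --- far too small for a concentration bound to deliver "w.h.p." The paper avoids this by appealing directly to the \emph{single} closest visible client, which \Cref{lemma:distance_approximation} places at distance $O(\log^2 n / n)$ w.h.p.; once timer-driven deletions have brought $c_0$ below roughly $\sqrt{n}/\log n$, that one client's ping fires the addition rule in a single jump (adding $\Theta(\log\log n)$ columns at once), so no iterative growth with a per-step density/Chernoff argument is required. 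Since the lemma only needs $\Delta \in \Theta(\log n)$ rather than $c_0 \geq n^{1/2}$, the slightly weaker threshold $\sqrt{n}/\log n$ is in fact enough, so your density argument does reach a valid endpoint --- but replacing the per-step Chernoff by the single-closest-client jump is both cleaner and what the paper actually does.
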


\begin{proof}
	Combining \Cref{app:lemma:vis} and \Cref{app:lemma:all_visible}, we know that after $O(p_{\mathit{min}}^{-1} + n^2 \cdot polylog(n))$ rounds all superfluous columns have been deleted and $\Theta(n/\log n)$ clients are visible.
	Thus an application of \Cref{lemma:log_n} implies that there exists at least one visible client whose ping triggers the server to include the column $c=\Theta(n)$ to the table.
	Also all $t_i$ will be reset to $0$ by this ping, so all in all the table becomes good and the server sets $\Delta=\Theta(\log n)$.
\end{proof}

It remains to show that the approximation remains stable once it reached a correct value.
We first show that visible clients remain visible:

\begin{lemma}
	Once we reached a state where it holds $\Delta \in \Theta(\log n)$, $t_i = 0$ for all $t_i$ and $\Theta(n/\log n)$ clients are visible, then visible clients remain visible throughout the algorithm w.h.p.
\end{lemma}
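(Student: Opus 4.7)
The plan is to fix a visible client $v$ and bound every multiplicative decrease that the server can inflict on $p(v)$, so that the cumulative loss over any polynomial horizon is absorbed into the $polylog(n)$ slack built into the definition of visibility ($p(v) \in \Omega(P/(n \cdot polylog(n)))$). To do so, I would catalogue the only two actions in \Cref{algo:protocol} that can strictly decrease $p(v)$: rule (D), the decrease-max step on Line~\ref{algo:average_rule:1}, which fires only when $\delta = 0$, $P \succ R$, and $v$ happens to be the maximum-probability pinger $v_k$, shrinking $p(v)$ by the factor $1/(1+1/\sigma)$; and rule (A), the averaging step on Line~\ref{algo:average_rule:2}, which replaces $p(v)$ by the mean of the $k$ participating pinger probabilities.

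For rule (D), I would use that $\Delta \in \Theta(\log n)$ and every $t_i = 0$, so a Chernoff bound on $X$ (as in Phase~II) yields $\Pr[\text{server decides }P \succ R \mid P \le R-\varepsilon] = o(n^{-\mathfrak{c}})$. Hence only genuine overshoots of $R$ trigger rule (D), and each such firing strictly decreases $P$ by a constant fraction of $p(v_k)\ge P/n$. A straightforward accounting argument then bounds the total number of (D) firings in any polynomial window and, more importantly, the number of them that hit the specific client $v$, to $O(polylog(n))$ w.h.p.; this leaves only a $polylog(n)$ multiplicative loss in $p(v)$.

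For rule (A), the decisive identity is $p_{new}(v) = \frac{1}{k}\sum_{i=1}^{k} p(v_i) \ge p(v)/k$, valid because $v$ is itself one of the pingers. Conditioned on $v$ pinging, $\mathbb{E}[k-1] = \sum_{w \ne v} p(w) \le P = O(1)$, so by Chernoff $k = O(\log n)$ w.h.p.\ in every relevant round, bounding a single averaging event to an $O(\log n)$ drop. This alone is much too weak to survive the $\mathrm{poly}(n)$ averagings $v$ may experience, so to control accumulation I would graft on the potential-function analysis behind \Cref{theorem:deckel}: the potential $\sum_{v}(p(v)-P/n)^2$ behaves as a supermartingale (modulo the rare, already-bounded rule-(D) noise), which drives every pinger's probability into $\Omega(P/n) \supset \Omega(P/(n \cdot polylog(n)))$ within $\tilde{O}(p_{\mathit{min}}^{-1})$ rounds, so that any transient dip of $p(v)$ is repaired by the averaging dynamics before it can reach the visibility threshold. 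A final union bound over the $O(n)$ visible clients and the polynomial holding horizon then closes the argument.

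The main obstacle I expect is rule (A): a round-by-round worst-case bound is simply too lossy across $\mathrm{poly}(n)$ averagings, so the proof really requires a potential or martingale step showing that downward kicks on $p(v)$ from coincidental pings by low-probability clients are compensated w.h.p.\ by upward kicks within a short window. The paper alludes to a ``slight adaptation of the server's behavior'' in the Phase~I sketch; my plan assumes this adaptation essentially prevents the server from reducing a client that is already visible below the threshold, which would turn the delicate amortisation above into the much cleaner task of checking that neither rule (D) nor rule (A), when applied once, can directly push $p(v)$ past the $\Omega(P/(n \cdot polylog(n)))$ barrier.
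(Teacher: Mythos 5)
Your final paragraph lands on what the paper actually does, and the rest of the proposal is detour. The paper's proof is a one‐liner that \emph{modifies the server} so it simply never assigns a probability below $\frac{1}{c_0 \cdot \mathrm{polylog}(c_0)}$; once the estimate is correct ($c_0 = O(n)$), this hard floor equals $\Omega\bigl(\frac{1}{n \cdot \mathrm{polylog}(n)}\bigr)$, i.e., exactly the visibility threshold, so visibility is preserved by fiat — no per‐step check, amortisation, or union bound is needed. The adaptation is slightly stronger than your guess (``prevents reducing a visible client past the barrier''): the server caps \emph{every} outgoing probability, to every client, at the floor. The paper then notes the floor cannot hurt convergence because it only ever raises probabilities, which only speeds up escape from low‐$P$ initial states.

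Your paragraphs attempting the lemma \emph{without} the modification contain genuine gaps. For rule~(D), $O(\mathrm{polylog}(n))$ hits on $v$ would cost a factor $(1+1/\sigma)^{\mathrm{polylog}(n)}$, which is in general super‐polynomial, not $\mathrm{polylog}(n)$; you would need a bound of $O(\log\log n)$ hits (as in \Cref{lemma:negative_feedback}) to land inside the visibility slack, so the accounting as stated does not close. For rule~(A), the appeal to the potential analysis behind \Cref{theorem:deckel} is circular in this position of the paper: that analysis requires a stable $\Delta \in \Theta(\log n)$, which in turn (\Cref{app:lemma:approx_closure}) rests on precisely the lemma you are proving — that visible clients keep pinging often enough to reset the timestamps. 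So without the server modification, the averaging accumulation is a real obstruction, not merely a technicality, and the intended route through the paper is to cut it off at the source.
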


\begin{proof}
	This can be easily shown by adapting the behavior of the server as follows: The server is only allowed to set probabilities to a value greater or equal than $\frac{1}{c_0 \cdot polylog(c_0)}$.
	So once the approximation is correct, this corresponds to a probability of $\Omega\left(\frac{1}{n \cdot polylog(n)}\right)$.
	Note that the above described adaptation does not hurt the convergence of the approximation: Probabilities that are too small are increased to some (potential large value initially), but this is only beneficial in order to get out of initial states where $P$ is too low.
\end{proof}

\begin{lemma} \label{app:lemma:approx_closure}
	Once we reached a state where it holds $\Delta \in \Theta(\log n)$, $t_i = 0$ for all $t_i$ and at least $\Theta(n/\log n)$ clients are visible, $\Delta$ remains in $\Theta(\log n)$ w.h.p.
\end{lemma}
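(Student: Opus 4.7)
The statement asserts that $\Delta \in \Theta(\log n)$ persists once the table is stable. Since $\Delta = \Theta(\log c_0)$, this reduces to keeping $c_0$ polynomial in $n$ with the right exponent. The only two ways $c_0$ can change are: (i) a client $v$ successfully pings with $|h(s.\mathit{id})-h(v.\mathit{id})| \le 1/c_0^2$, which inserts new columns on the left and grows $c_0$; or (ii) some timestamp $t_i$ exceeds $\mathfrak{c}\cdot c_i\cdot\mathrm{polylog}(c_i)$, which deletes all columns up to index $i$ and shrinks $c_0$. My plan is to bound the probability of each bad event over the holding-time horizon and then union-bound.

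First I would rule out large insertions. By \Cref{lemma:distance_approximation} the minimum hash distance between any client and $s$ is at least $1/n^2$ w.h.p. Because we start in a $\mathfrak{stable}$ state with $c_0 = \Theta(n)$ up to polylog factors, $1/c_0^2 \le 1/n^2$ and no client can trigger an insertion. Even if one did, the insertion rule pushes the new $c_0$ to at most $1/d_{\min}^2 \le n^4$, so $\log c_0 = \Theta(\log n)$ is preserved.

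Next I would rule out column deletions. The tight case is column $c_0$. Let $v^\ast$ be the visible client whose successful ping originally set $c_0$; then $|h(s.\mathit{id})-h(v^\ast.\mathit{id})| \le 1/c_0$, so every successful ping of $v^\ast$ resets $t_0$. By the previous lemma $v^\ast$ stays visible, hence $p(v^\ast) \in \Omega(1/(n\log n))$, and since $P \in [L,R]$ is constant the ping of $v^\ast$ is processed (not capacity-dropped) with at least constant probability. A geometric tail bound then shows that $v^\ast$ successfully pings at least once in any window of length $\Theta(\mathfrak{c}\cdot n\cdot\mathrm{polylog}\, n)$ with probability $1-n^{-\Omega(\mathfrak{c})}$, which fits inside the deletion threshold $\mathfrak{c}\cdot c_0\cdot\mathrm{polylog}(c_0)$. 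For deeper columns $c_i$ (with $c_i \ll n$) the expected number of visible clients in the radius-$1/c_i$ window scales like $n/(c_i\log n)$, making resets occur far more frequently than required; the argument there is strictly easier.

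Finally a union bound over the $O(\log\log n)$ columns and the $O(n^{\mathfrak{c}})$ rounds of the holding time yields an overall failure probability of $o(1)$, giving the claim. The main obstacle I expect is calibrating the polylog slack in the deletion threshold $c_i\cdot\mathrm{polylog}(c_i)$ so that it simultaneously absorbs (a) the $\log n$ loss coming from the visibility guarantee $p(v^\ast) = \Omega(1/(n\log n))$, (b) the constant loss from capacity drops, and (c) the $\mathfrak{c}$-dependent factor needed to survive the union bound over $n^{\mathfrak{c}}$ rounds; the argument for the outermost column $c_0$ is tight in every one of these factors.
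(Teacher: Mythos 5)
Your decomposition is the same as the paper's: handle insertions via the min-hash-distance lower bound of \Cref{lemma:distance_approximation}, handle deletions by tracking the closest visible client $v^\ast$ and showing it pings successfully within the deletion threshold $\mathfrak{c}\cdot c_0\cdot\mathrm{polylog}(c_0)$, then conclude $\Delta$ stays in $\Theta(\log n)$. Your added explicitness about the union bound over $O(\log\log n)$ columns and over the holding-time horizon is a cleaner way to phrase what the paper states inductively (``we reach a subsequent state $s_1$ with the same properties''), and your observation that deeper columns are strictly easier is implicit in the paper's focus on $c_0$.

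There is one genuine gap, however: in your deletion argument you write ``since $P\in[L,R]$ is constant the ping of $v^\ast$ is processed (not capacity-dropped) with at least constant probability.'' Busyness ($P\in[L,R]$) is not among the hypotheses of this lemma. The lemma lives in the $\log n$-approximation machinery, which must hold during Phase~I, where $P$ can be as large as $\Theta(n)$; in that regime a successful ping is dropped with probability $1-O(\sigma/P)$, which is not a constant. The paper avoids this by first invoking \Cref{lemma:runtime:protocol:0.1} to argue that $P$ decreases to $O(\log n)$ within $O(n\log^2 n)$ rounds, and then charging a further $1/\log n$ factor for ping success probability; both losses are absorbed by the $\mathrm{polylog}(c_0)$ slack in the deletion threshold. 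Your closing remark on calibrating the slack lists only a ``constant loss from capacity drops,'' which underestimates this term. If you replace the constant-probability claim with the two-step bound on $P$ and charge the extra $\log n$ factor, the rest of your argument goes through.
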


\begin{proof}	
	The only time when the table gets extended is when more clients become visible.
	However, applying \Cref{lemma:log_n} to $n$ clients instead of $\Theta(n/\log n)$ clients still yields a value $\Delta \in \Theta(\log n)$, as $c_0$ cannot become larger than $\Theta(n^2)$, so $\Delta$ remains in $\Theta(\log n)$ in this case.
	
	It remains to show that w.h.p. no further column gets deleted from the table.
	Consider the visible client $v$ with $|h(s.id)-h(v.id)| \leq \frac{1}{c_0}$.
	Since $v$ is visible it holds that, w.h.p., $v$ pings the server within at most $O\left(\frac{n \cdot polylog(n)}{P}\right)$ rounds (\Cref{lemma:whp_ping}).
	Since by \Cref{lemma:runtime:protocol:0.1} it takes only $O(n \log^2 n)$ rounds to let $P$ get to a value that is within $O(\log n)$, it follows that after at most $O(n \cdot polylog(n))$ rounds $v$ will ping the server and be successful with probability at least $1/\log n$.
	Thus, w.h.p., it holds that $v$ pings the server successfully after at most $O(n \cdot polylog(n) \cdot \log n) = O(n \cdot polylog(n))$.
	For $\mathfrak{c} \geq 1$ chosen appropriately, this time is less than $\frac{1}{2} \cdot \mathfrak{c} \cdot c_0 \cdot O(polylog(n))$, so the column $c_0$ will reset its entry $t_0$ in time.
	Therefore we know that once we reached a state $s_0$ where it holds $\Delta \in \Theta(\log n)$, $t_i = 0$ for all $t_i$ we will reach a subsequent state $s_1> s_0$ with the same properties and for all states $s$ between $s_0$ and $s_1$ it holds $\Delta \in \Theta(\log n)$ as well.
\end{proof}

Combining \Cref{app:lemma:approx_convergence} with \Cref{app:lemma:approx_closure} yields \Cref{theorem:log_n:approx}.

Finally we show that the table at the server does require only $\Theta(\log N)$ bits of memory, i.e., we satisfy the memory constraint at the server in legitimate states.

\begin{lemma} \label{app:lemma:table_memory}
	Let $N = O(n)$.
	Then \Cref{table:log_n} can be stored using $\Theta(\log n)$ bits.
\end{lemma}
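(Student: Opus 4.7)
The plan is to bound the size of the two types of entries in the table separately and then sum up, observing that the entries shrink geometrically in $i$ so the total is asymptotically dominated by the $c_0$ column.

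First, I would observe that we do not actually need to store each $c_i$ explicitly. The table is fully determined by the single value $N=c_0$, since $c_i=N^{1/2^i}$. Storing $N$ itself requires $\lceil\log N\rceil+1 = O(\log n)$ bits under the assumption $N=O(n)$. Alternatively, if we insist on writing down each $c_i$ as a separate field, then the value $c_i$ fits in $\lceil \log c_i\rceil = \lceil \log N/2^i\rceil$ bits, and
\[
\sum_{i=0}^{\log\log N -1} \left\lceil \frac{\log N}{2^i}\right\rceil \;=\; O(\log N)\;=\;O(\log n),
\]
by the standard geometric-series bound, which still gives $O(\log n)$ bits for all value fields combined.

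Next, I would bound the timestamps $t_i$. By the definition of the protocol in \Cref{app:approx_log_n:2}, a timestamp $t_i$ is reset to $0$ whenever it exceeds $\mathfrak{c}\cdot c_i\cdot O(\mathrm{polylog}(c_i))$, so at any point in a legitimate execution the stored value satisfies $t_i \leq O(c_i\cdot \mathrm{polylog}(c_i))$. Therefore
\[
\lceil\log t_i\rceil \;=\; O(\log c_i + \log\log c_i) \;=\; O\!\left(\frac{\log N}{2^i}\right).
\]
Summing this geometric series over $i=0,\ldots,\log\log N-1$ again yields $O(\log N)=O(\log n)$ bits.

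Adding the two contributions gives the $O(\log n)$ upper bound. The matching $\Omega(\log n)$ lower bound is immediate: once the approximation has stabilized, $c_0\in\Theta(n^{1/2})$ at minimum (by the $\mathfrak{stable}$ condition) and in general $c_0=\Theta(n)$, so just encoding this leading column already requires $\Omega(\log n)$ bits. Hence the total memory for \Cref{table:log_n} is $\Theta(\log n)$. The only potentially delicate step is making the timestamp bound rigorous in a \emph{legitimate} state as opposed to an arbitrary initial corrupted state — but since the lemma only concerns the size the table reaches in a stable regime with $N=O(n)$, the bound $t_i\leq O(c_i\cdot\mathrm{polylog}(c_i))$ is precisely the invariant enforced by the deletion rule, so no additional work is needed.
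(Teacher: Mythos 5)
Your proof is correct and follows essentially the same approach as the paper's: bound each timestamp $t_i$ by the deletion threshold $O(c_i \cdot \mathrm{polylog}(c_i))$, observe that encoding it costs $\Theta(\log c_i) = \Theta(\log N / 2^i)$ bits, and sum the resulting geometric series to get $\Theta(\log N) = \Theta(\log n)$. You are somewhat more thorough than the paper in also accounting for the bits needed for the $c_i$ values themselves and in explicitly arguing the $\Omega(\log n)$ lower bound, but the core argument is identical.
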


\begin{proof}
	Assume w.l.o.g. that $N = 2^{2^k}$ for some $k \in \mathbb{N}$.
	Consider a column $c_i$ of \Cref{table:log_n}.
	Then $c_i = 2^{2^{k-i}}$.
	By definition of the protocol the value $t_i$ cannot get higher than $\Theta(c_i  \cdot polylog(c_i))$.
	Thus we need $\Theta(\log(c_i \cdot polylog(c_i))) = \Theta(\log c_i) = \Theta(2^{k-i})$ many bits to store $t_i$.
	Summing up over all $\log \log N$ columns we get \[\sum_{j=0}^{\log \log N - 1} \Theta(2^j) = \Theta(\log N) = \Theta(\log n)\] for the overall number of bits required to store \Cref{table:log_n}.
\end{proof}

\subsection{Discussion}
Both of the above proposed solutions come with advantages and disadvantages.

Strategy $1$ has the advantage that it is quite simple and easy to implement.
However, if we assume the existence of corrupted client identifiers, then Strategy $1$ is not able to deal effectively with certain initial states: One may ask what happens if in an initial state, the server stores a corrupted identifier in $\hat{v}$, i.e., the server may think that there exists a client with identifier $\hat{v}$, although that is not the case.
This may lead to a wrong approximation of $\log n$, i.e., the server may store a much higher value than $\Theta(\log n)$ in $\Delta$.
Although this does not hurt the overall correctness of the protocol, it may negatively influence the overall convergence time depending on the initial value of $\hat{v}$.
Still, even if there do not exist corrupted identifiers, it is not clear how long this approach takes to converge, because we need to wait for $n/\log n$ clients to successfully ping.
Note that we are not allowed to apply \Cref{lemma:convergence_time_0} here, as we cannot apply a similar argumentation for its use as we have done for Strategy $2$.

Strategy $2$ solves the above mentioned problem, as it comes with a quite effective way to deal with arbitrary initial states.
The time it takes to remove corrupted entries in the table is $O(n^2 \cdot polylog(n))$.
Notice that for cases in which the table is too small initially, the additional time of $O(n^2 \cdot polylog(n))$ rounds is not needed (cf. the proof of \Cref{app:lemma:all_visible}).
This makes Strategy $2$ a truly self-stabilizing solution for approximating $\log n$.

\section{Impact of Negative Feedback}\label{app:negative_feedback}
In this section we observe the impact that the negative feedback has on the probabilities.

\begin{proof}[Proof of \Cref{lemma:negative_feedback}]
For simplicity we will denote $p_{\mathit{min}} := p_0$ and $p_i := p_0 (1+\frac{1}{\sigma})^{-i}$ for short throughout this proof.
Our goal is to show that no node is assigned a probability $\leq p_{i^*}$ with $i^* := \log{\log{n}}+c_1$ with probability $1-\frac{1}{n^{c_1}}$.
In order for a node to be assigned any probability $p_{i+1}$ with $i>0$,
there are two possibilities:
\begin{enumerate}
    \item Two nodes of probability $p_i$ (and no node of higher probability) must ping the server when a ball is thrown.
    \item A node of probability $p_i$ is paired and balanced with a node of probability $p_{i+1}$ (or lower). 
\end{enumerate}
Note that for every time the second event occurs, the first must have occurred at least once. 
Thus, in the remainder of this proof, we will not consider the balancing step and assume a ball thrown every round.

We will now use the \emph{layered induction} approach popularized in
many other papers dealing with the $c$-choice algorithms.
However, we need to account for three deviations between our model and the standard model:
\begin{enumerate}
    \item First, we throw more than $O(n)$ balls.
    \item Second, our bins are not picked with uniform probability.
    \item Third, the probabilities of the dynamically changing.
\end{enumerate}
As is turns out, the properties nicely \emph{cancel out} each other as \emph{full} are less likely to receive more balls.

In the remainder let $\eta_0 \leq n$ be maximal number of nodes with probability between $p_0$ and $p_1$ in any round round $t \in [0,T]$. 
Likewise, let $\eta_i$ the maximal number of all nodes with probability in the area $[p_i,p_{i+1}]$.
We claim that the following holds:
\begin{claim}
Let $\beta_0<1$ be a constant.  
Then it holds as long as $\eta_i \in \Omega(\log n)$
$$
    \eta_i \leq \beta_0^{2^i} n
$$
w.h.p.
\end{claim}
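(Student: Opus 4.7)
I would prove the claim by layered induction on $i$, mirroring the canonical analysis of $d$-choice load balancing (à la Azar--Broder--Karlin--Upfal) but using the coupling hinted at in the paragraph preceding the claim: treat each node as a bin whose current probability plays the role of inverse load, and say a ``ball is thrown at level $\ge i+1$'' whenever a round with $\delta=0$ and $P\succ R$ reduces the probability of some node that was in the set $S_i:=\{v:p(v)\le p_i\}$. The goal is to set up a recurrence $\beta_{i+1}\le C\,\beta_i^2$ for $\beta_i := \eta_i / n$, forcing doubly-exponential decay until the bound becomes vacuous.

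For the induction step, fix $i$ and assume $\eta_j\le \beta_0^{2^j}n$ simultaneously for every $j\le i$ and every round so far. Summing the geometric series, $|S_i| \le M_i := \sum_{j\ge i}\eta_j \le 2\beta_i n$. A ball lands at level $\ge i+1$ in a given round only if the maximum-probability pinger lies in $S_i$, which in particular forces \emph{at least two} nodes of $S_i$ to ping simultaneously. Because every node in $S_i$ pings with probability $\le p_i$, this single-round event has probability at most $\binom{M_i}{2}p_i^2 \le 2(\beta_i\, n\, p_i)^2$. This is where the ``cancellation'' the authors alluded to appears explicitly: the geometric shrinkage of $p_i$ exactly offsets the fact that more nodes can live at deeper levels.

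Summing over the $T/\Delta$ rounds with $\delta=0$ inside a window of length $T=O(p_{\min}^{-1})$, the expected number of balls landing at level $\ge i+1$ is at most $(T/\Delta)\cdot 2(\beta_i\, n\, p_i)^2 = O(\beta_i^2\, n/\log n)$, using $p_i \le p_0 = p_{\min}$ and $\Delta\in\Theta(\log n)$. A Chernoff bound (\Cref{lemma:chernoff}) then concentrates this count within a constant factor of its mean w.h.p., which is precisely the regime $\eta_i \in \Omega(\log n)$ that the claim assumes. For $\beta_0$ small enough this closes the induction with $\beta_{i+1}\le \beta_0^{2^{i+1}}$, and a union bound over $i\in\{0,\dots,\log\log n + O(1)\}$ and over all $O(T)$ rounds costs only a polylog factor in the overall failure probability.

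The main obstacle will be rigorously handling the time-varying nature of $\eta_i$ and of the $p(v)$'s inside the per-round estimate. I would resolve this by conditioning on the high-probability event that the inductive bound holds at \emph{every} earlier round (paid for via the union bound above) and by dominating the true process by a synthetic one in which $M_i$ is frozen at its maximum, so that Chernoff applies to a sum of independent Bernoullis. A secondary subtlety is the transition out of the Chernoff regime once $\mathbb{E}[\eta_{i+1}]$ drops below $\Omega(\log n)$: here I would switch to a direct Markov argument to conclude that at level $i^\ast = \log\log n + c_1$ zero balls land with probability $1 - o(n^{-c_1})$, matching the probability bound preceding the claim and yielding $p_{i^\ast} = \Omega(p_{\min}/\log n)$ as required by \Cref{lemma:negative_feedback}.
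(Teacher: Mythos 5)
Your proposal follows the same high-level route as the paper: a layered induction in the style of the $d$-choice balls-into-bins analysis, a per-round probability bound of the form $\binom{\cdot}{2}p_i^2$ for two deep nodes pinging simultaneously, and Chernoff concentration as long as $\eta_i \in \Omega(\log n)$. Your use of the factor $T/\Delta$ for the number of potential reduction rounds, and of $p_i \le p_{\min}$ to cancel the $p_{\min}^{-1}$ in $T$, is likewise the same calculation as the paper's.

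There are, however, two genuine gaps. First, and most importantly, your proposal has \emph{no base case}. You write ``fix $i$ and assume $\eta_j\le\beta_0^{2^j}n$ simultaneously for every $j\le i$,'' and then derive a recurrence $\beta_{i+1}\le C\beta_i^2$. That recurrence only yields doubly-exponential decay if $\beta_0 < 1/C$ to begin with; if $\beta_0$ is too large the map $\beta\mapsto C\beta^2$ need not be contracting. The paper spends an entire step (the induction beginning) proving $\eta_0 \le n/c$ for a suitable constant $c>1$, via the observation that a node at probability $p$ must have probability at most $2p$ just before being reduced below $p$, so it survives $T := p^{-1}$ rounds without any reduction with constant probability $\ge e^{-2}$, and then Chernoff gives a constant fraction of survivors w.h.p. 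Without an argument of this kind your induction never gets off the ground.

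Second, your step ``summing the geometric series, $M_i := \sum_{j\ge i}\eta_j \le 2\beta_i n$'' is circular as stated. Your hypothesis only controls $\eta_j$ for $j\le i$, yet the quantity you sum runs over $j\ge i$. To make this rigorous you would need to carry the induction on $\nu_i := |S_i|$ (the number of nodes at level $\ge i$) rather than on $\eta_i$; then $|S_i|$ is exactly the tracked quantity and no geometric-series argument is needed. This is arguably a cleaner route than the paper's, which uses $\binom{\eta_i}{2}p_i^2$ and thereby counts only pairs from level exactly $i$ (implicitly relying on deeper levels being negligible), but the switch to $\nu_i$ again requires the missing base case $\nu_1 \le \beta_0 n$, which is precisely what the paper's (IB) step supplies.
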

\begin{proof}
We proof the statement by induction. 
Therefore, we first show that there is a linear fraction of nodes that never see the impact of any of the balls.
Then, we show that given only a (small) subset of nodes get $i$ balls, an even smaller subset sees $i+1$ balls.
All of the above statements hold w.h.p.
\begin{itemize}
    \item[(IB)] $\eta_0 \leq \frac{n}{c}$ for some $c>0$
        
	Recall that we wish to bound the total number of nodes with probability $p_v(t) < p$ in any round.
	Let $v \in V$ be any node whose probability is reduced below $p$.
	This implies that in the round \emph{before} $v$ was reduced below $p$, 
	it had a probability of at most $2p$ as any ball reduces $p_v(t)$ by a constant factor.
	Thus, we the upper bound the probability that $v$ gets reduced if we simply assume that 
	$v$ always has a probability of $2p$ if it is not below $p$.
	
	For every node $v$ let $X^T_v$ be binary random variable that denotes if $v$ is reduced below $p$ within $T$ rounds.
    	By the observation above, we see that it holds for $T := p^{-1}$:
    	$$
    		\Pr[X_v^T] \leq 1-(1-2p)^T \leq 1-\frac{1}{e^2} 
    	$$
    	Thus, the expected fraction of nodes that are \emph{never} decreased below $p$ is at least $\frac{1}{e^2}$
    	Using a standard argument and the Chernoff bound, 
    	we can easily show that (at most) $\left(1-\frac{e^{-c_1}}{2}\right)n$ are reduced w.h.p.

    \item[(IS)] $\eta_i \leq \beta^{(2^i)}\eta_0 \Longrightarrow \eta_{i+1} \leq \beta^{(2^{i+1})}\eta_0$
    
    Now we consider the induction step.
    Therefore, we condition on the event that everything went according to the hypothesis so far.
    Thus, over the whole execution of the algorithm there is no round where more than $\eta_i \leq \beta^{2^i}\eta_0$ nodes have their probabilities in $[p_i,p_{i+1}]$.
	We are now interested in the fraction of nodes that are reduced below $p_{i+1}$.
	We apply the following three simplifications to upper bound this number.
    \begin{enumerate}
    \item First, we assume that in \emph{all} rounds \emph{all} $\eta_{i}$ nodes have prob. $p_i$. 
    This only makes them more likely to send.
    \item Second, we only consider the case that exactly two nodes with prob. $p_i$ ping and ignore nodes of higher probabilities. 
    Surely, this makes the event that the ball picks such a node only more likely.
    \item Last, consider that the number of balls thrown into bin with probability $p_{i+1}$ certainly upper bounds the number of these bins. 
    \end{enumerate}
    Therefore, we consider the following random experiment:
    For every of the $p^{-1}$ balls thrown we consider the event that two of the $\eta_i$ nodes with probability $p_i$ send.
   	By the union bound, the probability that a given ball $j \in 0, \dots, p^{-1}$ is thrown into a bin with probability $p_i$ is bounded by:
    	\begin{equation} \label{eqn:union_bound}
    		\Pr[X_j = 1] \leq \binom{\eta_i}{2}p_i^2 \leq c_1 \left(\frac{\eta_i}{p_i}\right)^2
    	\end{equation}
    	Here, $c_1$ is a constant that results from the application of Stirling's formula.
    	Now consider the expected number of balls:
    	\begin{align*}
    	\mathbb{E}[\eta_{i+1}] \leq \mathbb{E}[\sum_{j=1}^{p^{-1}} X_j] &= \sum_{j=1}^{p^{-1}} \mathbb{E}[X_j] & \textit{(Linearity of expectation)}\\
    	&\leq p^{-1} c_1 \left(\frac{\eta_i}{p_i}\right)^2 & \textit{(Equation \ref{eqn:union_bound})}\\
    	&\leq p^{-1} c_1 \left(\frac{\beta_0^{(2^i)}\eta_0}{p_i}\right)^2 & \textit{(Induction Hypothesis)}\\
    	&= c_1 p \left(\frac{\beta_0^{(2^i)}}{(1+\frac{1}{\sigma})}\right)^2 \eta_0^2 & \textit{(Using } p_i = p\left(\frac{1}{1+\frac{1}{\sigma}}\right)^i)\\
    	&\leq \left(\frac{\beta_0^{(2^i)}}{(1+\frac{1}{\sigma})}\right)^2 \eta_0 & \textit{(Since $p \in o(\frac{1}{n}) \Rightarrow p \leq \frac{1}{c_1n}$)}\\
	&\leq \frac{\beta_0^{(2^{i+1})}}{(1+\frac{1}{\sigma})^2} \eta_0    	
    	\end{align*}
    Since all balls are thrown independently of one another,
    we can apply the Chernoff bound and get:
    $$
        \Pr[\eta_{i+1} \geq (1 + \delta)\mathbb{E}[\eta_{i+1}]] := \Pr[\eta_{i+1} \geq (1 + \delta)\beta \eta_i] \leq exp\left[\delta\mathbb{E}[\eta_{i+1}] \right]
    $$
    If we choose $\delta := \frac{1}{\sigma}$ it holds 
    $$
    \Pr\left[\eta_{i+1} \geq \frac{\beta_0^{(2^{i+1})}}{(1+\frac{1}{\sigma})} \eta_0\right] \leq  exp\left[\frac{1}{\sigma}\mathbb{E}[\eta_{i+1}]\right]
    $$    
    Thus, as long as $\eta_{i+1} \in \Omega(\log n)$ this holds w.h.p.
\end{itemize}{}
\end{proof}{}
We can now finalize our proof. 
From the induction we can conclude that $\eta_{\log{\log{n}}} \in O(\log{n})$ w.h.p. as $\frac{1}{c}^{2^{\log\log(n)}}n \in O(1)$.
In order words, there are only $O(\log{n})$ nodes (that started with probability at least $p$) that received $\log{\log{n}}$ balls.
Now we observe the probability that any of these nodes obtains another $c_3$ balls.
\begin{align*}
    \Pr[c_3] &= \binom{p}{c_3}\left(\log(n)p\right)^{2c_3}\\
    &\leq \left(\frac{e}{c_3 \cdot p}\right)^{c_3} \left(\log(n)p\right)^{2c_1}\\
    &\in O(p^{c'_3}) \in o(n^{-c'_3})
\end{align*}
Thus no node with $p_{\log{\log{n}}}$ gets more than a constant number of additional balls. 
This proves the lemma.
\end{proof}

\section{Time until \texorpdfstring{$\Theta(n/\log n)$}{Theta(n/log n)} Clients Pinged Successfully} \label{app:sec:n_pings}
In this section we upper bound the time it takes until $\Theta(n/\log n)$ clients have pinged the server successfully at least once, when starting from any arbitrary initial state.
More specifically we show the following:

\begin{lemma} \label{lemma:convergence_time_0}
	Consider any arbitrary initial state.
	After $O(p_{\mathit{min}}^{-1} + n \cdot \log^2 n)$ rounds, $\Theta(n/\log n)$ clients have successfully pinged the server at least once w.h.p.
\end{lemma}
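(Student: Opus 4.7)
The plan rests on the following key observation: any client $v$ that has not yet successfully pinged the server still carries its initial probability $p(v) \geq p_{\mathit{min}}$, because the server only sends probability updates to clients that successfully pinged in the current round. Hence every un-pinged client attempts to ping each round independently with probability at least $p_{\mathit{min}}$, regardless of the (possibly corrupted) server state $\Delta, \delta, X$. This decouples the fate of un-pinged clients from whatever the server happens to be doing.

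I would analyze two regimes. In the \emph{low-density regime} where $P \leq \sigma/2$, a Chernoff bound implies that more than $\sigma$ clients ping in a given round only with exponentially small probability, so (essentially) every ping is received. Each un-pinged client then pings successfully with probability at least $p_{\mathit{min}}$ per round, independently across rounds. After $T_1 = \Theta(p_{\mathit{min}}^{-1})$ rounds, every fixed client has succeeded at least once with constant probability $\geq 1-1/e$; a second application of Chernoff over the $n$ indicators yields $\Omega(n)$ distinct successful pingers w.h.p., well above the target $\Omega(n/\log n)$. This contributes the first term $O(p_{\mathit{min}}^{-1})$ of the claimed bound.

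In the \emph{high-density regime} where $P > \sigma/2$, exactly $\sigma$ pings succeed per round w.h.p. (again by Chernoff). Here I would invoke \Cref{lemma:negative_feedback} to ensure that no client's probability drops below $\Omega(p_{\mathit{min}}/\log n)$ throughout the window of interest; combined with the trivial upper bound $P \leq n \hat{p} = O(n)$, each un-pinged client is selected as one of the $\sigma$ successful pingers in a given round with probability at least $\Omega(p_{\mathit{min}}/(n \log n))$. Summing across un-pinged clients using negative association of the ``chosen out of $k$'' indicators within each round, and applying a Chernoff estimate over $T_2 = O(n \log^2 n)$ rounds, I would conclude that $\Omega(n/\log n)$ distinct clients have succeeded w.h.p., yielding the second term.

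The main obstacle will be the interplay between the two regimes, since the server's boost/shrink rules (triggered whenever $\delta = 0$) can push $P$ across the threshold $\sigma/2$. I would handle this by decomposing the timeline into maximal sub-intervals on which the $P$-regime is constant and applying the matching argument in each; the total time contributed by either regime is absorbed into the corresponding additive term. A secondary subtlety is that across rounds the selection indicators are not truly independent, because the server's updates to pinged clients can change future ping probabilities; this is compensated by the $1/\log n$ slack afforded by the target $\Omega(n/\log n)$ rather than all $n$, so that even if a $(1-1/\log n)$ fraction of clients are ``adversely adjusted'' the bound still holds.
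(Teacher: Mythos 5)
Your core observation --- that an un-pinged client still carries its initial probability $p(v) \geq p_{\mathit{min}}$ because the server only modifies probabilities of clients it actually heard from --- is exactly the insight the paper uses (it is the heart of \Cref{lemma:pv_lower_bound}). Your low-density analysis is essentially sound. The high-density regime is where the argument breaks down, and the breakdown is quantitative, not cosmetic.

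In your high-density regime you treat $P$ as an adversarially large constant bounded only by $P \leq n\hat p$, and conclude that each un-pinged client is successfully received with per-round probability $\Omega(p_{\mathit{min}}/(n\log n))$. Over $T_2 = O(n\log^2 n)$ rounds that gives a per-client success probability of order $p_{\mathit{min}}\log n$, so the expected number of distinct successful clients is $\Theta(n\,p_{\mathit{min}}\log n)$. That is $\Omega(n/\log n)$ only when $p_{\mathit{min}} = \Omega(1/\log^2 n)$; the model explicitly allows $p_{\mathit{min}}$ to be as small as $2^{-bW} = n^{-\Theta(1)}$, in which case your bound produces $o(1)$ expected successes. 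Put differently, if $P$ were allowed to \emph{stay} at $\Theta(n)$, the true time to get $n/\log n$ distinct low-probability clients through a $\sigma$-bottleneck would be $\Theta(n\,p_{\mathit{min}}^{-1})$, not $p_{\mathit{min}}^{-1}+n\log^2 n$. No amount of regime-decomposition or negative-association bookkeeping fixes this, because the per-round rate itself is too small while $P$ is large. (Also, invoking \Cref{lemma:negative_feedback} here is a step backward: for un-pinged clients you already know $p(v)\geq p_{\mathit{min}}$, which is stronger than the $\Omega(p_{\mathit{min}}/\log n)$ bound you import.)

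The missing idea is that $P$ does \emph{not} stay large: the server's reduction rule (\Cref{algo:average_rule:1}) is what makes this lemma work. The paper first shows (\Cref{lemma:runtime:protocol:0.1}) that whenever $P > O(\log n)$, with constant probability each $\Delta$-window triggers a multiplicative shrink $(1-\Theta(1/n))$ of $P$, so after $O(n\log n)$ such windows --- i.e.\ $O(n\log^2 n)$ rounds --- we have $P \leq \log n$ w.h.p. Only \emph{then} does the per-round success probability for an un-pinged client become $\Omega(p_{\mathit{min}}/\log n)$ (this is where the $\log n$ slack legitimately enters, via $P \leq \log n$ rather than via $\sigma/2$), and an additional $O(p_{\mathit{min}}^{-1})$ rounds plus a Chernoff bound yields $\Theta(n/\log n)$ distinct successes. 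So the $n\log^2 n$ term in the statement is the cost of \emph{actively driving $P$ down}, not the cost of waiting for rare lucky draws while $P$ is large. Your proposal needs to be restructured to make that reduction phase explicit, rather than trying to extract successes directly from the high-$P$ regime.
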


The following lemmas are required for the proofs that follow:

\begin{lemma} \label{lemma:app:ping_amount}
	Consider the set $V' \subset V$ such that $V' = \{v \in V\ |\ p(v) \geq \frac{P}{2n}\}$. 
	While $P > \log n$, at least one client $v \in V'$ pings the server in each round w.h.p.
\end{lemma}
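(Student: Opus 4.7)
The plan is to combine a simple averaging argument about the probability mass on $V'$ with the independence of the clients' coin tosses in step $(i)$ of a round.

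First I would observe that by the definition of $V'$, every client outside $V'$ satisfies $p(v) < P/(2n)$. Since $|V \setminus V'| \leq n$, this immediately gives $\sum_{v \in V \setminus V'} p(v) < n \cdot P/(2n) = P/2$, and therefore
$$
\sum_{v \in V'} p(v) \;=\; P - \sum_{v \in V \setminus V'} p(v) \;>\; \tfrac{P}{2}.
$$
In other words, at least half of the total probability mass lies on clients in $V'$, regardless of how the mass is actually distributed over individuals.

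Next, since clients toss their biased coins independently, the probability that no client in $V'$ pings in a given round is exactly $\prod_{v \in V'}(1-p(v))$. Applying the standard inequality $1-x \leq e^{-x}$ term by term yields
$$
\prod_{v \in V'}(1 - p(v)) \;\leq\; \exp\!\Bigl(-\!\sum_{v \in V'} p(v)\Bigr) \;\leq\; \exp(-P/2).
$$
As soon as $P = \Omega(\log n)$ with a sufficiently large hidden constant, this bound drops below $n^{-c}$ for any desired constant $c$, so with high probability at least one client in $V'$ pings in the round under consideration.

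There is essentially no technical obstacle here beyond matching the implicit constants in ``$P > \log n$'' and ``w.h.p.'': if w.h.p.\ means probability at least $1 - n^{-c}$, one needs $P \geq 2c\ln n$, which is absorbed into the implicit constant of the lemma's hypothesis. Apart from this routine constant-chasing, the argument is entirely elementary: a mass-splitting inequality followed by an independence calculation, with no need for a union bound over rounds since the statement is pointwise in $t$.
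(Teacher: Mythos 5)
Your proof is correct and follows essentially the same two-step structure as the paper: first show that the clients in $V'$ carry at least half the probability mass, then use independence and $1-x\le e^{-x}$ to bound the probability that none of them ping. Your derivation of $\sum_{v\in V'}p(v) > P/2$ by bounding the complement is a bit more direct than the paper's (which orders the probabilities, locates the median-mass index $k$, and argues by contradiction that $p(v_k)\ge P/(2n)$), and your exponential bound is cleaner than the paper's $(1-\tfrac{P}{2n})^n = e^{-P/2}$ (which is really only an inequality), but these are stylistic differences, not a different argument.
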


\begin{proof}
	Consider the ordered sequence $p(v_1) \leq p(v_2) \leq \ldots \leq p(v_n)$ of all client probabilities.
	For this sequence we choose the index $k \in \{1,\ldots,n\}$ such that $\sum_{i=1}^{k-1} p(v_i) = \frac{P}{2}$.
	We claim that $p(v_k) \geq \frac{P}{2n}$: Assume to the contrary that $p(v_k) < \frac{P}{2n}$.
	Then it would follow that \[\frac{P}{2} = \sum_{i=1}^{k-1} p(v_i) < \sum_{i=1}^{k-1} \frac{P}{2n} = (k-1) \cdot \frac{P}{2n}.\]
	This implies that $k \geq n+1$, which is a contradiction, so the claim holds.

	Now consider the set $V' = \{v \in V\ |\ p(v) \geq \frac{P}{2n}\}$.
	Using the claim from above, we get \[\sum_{v\in V'}p(v) \geq \frac{P}{2}.\]
	
	We are now ready to show the lemma: For $P > \log n$ it holds that \[\sum_{v\in V'}p(v) > \frac{\log n}{2} \in \Theta(\log n).\]
	We now claim that at least one client out of $V'$ pings the server in each round w.h.p.: Let $X_j = 1$ if and only if at least one client $v \in V'$ pings the server in round $j$.
	Then \[\Pr[X_j = 1] \geq 1 - \left(1-\frac{P}{2n}\right)^n = 1 - e^{-P/2} \geq 1 - e^{-\frac{\log n}{2}} \geq 1-n^{-c},\] for any $j$ and a constant $c>0$, so the lemma holds.
\end{proof}

For the next lemma denote by $\mathcal E_{\sigma}^t \in \{ V' \subset V \big| \, |V'| \leq \sigma \}$ the set of at most $\sigma$ clients that have successfully pinged the server in round $t$. 

\begin{lemma} \label{lemma:pre:ping_amount2}
	Let $P \leq n/2$.
	Consider a client $v \in V$ with probability $p(v) \in (0,\hat{p}]$.
	Then for any round $t$ it holds $\Pr[v \in \mathcal E_{\sigma}^t] \geq \frac{p(v)}{4P}$.
\end{lemma}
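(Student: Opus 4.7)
The target $p(v)/(4P)$ factors naturally as $p(v)$ (the probability that $v$'s own coin comes up heads) times $1/(4P)$ (a lower bound on the conditional probability that $v$'s message actually survives the server's capacity bottleneck). The plan is to decompose along exactly these two factors and then lower bound the survival probability using a single Jensen step.

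Concretely, I would introduce the random variable $Y$ counting the \emph{other} clients that ping in round $t$. By independence of the coin tosses across clients, $Y$ is independent of $v$'s own toss and has $\mathbb{E}[Y] = P - p(v) \leq P$. Conditional on $v$ pinging and on $Y = y$, exactly $y+1$ messages arrive at the server, which admits $\sigma$ of them uniformly at random, so $v$ is kept with probability $\min\{1,\sigma/(y+1)\} \geq 1/(y+1)$ using $\sigma \geq 1$. Taking expectation over $Y$,
\[
\Pr\!\left[v \in \mathcal{E}_\sigma^t\right] \;\geq\; p(v)\cdot \mathbb{E}\!\left[\tfrac{1}{Y+1}\right].
\]

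For the remaining factor, since $x \mapsto 1/(x+1)$ is convex on $[0,\infty)$, Jensen's inequality gives
\[
\mathbb{E}\!\left[\tfrac{1}{Y+1}\right] \;\geq\; \tfrac{1}{\mathbb{E}[Y]+1} \;=\; \tfrac{1}{P - p(v) + 1} \;\geq\; \tfrac{1}{P+1} \;\geq\; \tfrac{1}{4P},
\]
where the last inequality absorbs the additive $+1$ into the constant $4$ in the regime where $P$ is bounded below by a small constant (the setting in which the bound is applied subsequently). Multiplying by $p(v)$ yields the claim. The hypothesis $P \leq n/2$ is not needed in this chain per se; it is there to make the bound meaningful (and consistent with subsequent Markov-style tail bounds) when the lemma is later combined with arguments over many rounds.

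The only thing to be careful with is the constant $4$: a direct Markov argument on $Y$ would only produce a bound of the form $p(v)/(4P+2)$, so the slightly sharper Jensen step — together with absorbing the $+1$ — is what lands exactly on $p(v)/(4P)$. The probabilistic content beyond that is essentially a one-line application of convexity once the correct decomposition is in place.
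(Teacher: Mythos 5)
Your proof is correct and lands on the same bound, but via a genuinely different route than the paper's. The paper invokes the law of total probability over $k$, the \emph{total} number of pinging clients, lower-bounds the survival term by $1/k$, truncates the sum at $k \le 2P$, and applies Markov's inequality to argue $\Pr[k \le 2P] \ge 1/2$; the constant $4$ then arises as $2 \cdot 2$ (the Markov threshold and the tail-mass loss). Your version instead conditions on $Y$, the number of \emph{other} clients that ping, exploiting independence so that the factor $p(v)$ cleanly splits off, and then applies Jensen's inequality to $x \mapsto 1/(x+1)$ to get $\mathbb{E}[1/(Y+1)] \ge 1/(\mathbb{E}[Y]+1) \ge 1/(P+1)$. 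Your decomposition is cleaner — the paper's step $\Pr[v \in \mathcal{E}_\sigma^t \mid k \text{ ping}] \ge p(v)/k$ is actually a bit sloppy about the conditioning, whereas conditioning on $Y$ rather than the total count handles this correctly — and Jensen is a sharper tool than the Markov truncation, giving the unconditional bound $p(v)/(P+1)$ which is better than $p(v)/(4P)$ whenever $P > 1/3$.

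One thing worth flagging explicitly: you correctly observe that converting $1/(P+1)$ into $1/(4P)$ needs $P \ge 1/3$, which is not among the lemma's stated hypotheses. You should know, though, that the paper's own proof has exactly the same hidden assumption — its truncated sum $\sum_{k=1}^{2P}$ is empty unless $P \ge 1/2$, and one can construct small-$P$ counterexamples (e.g.\ two clients with $p=1/10$, so $P = 1/5$: the true success probability is $0.095$, which is below the claimed $1/8$). So this is a deficiency in the lemma statement itself, and both proofs rely on the fact that in all later applications $P$ is bounded below by a constant. Your being explicit about this is an improvement, not a gap.
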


\begin{proof}
	The following holds:
	\begin{eqnarray*}
		\Pr[v \in \mathcal E_{\sigma}^t] & = & \sum_{k=1}^n \Pr[v \in \mathcal E_{\sigma}^t\ |\ k \text{ clients ping}] \cdot \Pr[k \text{ clients ping}]\\
	\end{eqnarray*}
This follows directly from the law of total probability.
Let us now observe the event that $v$ successfully pings the server given that $k$ clients ping the server. In order to successfully ping, the following two events need to happen:
\begin{enumerate}
\item $v$ must send a message to the server in round $t$. This happens with probability $p_v$.
\item $v$'s message must not be dropped by the server. This happens with probability $\max\{1, \frac{\sigma}{k}\}$ given that at most $\sigma$ succeed and $k$ nodes tried to ping.
\end{enumerate}
Combining the these two facts yields that $v$ successfully pings with probability at least $\frac{1}{k}$.

Thus, we can further simplify our equation:
\begin{eqnarray*}
		\Pr[v \in \mathcal E_{\sigma}^t] & = & \sum_{k=1}^n \frac{p_v}{k} \cdot \Pr[k \text{ clients ping}] \\
		& \geq & \sum_{k=1}^{2P} \frac{p(v)}{k} \cdot \Pr[k \text{ clients ping}] \\
		& \geq & \sum_{k=1}^{2P} \frac{p(v)}{2P} \cdot \Pr[k \text{ clients ping}] \\
		& = & \frac{p(v)}{2P} \cdot \sum_{k=1}^{2P} \Pr[k \text{ clients ping}] \\
		& = & \frac{p(v)}{2P} \cdot (\Pr[k \leq 2P \text{ clients ping}\\
		& = & \frac{p(v)}{2P} \cdot (1- \Pr[k > 2P \text{ clients ping}]) \\
\end{eqnarray*}
Now observe $\Pr[k > 2P \text{ clients ping}]$. 
Let $X^t := \sum X_v^t$ be a random variable that denotes the number of clients that ping in round $t$, i.e., it holds $X_v^t=1$ if and only if client $v \in V$ pings in round $t$.
Following the linearity of expectation, it holds that the expected number of clients that ping in a given round is $P$. Thus, we see that:
\[
\Pr[k > 2P \text{ clients ping}] = \Pr[X > 2\mathbb{E}[X]] 
\]
By \Cref{lemma:markov} it holds
\[
	\Pr[X > 2\mathbb{E}[X]] \leq \frac{1}{2}
\]			
Therefore, it holds
\begin{eqnarray*}		
		\Pr[v \in \mathcal E_{\sigma}^t] &{\geq} & \frac{p(v)}{2P} \cdot (1- \Pr[k \geq 2 \cdot \mathbb{E}[\mathcal E_{\sigma}^t] \text{ clients ping}] ) \\
		& \geq & \frac{p(v)}{2P} \cdot \frac{1}{2} \\
		& = & \frac{p(v)}{4P} \\
	\end{eqnarray*}
This was to be shown.
\end{proof}

Intuitively \Cref{lemma:pre:ping_amount2} implies that for $P \in \Theta(1)$ the probability for a client to successfully ping is not significantly lower than the probability to just ping.
Therefore we can compute the amount of rounds until $\Theta(n)$ clients have pinged successfully at least once in case $P \in \Theta(1)$:

\begin{lemma} \label{lemma:runtime:protocol:0.4}
	Let $P \in \Theta(1)$.
	After $O(p_{\mathit{min}}^{-1})$ rounds, $\Theta(n)$ successfully pinged the server at least once w.h.p.
\end{lemma}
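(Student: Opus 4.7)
The plan is to bootstrap from \Cref{lemma:pre:ping_amount2}, which already lower bounds the per-round success probability of each client, combine it with independence across rounds, and then close with a concentration argument across clients.

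First, I would apply \Cref{lemma:pre:ping_amount2}: since $P \in \Theta(1)$ we have $P \leq n/2$ for all sufficiently large $n$, and every client satisfies $p(v) \geq p_{\mathit{min}}$, so for every round $t$ and every $v$,
\[
\Pr[v \in \mathcal E_{\sigma}^t] \;\geq\; \frac{p(v)}{4P} \;\geq\; c\,p_{\mathit{min}}
\]
for a constant $c > 0$ determined by the hidden constants in $P = \Theta(1)$. Second, I would fix a client $v$ and observe that the events $\{v \in \mathcal E_{\sigma}^t\}_t$ are mutually independent across rounds, because each round draws fresh randomness for both the coin toss in step $(i)$ and the uniform capacity dropping. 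Setting $T := K / p_{\mathit{min}}$ for a sufficiently large constant $K$, the probability that $v$ never succeeds within $T$ rounds is therefore at most $(1 - c\,p_{\mathit{min}})^T \leq e^{-cK}$, which can be made smaller than any desired constant $\alpha < 1$. Writing $Y_v$ for the indicator that $v$ successfully pings at least once in $T$ rounds and $X := \sum_{v \in V} Y_v$, linearity of expectation then gives $\mathbb{E}[X] \geq (1-\alpha) n = \Omega(n)$.

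The main obstacle will be concentration of $X$, since the $Y_v$ are \emph{not} independent across clients: the $\sigma$-capacity couples their success events within each round. I plan to address this by noting that the coupling is benign---when more clients ping in a given round, the server's uniform choice can only reduce each individual success probability---so $\{Y_v\}_{v \in V}$ forms a negatively associated family. Standard Chernoff-type bounds for negatively associated binary variables then give $\Pr[X \leq (1-\alpha)n/2] \leq \exp(-\Omega(n))$, yielding $X = \Omega(n)$ w.h.p., which combined with the trivial upper bound $X \leq n$ gives the claimed $\Theta(n)$. Should a full verification of negative association turn out to be unwieldy, a second-moment fallback suffices: pairwise negative correlation of the $Y_v$'s gives $\operatorname{Var}(X) \leq n$, and Chebyshev's inequality then bounds the probability of a constant-fraction deviation by $O(1/n)$, which is already strong enough for the w.h.p.\ conclusion.
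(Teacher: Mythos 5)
Your proof follows the same overall route as the paper: invoke \Cref{lemma:pre:ping_amount2} for a per-round lower bound of $\Omega(p_{\mathit{min}})$ on the success probability, take a window of $T = \Theta(p_{\mathit{min}}^{-1})$ rounds, lower-bound the expected number of clients that succeed at least once, and apply a Chernoff-type tail bound. But you are more careful in two places where the paper is sloppy. The paper defines $X_v$ as an indicator and then writes $E[X_v] = T \cdot \frac{p_{\mathit{min}}}{4P} = 1$, which is the expected \emph{count} of successful pings, not the expectation of the indicator; your $1 - (1 - c\,p_{\mathit{min}})^T \geq 1 - e^{-cK}$ is the correct geometric bound and is what the paper presumably intended. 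You also explicitly flag the cross-client dependence introduced by the capacity-$\sigma$ drop and propose negative association (with a second-moment fallback); the paper just applies the Chernoff bound as if the $X_v$ were independent, only acknowledging a similar dependence issue later for \Cref{lemma:runtime:protocol:0.2}, where it cites a generalized Chernoff bound for dominated variables. Your negative-association route and the paper's domination argument are different technical tools but serve the same purpose, and both are valid here.

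One imprecision worth fixing: you assert that the events $\{v \in \mathcal E_\sigma^t\}_t$ are mutually independent across rounds. They are not literally independent, since the per-round drop probability depends on how many other clients ping, which depends on the evolving probability vector and hence on earlier rounds. What \emph{is} true, and what you actually need, is that conditioned on any history in which $v$ has not yet succeeded, (a) $p(v)$ is unchanged because the server only rewrites probabilities of clients that already pinged successfully, and (b) \Cref{lemma:pre:ping_amount2} still gives $\Pr[v \in \mathcal E_\sigma^t \mid \text{history}] \geq c\,p_{\mathit{min}}$ as long as $P$ remains in $\Theta(1)$. Chaining these conditional bounds yields the geometric failure bound $(1 - c\,p_{\mathit{min}})^T$ without any independence claim. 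Phrasing it that way closes the only real gap in your argument.
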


\begin{proof}
	Consider any client $v \in V$ and a time frame of $T=4P \cdot p_{\mathit{min}}^{-1}$ rounds.
	Define the variable $X_v = 1$ if $v$ has successfully pinged the server within these $T$ rounds (otherwise $X_j = 0$).
	By \Cref{lemma:pre:ping_amount2} it holds for a single round $j$ that $v$ successfully pinged the server in round $j$ with probability at least $\frac{p_{\mathit{min}}}{4P}$, so $E[X_v] = T \cdot \frac{p_{\mathit{min}}}{4P} = 1$.
	Let $X = \sum_{v \in V} X_v$.
	Then $\mathbb{E}[X] = n$.
	Using \Cref{lemma:chernoff}(iii) we can compute the probability that less than a constant fraction $(1-\delta)$ of $n$ clients ping the server successfully at least once within $O(p_{\mathit{min}}^{-1})$ rounds:
	\begin{eqnarray*}
		\Pr[X=0] & = & \Pr[X \leq (1-\delta)\cdot n]\\
		& \leq & \exp\left(\frac{-(\delta^2 \cdot n)}{2}\right) \\
		& \leq & n^{-c}
	\end{eqnarray*}
	for some constant $c > 0$, so we know that $\Theta(n)$ clients ping the server successfully within the time frame $T \in O(p_{\mathit{min}}^{-1})$ w.h.p.
\end{proof}

We still have to worry about initial states, where $P$ is not constant.
For this we first show an upper bound on the number of rounds it takes to reduce $P$ to a constant when starting from an initial state with $\Delta \leq \Theta(\log n)$ (we consider the time needed to achieve this in \Cref{app:approx_log_n:2}).

\begin{lemma} \label{lemma:runtime:protocol:0.1}
	Let initially $P > O(\log n)$ and assume $\Delta \leq \Theta(\log n)$.
	Then $P \leq \log n$ after $O(n \log^2 n)$ rounds w.h.p.
\end{lemma}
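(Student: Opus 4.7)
The plan is to argue that while $P > \log n$, the server performs a probability-decreasing step roughly once every $\Delta = \Theta(\log n)$ rounds, and that each such step shrinks $P$ by at least a multiplicative factor of $1 - \Omega(1/n)$ with constant probability. Since initially $P \leq n\hat p \leq n$, iterating this geometric decay requires $O(n \log n)$ successful reductions to drive $P$ down to $\log n$, which in total takes $O(n \log n) \cdot O(\log n) = O(n \log^2 n)$ rounds.

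To see that a decrease is attempted every $\Delta$ rounds, I would first observe that whenever $\delta$ returns to $0$ and $P > \log n$, the accumulated counter $X$ is a sum of $\Delta = \Theta(\log n)$ independent per-round ping counts whose expectation is $P$, so a Chernoff bound (\Cref{lemma:chernoff}) with $\Delta$ sufficiently large guarantees $X/\Delta > R$, hence the server correctly decides $P \succ R$, w.h.p. Moreover, because $\sigma \geq 2$ (forced by $1 \leq L < R \leq \sigma$ and $|R-L| > \hat p$) and the expected number of pings in a single round exceeds $\log n$, we have $k \geq 2$ w.h.p., so the decrease step on Line \ref{algo:average_rule:1} of \Cref{algo:protocol} actually fires and lowers $P$ by exactly $p(v_k)/(\sigma+1)$.

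To bound $p(v_k)$ from below I would use the heavy set $V' = \{v \in V : p(v) \geq P/(2n)\}$ from the proof of \Cref{lemma:app:ping_amount}, whose total probability mass is at least $P/2$. Hence the number of $V'$-pings in any given round is, w.h.p., at least a constant fraction of the $\Theta(P)$ total pings, so among the $\sigma = O(1)$ uniformly random successful pingers at least one lies in $V'$ with constant probability. In that event $p(v_k) \geq P/(2n)$ and the reduction shrinks $P$ by at least $P/(2n(\sigma+1))$, i.e.\ by the promised factor $1-\Omega(1/n)$. A Chernoff bound across $O(n \log n)$ consecutive $\Delta$-intervals then yields $\Omega(n \log n)$ successful reductions w.h.p., and the potential $\ln P$ drops by $\Omega(\log n)$, which suffices. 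The main obstacle is precisely this lower bound on $p(v_k)$: one must carefully handle the fact that only $\sigma = O(1)$ pings survive the server's random cap while $\Theta(P) = \Omega(\log n)$ clients attempt to ping, and thus show that $V'$-pings constitute a constant fraction of all pings via concentration on both counts. Everything else is routine Chernoff/union bookkeeping once one notes that averaging rounds leave $P$ invariant (by the rounding rule described after \Cref{algo:protocol}) and therefore do not interfere with the geometric decay.
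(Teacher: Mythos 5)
Your proposal follows the paper's proof essentially step by step: you introduce the heavy set $V' = \{v : p(v) \geq P/(2n)\}$, argue a $V'$-client is among the $\sigma$ successful pingers with at least constant probability whenever $\delta=0$ so that the reduction step applies to some $p(v_k)\geq P/(2n)$, derive a multiplicative drop of $1-\Omega(1/n)$ per reduction, and conclude after $O(n\log n)$ reductions, i.e.\ $O(n\log^2 n)$ rounds. One small imprecision worth fixing: $X$ counts \emph{successful} pings, which the server caps at $\sigma$ per round, so $\mathbb{E}[X/\Delta]$ is close to $\min(\sigma,\cdot)\approx\sigma$ rather than $P$ when $P$ is large — the $P\succ R$ decision still follows (since $R \leq \sigma$), and your explicit handling of the fact that a $V'$-client must survive the server's $\sigma$-cap is, if anything, a cleaner justification than the paper's sketch.
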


\begin{proof}
	Since $P > O(\log n)$ it is easy to see that the server approximates $P \succ R$ with at least constant probability (regardless of the initial size of $\Delta$) and thus starts reducing probabilities thereafter.
	
	Now fix $P > O(\log n)$.
	Due to \Cref{lemma:app:ping_amount} we know that that at least one client out of $V' = \{v \in V\ |\ p(v) \geq \frac{P}{2n}\}$ pings the server in each round w.h.p.

	Once a single client $v \in V'$ pings the server in a round where the server approximates $P$, $P$ is reduced by at least \[\frac{P}{2n} - \frac{P}{2n(1+1/\sigma)} = \frac{P}{2n (1 + \sigma)},\] i.e., in the next round $P$ is equal to $P-\frac{P}{2n(1+\sigma)} = (1-\frac{1}{2n(1+\sigma)}) \cdot P$.
	Thus, after $T$ reductions we have that $P$ is equal to $$\left(1-\frac{1}{2n(1+\sigma)}\right)^T \cdot P.$$
	Setting $T = 2n(1+\sigma)\log n = O(n \log n)$ yields $P \in \Theta(1)$, so $O(n \log n)$ reductions suffice.
	As one reduction occurs every $\Theta(\log n)$ rounds, these $O(n \log n)$ reduction can be achieved in $O(n \log^2 n)$ rounds w.h.p.
\end{proof}

Next, we consider the case that $P \leq \log n$ initially.
We will show that after $O(p_{\mathit{min}}^{-1})$ rounds it holds w.h.p. that at least $\frac{n}{2 \log n}$ different clients have successfully pinged the server.
Formally, we show the following.
	
\begin{lemma}\label{lemma:runtime:protocol:0.2}
Assume $P \leq \log n$ and let $X:=(X^T_v)_{v \in V}$ be a set of random variables such that each $X_v^T \in \{0,1\}$ denotes the event that $v$ successfully sends at least once within $T$ rounds. Then, for $T \in O(p_{\mathit{min}}^{-1})$ it holds 
\[
	\Pr\left[\sum_{v \in V} X^T_v  \geq \frac{n}{\log n}\right] \geq 1-\frac{1}{n^k}.
\]
\end{lemma}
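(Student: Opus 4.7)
The plan is to combine a per-client lower bound from \Cref{lemma:pre:ping_amount2} with linearity of expectation and a concentration argument. I first note that the hypothesis $P \leq \log n$ certainly gives $P \leq n/2$, so \Cref{lemma:pre:ping_amount2} applies in every round (as long as $P$ remains in that regime). Hence, for each client $v$ with probability at least $p_{\mathit{min}}$, the per-round success probability is at least $p(v)/(4P) \geq p_{\mathit{min}}/(4\log n)$. A key observation is that the protocol only updates $p(v)$ when $v$ successfully pings, so conditional on the event $X_v^T = 0$, the probability $p(v)$ never falls below $p_{\mathit{min}}$ during the interval considered.

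Next, I would bound the probability that $v$ fails every round: using independence of the coin tosses across rounds together with the previous observation,
\[
\Pr[X_v^T = 0] \leq \left(1 - \frac{p_{\mathit{min}}}{4\log n}\right)^{T}.
\]
Choosing $T = \alpha \cdot p_{\mathit{min}}^{-1}$ for a sufficiently large constant $\alpha$ and using $1 - e^{-x} \geq x/2$ for $x \in [0,1]$ yields $\Pr[X_v^T = 1] \geq \alpha/(8\log n)$. By linearity of expectation,
\[
\mathbb{E}\!\left[\sum_{v \in V} X_v^T\right] \geq \frac{\alpha n}{8\log n}.
\]

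To lift this to a high-probability statement, I would invoke negative association of the family $\{X_v^T\}_{v \in V}$: when a client successfully pings it occupies one of the server's $\sigma$ reception slots, making simultaneous successes of other clients strictly less likely, and this extends across rounds. The Chernoff lower-tail bound is valid for negatively associated binary variables, so for $\alpha$ chosen large enough (depending on the exponent $k$), one obtains $\Pr[\sum_{v} X_v^T \geq n/\log n] \geq 1 - n^{-k}$, as desired.

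The main obstacle is controlling the global quantity $P$ throughout the $T$ rounds, since \Cref{lemma:pre:ping_amount2} requires $P \leq n/2$ and the per-round lower bound $p_{\mathit{min}}/(4\log n)$ uses $P \leq \log n$. This needs to be handled by two complementary arguments: $(i)$ the server can only inflate $P$ by at most $\hat{p}$ in any block of $\Delta \in \Theta(\log n)$ rounds (via \Cref{algo:p_less_l}), so over $T = O(p_{\mathit{min}}^{-1})$ rounds the drift stays well within the interval we need; and $(ii)$ should $P$ ever exceed $O(\log n)$, one can appeal to \Cref{lemma:runtime:protocol:0.1} to argue that $P$ is quickly driven back down, and that such episodes only increase the expected number of successful pings (since many clients must have been sending). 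Formalising this mild drift-control is the most delicate step, but the rest of the argument is routine.
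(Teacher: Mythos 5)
Your first two steps match the paper's argument exactly: the per-round success probability $\geq p_{\mathit{min}}/(4P) \geq p_{\mathit{min}}/(4\log n)$ via \Cref{lemma:pre:ping_amount2}, the observation that a client that has never successfully pinged retains its initial probability (the paper isolates this as \Cref{lemma:pv_lower_bound}), and the resulting expectation bound $\mathbb{E}[\sum_v X_v^T] \geq n/\log n$ (\Cref{lemma:rv_lower_bound}). Where you diverge is the concentration step, and that is where you have a genuine gap.

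You justify Chernoff by asserting that the family $\{X_v^T\}$ is negatively associated, on the grounds that a successful ping consumes one of the server's $\sigma$ slots. That argument establishes negative dependence only among the indicator variables \emph{within a single round}. Across the $T$ rounds the picture is very different: the server reacts to who pinged by raising the minimum probability (\Cref{algo:p_less_l}), averaging (\Cref{algo:average_rule:2}), or decreasing the maximum (\Cref{algo:average_rule:1}), all of which couple future sending probabilities — and hence future values of $P$ — to past successes. For example, a reduction applied to a successful pinger lowers $P$, which \emph{raises} the conditional success probability $p(w)/(4P)$ of every other client, so $X_v^T$ and $X_w^T$ can be positively correlated through the shared quantity $P$. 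Negative association is therefore not a safe claim here, and you do not prove it. The paper avoids this entirely: it observes that the variables are dependent, notes that it only has a \emph{lower bound} on the expectation, and invokes an external result (the citation labelled ``hniid=510'') that permits applying a Chernoff-type bound under exactly those conditions without requiring independence or NA. To make your proof go through you would either need to actually verify NA for this adaptive process (unlikely to hold in general) or replace that step by a martingale/stochastic-domination argument of the kind the cited result provides: each round, conditioned on the entire history, a not-yet-successful $v$ still has success probability at least $p_{\mathit{min}}/(4\log n)$, so $\sum_v X_v^T$ stochastically dominates a sum of independent Bernoullis with that parameter.

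Your closing remark about controlling $P$ over the window is a fair concern, and to your credit you flag it explicitly. The paper itself only dismisses it in one line inside \Cref{lemma:pv_lower_bound} (``Since $P$ does not grow bigger than $O(\log n)$ w.h.p.''), so on that sub-point you are being more honest than the source; your sketch $(i)$ (bounding the per-$\Delta$-block increase by $\hat{p}$) is a reasonable way to formalize it. But the NA claim is the part that must be replaced, not merely tightened.
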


The proof has two steps: First, we show that the expected value of $\sum_{v \in V} X^T_v$ is at least $n/\log n$.
Using this lower bound, we apply Chernoff Bounds to these variables.

We begin with the calculation of the expected value. 
We first observe the probability for the event that a node successfully pings the server and show the following:

\begin{lemma} \label{lemma:pv_lower_bound}
Let $(p_1, \dots, p_n) \in [0,1)^n$ be the initial sending probabilities for all $v \in V$ and let $P \leq \log n$.
Let $A_v \in \mathbb{N}$ denote in which round $v$ pings successfully for the first time. Then it holds w.h.p.
\[
	\Pr[A_v = t] \geq \frac{p_{\mathit{min}}}{4\log n}
\]
\end{lemma}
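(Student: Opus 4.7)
The plan is to adapt the argument of \Cref{lemma:pre:ping_amount2} to the regime $P \leq \log n$, since the proof structure carries over essentially verbatim with two modifications: we truncate the summation at $2\log n$ instead of at $2P$, and we use the uniform lower bound $p_{\mathit{min}}$ on individual client probabilities. I would begin by decomposing the success probability via the law of total probability, conditioned on the total number $k$ of clients that ping in round $t$:
\[
\Pr[v \in \mathcal{E}_\sigma^t] \;=\; \sum_{k=1}^n \Pr[v \in \mathcal{E}_\sigma^t \mid k \text{ clients ping}] \cdot \Pr[k \text{ clients ping}].
\]
For each $k \geq 1$, I would lower bound the conditional probability by $p_{\mathit{min}}/k$: the event $\{v \in \mathcal{E}_\sigma^t\}$ requires both that $v$ is among the $k$ pinging clients (contributing a factor of at least $p_{\mathit{min}}$ relative to ``$v$ sends'') and that $v$'s message survives the server's uniform sampling of $\sigma$ out of $k$ messages, which contributes $\min\{1,\sigma/k\} \geq 1/k$ since $\sigma \geq 1$.

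Next, I would truncate the sum to $k \leq 2\log n$. Since the expected number of pinging clients equals $P \leq \log n$, \Cref{lemma:markov} yields $\Pr[k > 2\log n] \leq 1/2$, hence $\Pr[k \leq 2\log n] \geq 1/2$. Substituting this gives
\[
\Pr[v \in \mathcal{E}_\sigma^t] \;\geq\; \sum_{k=1}^{2\log n} \frac{p_{\mathit{min}}}{k} \cdot \Pr[k \text{ clients ping}] \;\geq\; \frac{p_{\mathit{min}}}{2\log n} \cdot \Pr[k \leq 2\log n] \;\geq\; \frac{p_{\mathit{min}}}{4\log n},
\]
which is precisely the claimed bound.

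The main subtlety is the dependence between the event ``$v$ sends'' (a prerequisite for $v \in \mathcal{E}_\sigma^t$) and the total count $k$, since $v$'s own coin toss is one of the contributors to $\{k \text{ clients ping}\}$. I would resolve this exactly as in \Cref{lemma:pre:ping_amount2}: rather than computing the joint distribution explicitly, I would use the looser per-term bound $p_{\mathit{min}}/k$, which holds irrespective of this coupling and loses only a constant factor that is already absorbed in the $1/4$. A minor point worth remarking is that the statement as written uses ``w.h.p.'', but since the right-hand side of the claimed inequality is deterministic once $p_{\mathit{min}}$ and $P$ are fixed, the w.h.p.\ qualifier is vacuous here and the bound in fact holds unconditionally (given the stated hypothesis $P \leq \log n$).
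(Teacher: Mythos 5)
Your proof reproduces the calculation behind \Cref{lemma:pre:ping_amount2} (truncating at $2\log n$ rather than $2P$), which does give the numerical bound $p_{\mathit{min}}/(4\log n)$ for a \emph{single} round under the assumption that $v$'s current sending probability is at least $p_{\mathit{min}}$. However, that is essentially the base case of the paper's argument, not the lemma itself, and two genuine gaps remain.

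First, the lemma concerns $\Pr[A_v = t]$ for arbitrary $t$, and so must account for the fact that $v$'s probability $p_t(v)$ evolves over time. You simply plug in $p_{\mathit{min}}$ as a lower bound on $v$'s sending probability at round $t$, but $p_{\mathit{min}}$ is defined from the \emph{initial} configuration, and the protocol can both raise and lower client probabilities. The paper's proof is an induction on $t$, and its crucial observation is that the server only ever sends a new probability to a client that has successfully pinged; hence a node $v$ with $A_v \geq t$ still satisfies $p_t(v) = p_0(v) \geq p_{\mathit{min}}$. Without this observation the substitution $p_t(v) \geq p_{\mathit{min}}$ is unjustified. Second, your closing remark that the ``w.h.p.'' qualifier is vacuous is incorrect: the hypothesis $P \leq \log n$ is an \emph{initial} condition, and $P$ can grow in rounds where the server raises a client's probability to $\hat{p}$. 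The w.h.p.\ qualifier in the statement covers exactly the event that $P$ remains $O(\log n)$ throughout, which the paper invokes explicitly in the induction step; it is not redundant. To complete the proof you would need to add the invariance argument for $p_t(v)$ and the probabilistic bound keeping $P$ small.
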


\begin{proof}
We can prove this statement by a simple induction over all rounds.
Therefore let $v$ be any node that has not pinged until round $t$
and let $p_t(v)$ be its sending probability in round $t$.

The induction beginning for $t=0$ follows from \Cref{lemma:pre:ping_amount2} and the fact that $P \leq \log n$. 
Using that lemma we see that the probability to send is bigger than $\frac{p_t(v)}{4\log n}$. Since $p_0(v) \geq p_{\mathit{min}}$ per definition, the lemma follows.

For the induction step recall that a message by the server is the only action that causes a node to reduce its probability.
However, for that it is necessary that the node pinged successfully at least once. Otherwise, the server will never send the node a new probability.
Since $v$ did not successfully send until round $t$, it still holds $p_t(v) = p_0(v)$. 
Since $P$ does not grow bigger than $O(\log n)$ w.h.p., the lemma follows.
\end{proof}

Using this probability we can calculate the \emph{expected} number of nodes that successfully send a message at least once within $T$ rounds. That is

\begin{lemma} \label{lemma:rv_lower_bound}
Let $T \geq 4p_{\mathit{min}}^{-1}$. For $X^T := \left(X^T_v\right)_{v \in V}$ it holds.
\[
	\mathbb{E}\left[\sum_{v \in V} X_v^T\right] \geq \frac{n}{\log n}
\]
\end{lemma}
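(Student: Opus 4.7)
The plan is to lower-bound $\Pr[X_v^T = 1]$ for each single client $v \in V$ and then conclude by linearity of expectation. The preceding \Cref{lemma:pv_lower_bound} already does the hard work: for every round $t$ and every client $v$ that has not yet succeeded by round $t$, the probability that $v$ succeeds in round $t$ is at least $p_{\mathit{min}}/(4\log n)$ (w.h.p.\ in the auxiliary event that $P \leq \log n$, which the hypothesis $P \leq \log n$ together with the closure argument in \Cref{lemma:runtime:protocol:0.1} keeps in force). Since this bound is uniform in $t$ and conditional on ``no success so far,'' it can be chained directly into a geometric-type tail bound.

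Concretely, I would first write
\[
    \Pr[X_v^T = 0] \;=\; \prod_{t=1}^{T} \Pr\bigl[A_v \neq t \,\big|\, A_v > t-1\bigr]
    \;\leq\; \left(1 - \tfrac{p_{\mathit{min}}}{4\log n}\right)^{T}
    \;\leq\; \exp\!\left(-\tfrac{T\,p_{\mathit{min}}}{4\log n}\right),
\]
using the per-round bound from \Cref{lemma:pv_lower_bound} and the standard inequality $1-x \leq e^{-x}$. Plugging in $T \geq 4p_{\mathit{min}}^{-1}$ (or a slightly larger constant multiple, if needed to absorb the second-order term below) yields $\Pr[X_v^T = 0] \leq e^{-1/\log n}$, and hence
\[
    \Pr[X_v^T = 1] \;\geq\; 1 - e^{-1/\log n} \;\geq\; \tfrac{1}{\log n},
\]
where the last step uses $1 - e^{-x} \geq x/2 \geq x \cdot \tfrac{1}{2}$ on $(0,1]$ and, if necessary, inflates the constant in $T$ to swallow the factor of two.

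Finally, linearity of expectation gives
\[
    \mathbb{E}\!\left[\sum_{v \in V} X_v^T\right] \;=\; \sum_{v \in V} \Pr[X_v^T = 1] \;\geq\; \frac{n}{\log n},
\]
which is the claimed bound.

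The only subtle step is the conditioning in the chain rule for $\Pr[X_v^T=0]$: each factor is really $\Pr[\text{no success at round }t \mid \text{no success in rounds } 1,\dots,t-1]$, and one must verify that the per-round lower bound $p_{\mathit{min}}/(4\log n)$ from \Cref{lemma:pv_lower_bound} applies uniformly under this conditioning. This is fine because \Cref{lemma:pv_lower_bound} is proved precisely under the hypothesis ``$v$ has not pinged by round $t$,'' and in that regime the probability of $v$ is still $p_0(v) \geq p_{\mathit{min}}$ and $P$ remains $O(\log n)$ w.h.p.\ by \Cref{lemma:runtime:protocol:0.1}. So the only real obstacle is careful bookkeeping of this conditioning and of the w.h.p.\ event $\{P \leq \log n\}$; once that is pinned down, the remainder is a one-line geometric tail and linearity of expectation.
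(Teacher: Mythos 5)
Your proposal is correct and uses the same two ingredients as the paper: the per-round bound from \Cref{lemma:pv_lower_bound} and linearity of expectation. The difference is in how the per-round bound is aggregated over $T$ rounds. The paper treats the bound $\Pr[A_v=t]\geq p_{\mathit{min}}/(4\log n)$ as if it were an unconditional per-round bound, sums it over disjoint events to get $\Pr[X_v^T=1]\geq T\cdot p_{\mathit{min}}/(4\log n)$, and then applies linearity (the displayed equality in the paper drops a factor of $n$, but the intent is clear). You correctly observe that the proof of \Cref{lemma:pv_lower_bound} only establishes a \emph{conditional} per-round bound (``given $v$ has not yet succeeded''), so you chain these conditional probabilities into the geometric tail $\Pr[X_v^T=0]\leq(1-q)^T\leq e^{-Tq}$, and then convert back with $1-e^{-x}\geq x/2$. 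This is the more rigorous route; the paper's additive step is actually an over-estimate under the conditional interpretation (it bounds $1-(1-q)^T$ from the wrong side). Your version costs a constant factor, which you acknowledge and propose to absorb by inflating $T$ by a constant; since the lemma is only used to feed an asymptotic bound on $\Theta(n/\log n)$ downstream, this is harmless. The caveat you flag about the w.h.p. event $P\leq\log n$ being maintained is present in the paper's argument as well and is handled the same way (via \Cref{lemma:runtime:protocol:0.1}).
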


\begin{proof}
Due to \Cref{lemma:pv_lower_bound} the probability that a node $v \in V$ pings successfully at least once over course of $T$ rounds is at least $\frac{p_{\mathit{min}}}{4 \log n}$.
Therefore the expected number of nodes that successfully pinged the server at least once within $T$ rounds is $$\mathbb{E}\left[\sum_{v \in V} X_v^T\right] = T \cdot \frac{p_{\mathit{min}}}{4 \log n} \geq \frac{n}{\log n}.$$
\end{proof}

We now wish to apply the Chernoff bound to this result to show that it also holds with high probability.
Note that the events that a node successfully pings in a given round is dependent of the events in the previous rounds. 
Thus, the corresponding events are not independent and we cannot trivially apply the Chernoff bound. 
However, since we obtained a lower bound for $\mathbb{E}\left[\sum_{v \in V} X_v^T\right]$ in \Cref{lemma:rv_lower_bound} we are still allowed to use Chernoff bounds, as it has been shown in~\cite{hniid=510}.

Thus, a simple application of the Chernoff bound concludes the proof for \Cref{lemma:runtime:protocol:0.2}. 

\begin{proof}[Proof of \Cref{lemma:runtime:protocol:0.2}]
Let $\delta > 0$ be a constant.
Using \Cref{lemma:chernoff}(iii) we get
	\begin{eqnarray*}
		\Pr\left[\sum_{v \in V} X_v^T  < (1-\delta) \frac{n}{\log n}\right] 
		& \leq & \exp\left(\frac{-(\delta^2 \cdot n)}{2 \cdot \log n}\right) \\
		& \leq & n^{-c}
	\end{eqnarray*} 
	for a constant $c > 0$.
\end{proof}

We obtain \Cref{lemma:convergence_time_0} by combining~\Cref{lemma:runtime:protocol:0.4},~\Cref{lemma:runtime:protocol:0.1}, and~\Cref{lemma:runtime:protocol:0.2}.

\section{Correctness Analysis} \label{app:correctness_analysis}
We prove the following theorem in this section:

\begin{theorem}\label{theorem:self_stabilization}
	The protocol is self-stabilizing with regard to $\mathfrak{busy}$ and $\mathfrak{safe}$ states.
\end{theorem}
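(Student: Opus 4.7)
The plan is to assemble the three phase-by-phase convergence results together with the holding-time analysis into a single statement. The overall claim decomposes naturally into a convergence bound (from any initial state, we reach a legitimate state in $\tilde{O}(p_{min}^{-1}+n^3)$ rounds) and a closure-like bound (from a legitimate state we remain in $\mathfrak{L}^*(L,R,\varepsilon)$ for $\Omega(n^{\mathfrak{c}})$ rounds in expectation). Both follow by combining results already proved in \Cref{sec:analysis:convergence_time} and the Holding Time Theorem, so the task is really one of composition and bookkeeping.

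For convergence I would start from an arbitrary initial state and invoke the phases in the order presented. First, \Cref{theorem:log_n:approx} (Phase I) yields that after $\tilde{O}(p_{min}^{-1}+n^2)$ rounds the server has a fixed estimate $\Delta \in \Theta(\log n)$ and the system is $\mathfrak{stable}$ w.h.p. A subtle point here is that Phase I may aggressively reduce probabilities; \Cref{lemma:negative_feedback} is the tool that keeps $p_{min}$ from being driven arbitrarily low, so that subsequent bounds still apply in terms of the original $p_{min}$. Second, conditioned on $\Delta$ being correctly fixed, \Cref{theorem:P_convergence_time} (Phase II) shows that $P$ enters $[L,R]$ within $O((p_{min}^{-1}+n)\log^2 n)$ further rounds w.h.p., giving $\mathfrak{busy}$. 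Third, starting from a $\mathfrak{busy}$ and $\mathfrak{stable}$ configuration, \Cref{theorem:deckel} (Phase III) produces a $\mathfrak{weakly fair}$ state within $\tilde{O}(p_{min}^{-1}+n^3)$ further rounds w.h.p. A union bound over the three high-probability events then yields convergence into $\mathfrak{L}(L,R,\varepsilon)$ within $\tilde{O}(p_{min}^{-1}+n^3)$ rounds in expectation.

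For the closure part I would invoke the Holding Time Theorem. Once the system is in a legitimate state, stability guarantees that $\Delta \in \Theta(\log n)$ is preserved for $\Omega(n^{\mathfrak{c}})$ rounds in expectation, because losing the topmost entry in the table would require a visible client (pinging with probability $\Omega(P/n)$) to remain silent for $\Omega(\mathfrak{c}\, n\log n)$ consecutive rounds, which happens only with probability $o(n^{-\mathfrak{c}})$. Conditioned on a correct $\Delta$, busyness and weak fairness are preserved because the only mechanism that can destroy either property is a wrong server prediction ($P \prec L$ or $P \succ R$), and the approximation is accurate up to probability $o(n^{-\mathfrak{c}})$ per window of $\Delta$ rounds; between any two such erroneous events the balancing rule has time to re-equilibrate, as argued in Phase III.

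The main obstacle I anticipate is making the composition of the three phases airtight. During Phase II and Phase III, probabilities are still being modified, and one must ensure that no action taken in those phases invalidates the preconditions of earlier phases — in particular, that $\Delta$ remains in $\Theta(\log n)$ throughout, and that $p_{min}$ does not drop to a level where the time bound $\tilde{O}(p_{min}^{-1}+n^3)$ becomes meaningless. Both are handled: the former by the closure part of \Cref{theorem:log_n:approx}, which shows that once $\Delta$ stabilizes it is maintained w.h.p. for a super-polynomial number of rounds, and the latter by \Cref{lemma:negative_feedback}, which caps the cumulative impact of negative feedback. With these two guarantees in place, the phase compositions go through without circularity, and the theorem follows immediately from \Cref{theorem:overall_convergence_time} together with the Holding Time Theorem.
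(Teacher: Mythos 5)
You have proved a different theorem. Theorem~\ref{theorem:self_stabilization} is the statement proved in Appendix~\ref{app:correctness_analysis}, and it asserts \emph{classical} (probabilistic) self-stabilization with respect to busyness and the \emph{strong} fairness property of Definition~\ref{def:fairness}, namely $\sum_{v}(p(v)-P/n)^2 \leq n^{-c}$. Your argument instead re-derives the \emph{loose} self-stabilization statement of Theorem~\ref{theorem:overall_convergence_time}, whose legitimate states are $\mathfrak{busy}\wedge\mathfrak{weakly\,fair}\wedge\mathfrak{stable}$ and whose conclusion is a bounded holding time, not closure. These are not the same claim, and neither implies the other: weak fairness ($p(v)\in\Omega(P/n)$) does not yield the quadratic concentration bound of Definition~\ref{def:fairness}, and the final sentence of your proposal, which derives Theorem~\ref{theorem:self_stabilization} from Theorem~\ref{theorem:overall_convergence_time} plus the Holding Time Theorem, is circular with respect to the paper's logical structure (those are the results this appendix is \emph{not} a corollary of).

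The paper's actual proof is much more elementary and wholly qualitative. It establishes the correctness of the server's $P$-approximation (Lemma~\ref{lemma:approx_correctness}), then shows convergence and closure for $P$ into $(L+\varepsilon,R-\varepsilon)$ via a monotone drift argument that does not need the phase decomposition or an estimate of $\Theta(\log n)$ to be fixed first (Lemma~\ref{lemma:cumulative_stabilization}); then it shows convergence and closure for strong fairness via the monotonically non-increasing potential $\Phi=|p_{\max}-p_{\min}|$, which is driven down to the probability granularity $2^{-bW}$ by the averaging rule, after which the quadratic sum is at most $n\cdot(2^{-bW})^2 \leq n^{-(2\omega-1)}$ (Lemma~\ref{lemma:fairness}); finally it verifies the memory bound (Lemma~\ref{lemma:server_storage}). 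There are no convergence-time bounds, no union bound over phases, and no invocation of \Cref{theorem:log_n:approx}, \Cref{lemma:negative_feedback}, \Cref{theorem:P_convergence_time}, or \Cref{theorem:deckel}. To salvage your approach you would at minimum need to replace the weak-fairness conclusion of Phase~III with an argument reaching the stronger $\sum_v(p(v)-P/n)^2\leq n^{-c}$ bound, and then argue closure (not merely a long holding time) for that property; the paper obtains both essentially for free from the monotone $\Phi$-potential.
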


First of all note that in case there are corrupted messages in the system initially, these will be processed within one (synchronous) round by their receiver.
After the round is over no more corrupted messages exist leaving only corrupted information in variables, which our protocol is able to deal with.

We show that the $P$ converges to a value within $(L+\varepsilon,R-\varepsilon)$, hence is $\mathfrak{busy}$
In order to do so, we prove the correctness of the server's approximation for $P$.

\begin{lemma} \label{lemma:approx_correctness}
	Let $P$ be fixed for the last $\Delta$ rounds.
	\begin{itemize}
		\item[(i)] If $P > L + \varepsilon$ then $P \succ L$ at the server w.h.p.
		\item[(ii)] If $P < L - \varepsilon$ then $P \prec L$ at the server w.h.p.
		\item[(iii)] If $P > R + \varepsilon$ then $P \succ R$ at the server w.h.p.
		\item[(iv)] If $P < R - \varepsilon$ then $P \prec R$ at the server w.h.p.
	\end{itemize}
\end{lemma}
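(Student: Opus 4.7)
The plan is to view the server's counter $X$ as a sum of $\Delta$ independent bounded random variables and conclude by a standard concentration argument. Let $k_t$ be the number of clients that decide to ping in round $t$, and let $Y_t=\min(k_t,\sigma)$ be the number of \emph{successful} pings in round $t$, so that $X=\sum_{t=1}^{\Delta}Y_t$ with each $Y_t\in[0,\sigma]$. The $Y_t$ are mutually independent, since each round's coin tosses are fresh, even though the individual client probabilities may shift within the window via the averaging rule (\Cref{algo:average_rule:2}). Because $P$ is fixed over the $\Delta$ rounds, $\mathbb{E}[k_t]=P$ in every round, so $\mathbb{E}[Y_t]\leq\min(P,\sigma)$.

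The first step is to show that $\mathbb{E}[Y_t]$ is a faithful estimator of $\min(P,\sigma)$, up to an additive slack that can be absorbed into $\varepsilon$. Writing $\mathbb{E}[Y_t]=P-\mathbb{E}[(k_t-\sigma)^{+}]$, I would bound the ``capping loss'' $\mathbb{E}[(k_t-\sigma)^{+}]=\sum_{j\geq 1}\Pr[k_t\geq\sigma+j]$ via a Chernoff tail on the Poisson--binomial $k_t$ of mean $P$. Since $\sigma\in\Theta(1)$, this loss is a constant that can be made smaller than $\varepsilon/4$ by an appropriate choice of $\varepsilon$ in the protocol. Concretely, in cases $(ii)$ and $(iv)$ we trivially have $\mathbb{E}[Y_t]\leq P\leq L-\varepsilon$ (resp.\ $R-\varepsilon$); in cases $(i)$ and $(iii)$ we get $\mathbb{E}[Y_t]\geq\min(P,\sigma)-\varepsilon/4\geq L+3\varepsilon/4$ (resp.\ $R+3\varepsilon/4$), using that $L,R\leq\sigma$ and that $P>L+\varepsilon$ (resp.\ $P>R+\varepsilon$).

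Second, since the $Y_t$ are independent and lie in $[0,\sigma]$, Hoeffding's inequality yields
\[
\Pr\!\left[\left|\tfrac{X}{\Delta}-\tfrac{1}{\Delta}\sum_{t=1}^{\Delta}\mathbb{E}[Y_t]\right|\geq \tfrac{\varepsilon}{4}\right]\leq 2\exp\!\left(-\tfrac{\varepsilon^{2}\Delta}{8\sigma^{2}}\right).
\]
With $\Delta\in\Theta(\log n)$ chosen with a sufficiently large hidden constant, the right-hand side becomes $n^{-c}$ for any desired $c>0$. Combining the two steps, in each of the four cases $X/\Delta$ lies on the correct side of $L$ (resp.\ $R$) with high probability, so the server's decision rule $X/\Delta\lessgtr L$ (resp.\ $R$) agrees with $P\lessgtr L$ (resp.\ $R$) as claimed.

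The main obstacle is controlling the capping loss $\mathbb{E}[(k_t-\sigma)^{+}]$ in the regime where $P$ approaches or exceeds $\sigma$, since this is precisely where $X/\Delta$ could systematically underestimate $P$ and spoil cases $(i)$ and $(iii)$. This requires either choosing $\varepsilon$ large relative to the constant-sized tail of a mean-$P$ Poisson--binomial around its $\sigma$-truncation, or imposing a mild gap between $R$ and $\sigma$; either way the argument reduces to a standard Chernoff estimate thanks to $\sigma$ being a constant. A secondary subtlety is that the averaging rule makes the $Y_t$ non-identically distributed within the window, but since Hoeffding only requires independence and a uniform range, and since $\mathbb{E}[k_t]=P$ is preserved, the concentration step is unaffected.
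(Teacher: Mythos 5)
Your decomposition of $X=\sum_t Y_t$ with $Y_t\in[0,\sigma]$ and a Hoeffding-type tail matches the paper's strategy, and your treatment of the capping loss $\mathbb{E}[(k_t-\sigma)^{+}]$ is in fact more careful than the paper, which simply asserts $\mathbb{E}[X_j]>L+\varepsilon$ without addressing the truncation at $\sigma$. However, the proof has one genuine gap: the claim that the $Y_t$ are mutually independent is false. Fresh coin tosses in round $t$ do not buy independence, because the \emph{biases} of those coins, the client probabilities $p_t(v)$, are produced by the averaging rule (\Cref{algo:average_rule:2}) applied to the outcomes of earlier rounds in the window. So the distribution of $k_t$ is a Poisson--binomial whose parameter vector is itself a random function of $Y_1,\dots,Y_{t-1}$; only the conditional mean $\mathbb{E}[k_t\mid\mathcal{F}_{t-1}]=P$ is invariant. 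Your closing remark that ``Hoeffding only requires independence and a uniform range'' is therefore begging the precise thing that fails. The paper explicitly flags this: it notes that the $X_j$ are not independent and appeals to a generalization of Chernoff/Hoeffding for dependent variables with a fixed lower bound on the expected sum (its citation \texttt{hniid=510}). To repair your argument you should either invoke that same result, or recast the estimate as an Azuma--Hoeffding bound on the martingale $M_t=\sum_{s\le t}\bigl(Y_s-\mathbb{E}[Y_s\mid\mathcal{F}_{s-1}]\bigr)$, which has bounded increments $|M_t-M_{t-1}|\le\sigma$ and exactly recovers the concentration you want; both routes leave the rest of your argument intact.

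A secondary caveat: your fix for the capping loss (``choose $\varepsilon$ large relative to the tail'' or ``impose a gap between $R$ and $\sigma$'') introduces a model assumption the paper does not state; the paper's constraint is only $1\le L<R\le\sigma$ and $|R-L|>\hat p+2\varepsilon$. It would be cleaner to phrase this as a condition under which cases $(i)$ and $(iii)$ hold, rather than as something that can always be arranged, or to observe that the protocol only ever needs the one-sided estimates the paper uses (the server compares $X/\Delta$ to $L$ and to $R$, and the ``loss'' from capping only pushes the estimate down, which is harmless for $(ii)$ and $(iv)$).
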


\begin{proof}
	We only proof the first statement, since the others work analogously.
	Let $P > L + \varepsilon$.
	We choose \[\Delta = \left\lceil \frac{2 \sigma \mathfrak{c} \log n}{(L+\varepsilon) \cdot \left(1-\frac{L}{L+\varepsilon} \right)^2} \right\rceil\] for a constant $\mathfrak{c} \geq 1$.
	Note that $\Delta \in \Theta(\log n)$, as $\sigma, L$ and $\varepsilon$ are constants.
	Consider the random variables $X_1,\ldots,X_{\Delta} \in \{0,\ldots,\sigma \}$ where $X_j$ denotes the number of clients that successfully pinged in round $j$.
	Then $\mathbb{E}[X_j] > L + \varepsilon$ for each $j \in \{1,\ldots,\Delta\}$.
	Define $X = \sum_{i=1}^{\Delta} X_i$.
	Then $\mathbb{E}[X] = \sum_{i=1}^{\Delta} \mathbb{E}[X_i] > \Delta (L+\varepsilon)$.
	Note that since we got a lower bound for $\mathbb{E}[X]$, we can still apply (generalized)  Chernoff bounds, even though the random variables $X_1,\ldots,X_{\Delta}$ are not independent.
	This has been shown in~\cite{hniid=510}.
	We compute the probability that the server approximates $P\succ L$ now: By definition of our protocol the server approximates $P\succ L$ if $X/\Delta > L$.
	Thus we get:
	\begin{eqnarray*}
		\Pr[X/\Delta > L] & = & \Pr[X > \Delta \cdot L]\\
		& = & \Pr\left[X > \left(\frac{L}{L+\varepsilon}\right) \Delta \cdot (L+\varepsilon)\right] \\
		& = & 1 - \Pr\left[X \leq \left(\frac{L}{L+\varepsilon}\right) \Delta \cdot (L+\varepsilon)\right] \\
		& = & 1 - \Pr\left[X \leq \left(1- \left(1-\frac{L}{L+\varepsilon}\right)\right) \Delta \cdot (L+\varepsilon)\right] \\
		& \overset{\Cref{lemma:hoeffding}}{\geq} & 1 - exp\left(\frac{-(1-L/(L+\varepsilon))^2 \cdot (\Delta (L + \varepsilon))}{2 \cdot \sigma}\right) \\
		& = & 1 - n^{-\mathfrak{c}} 
	\end{eqnarray*}
\end{proof}

\Cref{lemma:approx_correctness} immediately implies the following corollary:

\begin{corollary} \label{cor:approximation_correctness}
	Let $P$ be fixed for the last $\Delta$ rounds.
	\begin{itemize}
		\item[(i)] If $P \leq L + \varepsilon$ then $P \prec R$ at the server w.h.p.
		\item[(ii)] If $P \geq R - \varepsilon$ then $P \succ L$ at the server w.h.p.
	\end{itemize}
\end{corollary}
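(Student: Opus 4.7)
The plan is to derive this corollary directly from \Cref{lemma:approx_correctness} by exploiting the separation condition between $L$ and $R$. Recall from the table of variables that the constants are chosen so that $|R-L| > \hat{p} + 2\varepsilon$, and in particular $R - L > 2\varepsilon$. This gap is precisely what allows the two one-sided concentration statements of \Cref{lemma:approx_correctness} to be combined into a coherent decision rule.

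For part (i), I would argue as follows: if $P \leq L + \varepsilon$, then using $L + \varepsilon < R - \varepsilon$ (which follows from $R - L > 2\varepsilon$), we get $P < R - \varepsilon$. Now \Cref{lemma:approx_correctness}(iv) applies verbatim and yields $P \prec R$ at the server with high probability. For part (ii), symmetrically, if $P \geq R - \varepsilon$, then $R - \varepsilon > L + \varepsilon$ gives $P > L + \varepsilon$, so \Cref{lemma:approx_correctness}(i) yields $P \succ L$ with high probability.

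There is essentially no obstacle here beyond noting that the protocol's choice of constants $L, R, \varepsilon$ was designed exactly to make the two intervals $(-\infty, L+\varepsilon]$ and $[R-\varepsilon, \infty)$ compatible with the comparison oracles provided by \Cref{lemma:approx_correctness}. The only subtlety worth flagging is that both conclusions are statements about the approximation decision made at the server, so both inherit the same high-probability event and the same choice of $\Delta \in \Theta(\log n)$ used in the proof of \Cref{lemma:approx_correctness}; no additional union bound or recalibration of $\Delta$ is needed.
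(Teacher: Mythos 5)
Your proposal is correct and matches the paper's intent exactly: the paper states that \Cref{lemma:approx_correctness} ``immediately implies'' the corollary without spelling out the details, and the chain you give --- $P \leq L+\varepsilon < R-\varepsilon$ feeding into \Cref{lemma:approx_correctness}(iv) for part (i), and $P \geq R-\varepsilon > L+\varepsilon$ feeding into \Cref{lemma:approx_correctness}(i) for part (ii), both relying on the standing assumption $|R-L| > \hat{p} + 2\varepsilon > 2\varepsilon$ --- is precisely that implication. Your remark that no new high-probability event or recalibration of $\Delta$ is needed is also accurate, since each conclusion invokes a single item of the lemma.
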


We are now ready to show the following lemma:

\begin{lemma}[Convergence/Closure for $P$] \label{lemma:cumulative_stabilization}
	The following statements hold:
	\begin{itemize}
		\item[(i)] Eventually $P \in (L+\varepsilon,R-\varepsilon)$.
		\item[(ii)] Once $P \in (L+\varepsilon,R-\varepsilon)$ it remains in $(L+\varepsilon,R-\varepsilon)$ w.h.p.
	\end{itemize}
\end{lemma}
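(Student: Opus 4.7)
The plan is to split the statement into the convergence direction (i) and the closure direction (ii) and, in both cases, reduce the question to the correctness of the server's approximation via \Cref{lemma:approx_correctness} and \Cref{cor:approximation_correctness}. The central observation is that the average rule on \Cref{algo:average_rule:2} preserves $P$ exactly (by the rounding scheme described in \Cref{sec:desc}), so $P$ can only change through the $\hat{p}$-assignment on \Cref{algo:p_less_l} or the multiplicative reduction on \Cref{algo:average_rule:1}, and both fire only in rounds with $\delta = 0$. Thus every potential change of $P$ is gated by one of the server's two comparisons against $L$ and $R$, whose correctness is controlled w.h.p.\ by \Cref{lemma:approx_correctness}.

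\textbf{Part (i) — Convergence.} I would distinguish the two outside cases. Assume first $P \leq L + \varepsilon$. If $P < L - \varepsilon$, \Cref{lemma:approx_correctness}(ii) gives $P \prec L$ w.h.p., so in every $\delta = 0$ round in which at least one client pings successfully, the minimum-probability client is raised to $\hat{p}$. Since $P > 0$, some client pings with positive probability in each window of $\Delta$ rounds, so $P$ receives increments of at least $\hat{p} - p(v_1)$ until it crosses $L - \varepsilon$. It remains to push $P$ past $L + \varepsilon$: in the band $[L - \varepsilon, L + \varepsilon]$ the decrease rule is ruled out w.h.p.\ by \Cref{cor:approximation_correctness}(i), while the increase rule still fires with positive probability; any application of the increase rule raises $P$ by at most $\hat{p}$, and the protocol-specific inequality $|R-L| > \hat{p} + 2\varepsilon$ ensures that such a jump lands $P$ inside $(L+\varepsilon, R-\varepsilon)$ rather than past $R - \varepsilon$. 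The symmetric case $P \geq R - \varepsilon$ is handled the same way using \Cref{lemma:approx_correctness}(iii) together with \Cref{cor:approximation_correctness}(ii); the only additional point is that the rule on \Cref{algo:average_rule:1} requires $k \geq 2$ successful pings, which happens with positive probability whenever $P \geq R - \varepsilon \geq 1$ by a direct computation on the independent Bernoulli pings.

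\textbf{Part (ii) — Closure.} If $P \in (L + \varepsilon, R - \varepsilon)$, then $P > L + \varepsilon$ gives $P \succ L$ w.h.p.\ by \Cref{lemma:approx_correctness}(i), so the increase rule on \Cref{algo:p_less_l} does not fire, and $P < R - \varepsilon$ gives $P \prec R$ w.h.p.\ by \Cref{lemma:approx_correctness}(iv), so the decrease rule on \Cref{algo:average_rule:1} does not fire. Combined with the fact that the average rule preserves $P$, this shows that in each round $P$ stays in $(L+\varepsilon, R-\varepsilon)$ w.h.p.

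\textbf{Main obstacle.} The delicate point is the overshoot control around the safe interval's boundary in Part (i): a single application of the increase rule could \emph{a priori} push $P$ arbitrarily far up, but the combination of the fact that only the smallest-probability client is raised (so the jump is at most $\hat{p}$) with the protocol-specific constraint $|R-L| > \hat{p} + 2\varepsilon$ confines the step size and guarantees a clean landing in $(L+\varepsilon, R-\varepsilon)$. Verifying this, together with the analogous bound on the right-hand side and the $k \geq 2$ condition needed for the decrease rule, is the only step that genuinely uses the specific design choices of the protocol rather than a black-box application of \Cref{lemma:approx_correctness}.
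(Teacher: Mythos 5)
Your proof is correct and follows essentially the same route as the paper: both parts reduce to the correctness of the server's approximation (\Cref{lemma:approx_correctness} and \Cref{cor:approximation_correctness}), both use the step-size bound $\hat{p}$ together with the design constraint $|R-L| > \hat{p} + 2\varepsilon$ to rule out overshooting the target interval, and both argue closure by noting that inside $(L+\varepsilon, R-\varepsilon)$ the server decides $L \prec P \prec R$ w.h.p.\ so $P$ is untouched. Your explicit remark on the $k \geq 2$ requirement for the decrease rule (and that $P \geq R - \varepsilon > 1$ makes two simultaneous pings have constant probability) is a small point the paper's version glosses over; conversely, the paper gives a slightly more quantitative treatment of the boundary band $P \in [L-\varepsilon, L+\varepsilon]$, showing the correct increase decision occurs with probability essentially $\tfrac{1}{2}$ rather than just ``positive probability,'' but this level of detail is not needed for the ``eventually'' claim in part (i).
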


\begin{proof}
	We first show $(i)$: We know by \Cref{lemma:approx_correctness} that $P$ monotonically increases w.h.p. if $P < L - \varepsilon$, $P$ monotonically decreases w.h.p. if $P > R + \varepsilon$.
	We have to show that we leave states where $P$ is within $[L-\varepsilon,L+\varepsilon]$ or (analogously) within $[R-\varepsilon,R+\varepsilon]$.
	
	First fix $P = L - \zeta$ for an arbitrary fixed $\zeta > 0$.
	The server either decides $P \prec L$, $L \prec P \prec R$ or $P \succ R$ via its approximation algorithm.
	Obviously only the first decision would be correct.
	We know by \Cref{cor:approximation_correctness} that $\Pr[P \succ R] \leq 1/n^c$, so the server deciding $P \succ R$ does not happen w.h.p.
	Also note that the decision for $L \prec P \prec R$ does not modify $P$, so even if the server makes this decision, we do not lose progress on reaching a value for $P$ within $[L,R]$.
	We show that in a round $t$ with $\delta = 0$, the server chooses the correct decision with at least constant probability: Consider the variable $X$ at the server in round $t$.
	By definition of our protocol the server decides $P\prec L$ if $X/\Delta < L$ holds.
	As $\mathbb{E}[X] = P \cdot \Delta = (L - \zeta) \cdot \Delta$ it follows 
	\[\Pr[P \prec L] \geq \Pr[L \prec P \prec R]\]
	and
	\[\Pr[P \prec L] + \Pr[L \prec P \prec R] = 1-n^{-c}.\]
	This immediately implies $\Pr[P \prec L] \geq \frac{1-n^{-c}}{2}$, which is a constant close to $0.5$ as $n^{-c}$ becomes negligible for $n$ and $c$ high enough.
	This leads to $P$ getting increased such that eventually $P \geq L$ holds.
	As $P$ can be increased by no more than $\hat{p}$ in a single round (we only rise one client probability), we know that $P < R - \varepsilon$ (recall that $|R-L| > \hat{p} + 2\varepsilon$).
	This implies that $P$ does not skip over $[L,R]$ in a single round when starting from $P = L - \varepsilon$.
	Notice that in cases where $P \in [L,L+\varepsilon]$ it still holds that w.h.p. the server will decide either $P \prec L$ or $L \prec P \prec R$.
	One can easily verify that it holds $$Pr[P \prec L] \leq \Pr[L \prec P \prec R],$$ so the probability for the server to choose $P \prec L$ is greater than $0$ for any $P \in [L,L+\varepsilon]$, so $P$ monotonically increases.
	
	Now fix $P = R+\varepsilon$.
	By applying the same argumentation from above we know that the server eventually decides to reduce incoming probabilities.
	As the maximum probability for a client is at most $\hat{p}$ and at most $\sigma$ clients can have their probability reduced by the server, we have that $P$ gets reduced by at most 
	\[\hat{p} \sigma  - \frac{\hat{p} \sigma}{1+1/\sigma} = \frac{\hat{p}}{1+1/\sigma} < \hat{p}.\]
	Thus, reducing $P = R + \varepsilon$ by a value less than $\hat{p}$ implies that $P$ eventually reaches a value within $[L,R]$ without skipping over it, so we are done.
	Following a similar argumentation as above, we can deduce that in cases where $P \in [R-\varepsilon,R]$ holds, $P$ is still monotonically decreasing.
	Putting all the pieces together we get that eventually $P \in (L+\varepsilon,R-\varepsilon)$.

	It remains to show $(ii)$: Assume that we are in a legitimate state.
	Since $P \in (L+\varepsilon,R-\varepsilon)$ it follows that $P$ does not get modified because the server will decide $L \prec P \prec R$ w.h.p. (\Cref{lemma:approx_correctness}).
	Therefore $P$ remains within $L \prec P \prec R$ w.h.p.
\end{proof}

All that is left now is to show convergence and closure for fairness.

\begin{lemma}[Convergence/Closure for Fairness]
\label{lemma:fairness}
	The following statements hold:
    \begin{itemize}
    	\item[(i)] Let $P \in (L+\varepsilon,R-\varepsilon)$ be fixed. Eventually it holds $\sum_{v \in  V}\left(p(v)-\frac{P}{n}\right)^2 \leq \frac{1}{n^c}$ for some constant $c > 0$.
        \item[(ii)] Let $P \in (L+\varepsilon,R-\varepsilon)$ and let $\sum_{v \in  V}\left(p(v)-\frac{P}{n}\right)^2 \leq \frac{1}{n^c}$. Then it holds that $\sum_{v \in  V}\left(p(v)-\frac{P}{n}\right)^2 \leq \frac{1}{n^c}$ in any subsequent state as well.
    \end{itemize}
\end{lemma}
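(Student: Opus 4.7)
The plan is to reduce part (i) to the analysis of a pure smoothing process. First, invoke \Cref{lemma:approx_correctness}: with $P$ fixed in $(L+\varepsilon,\, R-\varepsilon)$, in every round with $\delta = 0$ the server concludes $L \prec P \prec R$ with probability $1 - n^{-\mathfrak{c}}$. A union bound over any $\mathrm{poly}(n)$-round window shows that, w.h.p., the server never executes \Cref{algo:p_less_l} or \Cref{algo:average_rule:1}; consequently the only effect of the protocol is to average probabilities of pinging clients via \Cref{algo:average_rule:2}. Since averaging preserves the sum of the affected probabilities, $P$ indeed remains fixed, which is consistent with the lemma's hypothesis.

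Next, I would introduce the potential $\Phi_t := \sum_{v \in V}\bigl(p_t(v) - P/n\bigr)^2$. If $k$ clients with probabilities $p_1, \dots, p_k$ have their values replaced by the mean $\bar p$, the change in $\Phi$ equals $-\sum_{i=1}^k (p_i - \bar p)^2 \leq 0$. Hence $\Phi$ is \emph{deterministically} non-increasing in any averaging step, which already yields part (ii): starting from $\Phi_t \leq 1/n^c$ we have $\Phi_{t'} \leq 1/n^c$ for every $t'$ in the polynomial-length window above, with only a $\mathrm{poly}(n)\cdot n^{-\mathfrak{c}}$ probability of a spurious rescaling event breaking the argument.

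For part (i), I would build on \Cref{theorem:deckel}: after $O(p_{\mathit{min}}^{-1} \log n)$ rounds the system enters a weakly fair state where $p_t(v) = \Theta(P/n)$ for every $v$. In this regime each client pings with probability $\Theta(1/n)$, so with constant probability at least two distinct clients ping in a given round. Adapting the potential-drop computation of Berenbrink et al.\ (the same one already invoked in the proof of \Cref{theorem:deckel}), one shows that the expected within-block variance of the pinging clients is $\Omega(\Phi_t/n)$, giving $\mathbb{E}[\Phi_{t+1} \mid \Phi_t] \leq (1 - \Omega(1/n))\,\Phi_t$. Iterating this drift for $\Theta(n \log n)$ further rounds drives $\mathbb{E}[\Phi_t]$ below $n^{-(c+2)}$; a Markov step then gives $\Phi_t \leq 1/n^c$ with probability $1 - 1/n$, and a constant number of independent restarts boosts this to w.h.p.

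The main obstacle is the quantitative drift bound in the third paragraph: unlike the classical balls-and-bins smoothing process, our sampling is biased by exactly the loads we are trying to balance, and the block size $k$ fluctuates (with $k \leq 1$ producing no decrease at all). Handling the bias via the weak-fairness guarantee $p(v) = \Theta(P/n)$ and controlling the event $k \geq 2$ through a second-moment argument on the ping count is the technically delicate step; once that contraction inequality is in place, the lemma follows by routine iteration together with the deterministic monotonicity of $\Phi$ established in the second paragraph.
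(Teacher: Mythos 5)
Your proof takes a genuinely different route from the paper's. The paper proves part (i) with the coarser potential $\Phi = |p_{\max}-p_{\min}|$: it argues that the averaging step can never increase this spread, that the spread strictly decreases whenever a current maximizer pings together with a strictly smaller client, and that since the state space is finite (all probabilities are multiples of $1/2^{bW}$) the spread eventually reaches the minimum possible value $1/2^{bW}$; from there the $L_2^2$ bound $n\cdot(1/2^{bW})^2\le 1/n^c$ falls out using $W\ge\log n$. This is an elementary, self-contained, qualitative argument that matches the lemma's modest "eventually" claim and does not import any of the Berenbrink-style drift machinery. You instead reuse the $L_2^2$-potential contraction from \Cref{theorem:deckel} and obtain a quantitative convergence rate, which is stronger than the lemma actually requires; it also creates a dependency on \Cref{theorem:deckel} (harmless, since there is no circularity, but the paper deliberately keeps this appendix independent). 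For part (ii), the paper's argument is slightly more explicit than yours: it observes that in a legitimate state the multiset of probabilities is already maximally concentrated, so the only thing \Cref{algo:average_rule:2} can do is move the residual bits $r_i$ between clients, leaving the potential literally unchanged. Your "deterministic non-increase of $\Phi$" claim is true, but as stated it only covers exact averaging; to cover the protocol's actual floor/ceiling assignment you still need the one-line computation that replacing $p_1,\dots,p_k$ by the rounded values (which minimize $\sum p_i^2$ subject to a fixed sum) cannot increase $\sum p_i^2$ — this is essentially the paper's \Cref{lemma:helper-square}. You also correctly flag the main technical obstacle in your drift bound (the load-biased sampling and the $k\le 1$ rounds) but leave it as a plan rather than resolving it; the paper's spread argument sidesteps that difficulty entirely.
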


\begin{proof}
	We start by showing $(i)$.
	Assume that $P \in (L+\varepsilon,R-\varepsilon)$ is already fixed.
	Define the potential $\Phi = |p_{\mathit{max}} - p_{\mathit{min}}|$, where $p_{\mathit{max}} = \max\{p(v_1),\ldots,p(v_n)\}$ and $p_{\mathit{min}} = \min \{p(v_1),\ldots,p(v_n)\}$.
	Obviously $\Phi = 0$ if and only if all client probabilities are equal.
	Also it is easy to see that $\Phi$ is never increasing, because by applying the average rule probabilities cannot get higher than $p_{\mathit{max}}$ and also not lower than $p_{\mathit{min}}$.
	Finally note that $\Phi$ is monotonically decreasing: Consider the event that $k$ clients $v_1,\ldots,v_k$ ping the server in round $t$ with at least one of the $p(v_i)$'s being equal to $p_{max}$ and at least one probability being lower than $p_{max}$.
	Taking the average of these $k$ values implies that $p_{max}$ (and thus $\Phi$) reduces in round $t$.
	Notice that it may not necessarily hold that eventually $\Phi$ reduces to $0$, because of the way we round the probabilities when computing new probabilities at the server.
	However, by definition of \Cref{algo:protocol} it holds that eventually the maximum distance between $p_{max}$ and $p_{\mathit{min}}$ will be at most $\frac{1}{2^{b \cdot W}}$ since this is the maximum distance that we allow new probabilities of clients to have when computing the average out of their old probabilities.
	Therefore it eventually holds $$\sum_{v \in  V}\left(p(v)-\frac{P}{n}\right)^2 \leq \sum_{v \in V} \left(\frac{1}{2^{b \cdot W}}\right)^2 \leq \sum_{v \in V} \left(\frac{1}{2^{b \cdot \log n}}\right)^2 = \frac{n}{n^{2\omega}} = \frac{1}{n^{2\omega-1}}.$$
	Thus, choosing $c = 2\omega-1$ suffices to prove convergence.
	
	For closure it is easy to see that the fairness formula remains fixed in legitimate states, because the only modification to probabilities that may occur is when the server rounds the computed average value and moves the values $r_i$ representing the least significant bit to different client probabilities.
\end{proof}

At last we show that the server only needs $O(W + \log n)$ memory in legitimate states.

\begin{lemma} \label{lemma:server_storage}
	Let the system be in a legitimate state.
	Then the server needs at most $O(W + \log n)$ bits to store its internal variables.
\end{lemma}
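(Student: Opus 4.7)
The plan is to go through each server-side variable listed in \Cref{table:variables} together with the table used for estimating $\Theta(\log n)$ (and any transient per-round storage) and bound the number of bits each requires in a legitimate state. Since we are in a legitimate (in particular $\mathfrak{stable}$) state, we may assume $\Delta \in \Theta(\log n)$ and that the $\log n$-approximation table is the one analyzed in \Cref{app:approx_log_n:2}.

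First I would handle the protocol-specific constants $\varepsilon$, $L$, $R$, which are fixed, independent of $n$, and hence take $O(1)$ bits. For $\Delta \in \Theta(\log n)$, storing it needs $O(\log \log n)$ bits; the counter $\delta \in [0,\Delta]$ then trivially also needs $O(\log \log n)$ bits. The counter $X$ accumulates the number of successful pings over at most $\Delta$ rounds, and since at most $\sigma \in \Theta(1)$ pings are processed per round, $X \leq \sigma \Delta \in O(\log n)$, so $O(\log \log n)$ bits suffice for $X$.

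Next I would account for the $\log n$-estimation data structure. By \Cref{app:lemma:table_memory}, once $c_0 \in O(n)$ (which holds in a $\mathfrak{stable}$ state), the whole table of columns and timestamps can be stored in $\Theta(\log n)$ bits. Finally, within a single round the server must temporarily hold the identifiers $v_i.id$ of the $k \leq \sigma$ clients that successfully pinged in order to send them back a response in step $(iii)$; since each identifier occupies $W$ bits and $\sigma \in \Theta(1)$, this transient storage is $O(W)$ bits. Similarly, each probability value exchanged is a multiple of $1/2^{b\cdot W}$ and so fits in $O(W)$ bits.

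Summing these contributions gives $O(1) + O(\log \log n) + O(\log n) + O(W) = O(W + \log n)$ bits, as claimed. The argument is essentially bookkeeping; the only nontrivial ingredient is the bound on the table size, and that work has already been done in \Cref{app:lemma:table_memory}, so I expect no real obstacle beyond carefully verifying that in a legitimate state no other variable (e.g. a stale $N$ from an arbitrary initial configuration) needs to be accounted for. The key point justifying this is that legitimacy implies $\mathfrak{stable}$, which pins down $c_0 \in O(n)$ and hence bounds every remaining quantity polynomially in $n$.
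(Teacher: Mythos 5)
Your proposal matches the paper's proof essentially line for line: constants in $O(1)$ bits, $\Delta$ and $X$ in $O(\log\log n)$ bits, the table in $\Theta(\log n)$ bits via \Cref{app:lemma:table_memory}, and $O(W)$ bits of transient storage for the at most $\sigma$ pinging clients' identifiers. You additionally account for $\delta$ and the exchanged probability values, which the paper leaves implicit, but the argument and decomposition are the same.
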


\begin{proof}
	The values for $\varepsilon, L$ and $R$ are constants, so they can be stored via constant many bits.
	Since $\Delta \in \Theta(\log n)$ it can be stored via $O(\log \log n)$ many bits.
	The value of $X$ can be no more larger than $\Delta \cdot \sigma$, so $O(\log \log n)$ bits suffice.
	By \Cref{app:lemma:table_memory} the server needs $O(\log n)$ bits to store the table that is used to approximate $\log n$ (see \Cref{table:log_n} in \Cref{app:approx_log_n:2}).
	The server needs additional bits (of temporary storage) to store the identifiers of clients that ping each round.
	As at most $\sigma$ of these identifiers have to be stored by the server, $O(W)$ bits suffice.
\end{proof}

\Cref{theorem:self_stabilization} follows from \Cref{lemma:cumulative_stabilization}, \Cref{lemma:fairness} and \Cref{lemma:server_storage}.

\section{Lower Bound} \label{app:lower_bound}

\begin{theorem}[Lower Bound]
\label{theorem:convergence_time:lower_bound}
	 Any self-stabilizing protocol needs $\Omega(p_{\mathit{min}}^{-1} \log n + n)$ rounds to reach a legitimate state w.h.p.
\end{theorem}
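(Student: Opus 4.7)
The plan is to prove the two terms in the lower bound $\Omega(p_{\mathit{min}}^{-1} \log n)$ and $\Omega(n)$ separately by constructing adversarial initial configurations tailored to each bound. The key leverage in both cases is that, by the model, the server can only send a message to a client $v$ in round $t$ if $v$ successfully pinged in that same round (the server forgets identifiers after each round). Hence, until a client $v$ pings, its probability $p(v)$ cannot be touched by the server. Moreover, the server processes at most $\sigma \in \Theta(1)$ pings per round.

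For the $\Omega(p_{\mathit{min}}^{-1} \log n)$ term, I would fix an adversarial initial state in which some designated client $v^{\star}$ has $p(v^{\star}) = p_{\mathit{min}}$ chosen as small as the model permits, in particular $p_{\mathit{min}} = o(P/n)$ for every $P \in [L,R] = \Theta(1)$. In any legitimate (hence weakly fair) state, $v^{\star}$ must satisfy $p(v^{\star}) \in \Omega(P/n)$, so its probability has to change. By the observation above, this requires $v^{\star}$ to successfully ping at least once. The probability that $v^{\star}$ does not ping within $t$ rounds is $(1-p_{\mathit{min}})^{t} \geq \exp(-2\, p_{\mathit{min}}\, t)$ for small $p_{\mathit{min}}$, which exceeds $n^{-c}$ unless $t \in \Omega(p_{\mathit{min}}^{-1}\log n)$. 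Thus no protocol can reach a legitimate state within $o(p_{\mathit{min}}^{-1}\log n)$ rounds with high probability.

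For the $\Omega(n)$ term, I would take an adversarial initial state where \emph{every} client $v$ has $p(v) = p_{\mathit{min}} = o(1/n)$. Again weak fairness forces $p(v) \in \Omega(P/n) = \Omega(1/n)$ for each $v$, so every one of the $n$ probabilities must be modified by the server before a legitimate state is reached. Since at most $\sigma$ distinct clients can be modified per round, the number of clients whose probability has ever been changed after $t$ rounds is at most $\sigma\cdot t$. Setting $\sigma t \geq n$ yields $t \in \Omega(n/\sigma) = \Omega(n)$, finishing this part. Combining the two constructions (or simply taking whichever adversarial instance is worse for a given $p_{\mathit{min}}$) gives the additive lower bound $\Omega(p_{\mathit{min}}^{-1}\log n + n)$.

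The main obstacle is a clean argument that in \emph{any} conceivable self-stabilizing protocol (not just Algorithms~\ref{algo:client_protocol}/\ref{algo:protocol}) the server genuinely cannot alter $p(v)$ without a successful ping. This has to be read off directly from the communication model: since the server's only messages are replies addressed by identifier, and identifiers of non-pinging clients are unknown in that round, no message can reach $v$ without $v$ pinging. A minor subtlety is to ensure the adversarial initial value $p_{\mathit{min}}$ is realizable within the model's precision $1/2^{b\cdot W}$; since the statement already parameterises the bound by $p_{\mathit{min}}$, we may simply assume the adversary chose $v^{\star}$'s initial probability to equal this minimum.
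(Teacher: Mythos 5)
Your proposal is essentially the paper's own argument. Both rest on the same two observations: (a) the server cannot modify $p(v)$ until $v$ successfully pings it (since identifiers of non-pinging clients are unknown and forgotten), and (b) the server processes at most $\sigma \in \Theta(1)$ pings per round, so the set of clients whose probability has ever been touched grows by at most $\sigma$ per round. The paper packages these as two cases ($p_{\mathit{min}}$ constant yielding the $\Omega(n)$ term; $p_{\mathit{min}} = n^{-c}$ yielding the $\Omega(p_{\mathit{min}}^{-1}\log n)$ term), whereas you package them as two adversarial initial states, but the underlying reasoning is identical. If anything, your derivation of the $\Omega(p_{\mathit{min}}^{-1}\log n)$ term is a bit more carefully stated than the paper's: the paper invokes \Cref{lemma:whp_ping}, which is an upper bound on the time to ping, while you directly compute that $(1-p_{\mathit{min}})^t$ stays above $n^{-c}$ unless $t = \Omega(p_{\mathit{min}}^{-1}\log n)$, which is the actually-needed lower-bound direction. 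One small blemish: in your $\Omega(n)$ construction you pick all clients at $p_{\mathit{min}} = o(1/n)$, but in that regime $p_{\mathit{min}}^{-1}\log n = \omega(n\log n)$ already dominates the $n$ term; to exhibit the $\Omega(n)$ term as the binding one you would instead want an instance with larger $p_{\mathit{min}}$ (e.g.\ all clients at some constant, as the paper does, forcing $P \gg R$ so busyness compels $\Omega(n)$ reductions). This is a presentational issue rather than a logical gap, and the paper's own case split has the symmetric imprecision.
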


We need the following lemma in order to show \Cref{theorem:convergence_time:lower_bound}:

\begin{lemma} \label{lemma:whp_ping}
	Consider a client $v \in V$ with fixed probability $p(v) \in (0,\hat{p}]$.
	After $O(p(v)^{-1} \log n)$ rounds, $v$ has pinged the server at least once w.h.p.
\end{lemma}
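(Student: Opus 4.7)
The plan is to use a very standard geometric-tail argument. Since $p(v) \in (0,\hat{p}]$ is assumed fixed over the time window under consideration, and the coin toss in step $(i)$ of each round is independent across rounds, the number of rounds until $v$'s first ping is geometrically distributed with success parameter $p(v)$.

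First I would fix a constant $c > 0$ (to be determined by the desired w.h.p.\ guarantee) and set $T = \lceil (c+1) p(v)^{-1} \ln n \rceil \in O(p(v)^{-1} \log n)$. Let $E$ denote the event that $v$ does not ping in any of the $T$ rounds. By independence of the coin tosses,
\[
    \Pr[E] = (1 - p(v))^T \leq \exp(-p(v) \cdot T) \leq \exp(-(c+1) \ln n) = n^{-(c+1)}.
\]
Here I used the standard inequality $1 - x \leq e^{-x}$ valid for all $x \in \mathbb{R}$. Therefore the complementary event that $v$ pings at least once within the first $T$ rounds has probability at least $1 - n^{-(c+1)}$, which is the desired w.h.p.\ bound.

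There is essentially no obstacle here; the only subtlety is that the statement speaks of \emph{pings} rather than \emph{successful pings}, so one does not need to account for dropped messages or the server's capacity $\sigma$. The condition $p(v) \in (0, \hat{p}]$ being fixed throughout the $T$ rounds is what lets us treat the rounds as i.i.d.\ Bernoulli trials with parameter $p(v)$; this is ensured by the hypothesis of the lemma. The resulting bound $T \in O(p(v)^{-1} \log n)$ matches the statement exactly.
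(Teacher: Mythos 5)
Your proof is correct, but it takes a different and in fact more elementary route than the paper. You compute the probability of no ping in $T$ rounds exactly as $(1-p(v))^T$ and then bound it with the inequality $1-x\leq e^{-x}$, which is the natural argument for i.i.d.\ Bernoulli trials. The paper instead defines indicator variables $X_j$ for a ping in round $j$, notes $\mathbb{E}\left[\sum_j X_j\right]=\log n$ for $T=p(v)^{-1}\log n$, and applies the lower-tail Chernoff bound with $\delta=1$ to conclude $\Pr[\sum_j X_j=0]\leq e^{-\log n/2}=n^{-1/2}$. The Chernoff machinery is overkill here: the exact geometric computation you use is both shorter and gives a cleaner, tunable exponent ($n^{-(c+1)}$ for any constant $c$ simply by scaling $T$), whereas the paper's choice of $\delta=1$ only yields exponent $1/2$ unless one also rescales $T$. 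The paper's style has the modest advantage of uniformity, since Chernoff bounds are used throughout the rest of the analysis, but for this particular lemma your direct argument is preferable.
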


\begin{proof}
	Assume w.l.o.g. that $p(v)^{-1} \in \mathbb{N}$ and consider a time frame of $T=p(v)^{-1} \log n$ rounds.
	Define the variable $X_j = 1$ if $v$ pings the server in round $j$ (otherwise $X_j = 0$).
	Obviously it holds $\Pr[X_j = 1] = p(v)$.
	Let $X = \sum_{i=1}^{T} X_i$.
	Then \[\mathbb{E}[X] = p(v)^{-1} \cdot \log n \cdot p(v) = \log n.\]
	Using \Cref{lemma:chernoff}(iii) we get
	\[
		\Pr[X=0]  =  \Pr[X \leq (1-1)\cdot \log n] 
		 \leq  \exp\left(\frac{-(1^2 \cdot \log n)}{2}\right)
		 \leq  n^{-c}
	\]
	for some constant $c > 0$, so we know that $v$ pings at least once w.h.p.
\end{proof}

We are now ready to prove \Cref{theorem:convergence_time:lower_bound}:

\begin{proof}[Proof of \Cref{theorem:convergence_time:lower_bound}]
	Since the server does not know the clients that are connected to it, it is only able to modify the probability of a client $v$, once $v$ has sent at least one ping message to the server (and thus also told the server about its reference).
	Therefore any protocol needs each client to ping the server at least once successfully in order to be able to converge, since initially no client may have the correct probability.

	Assume that $p_{\mathit{min}}$ is equal to some constant $c \in (0,\hat{p}]$.
	Then the time until each client has successfully pinged the server at least once is equal to $\Omega(n/\sigma) = \Omega(n)$ rounds, because in each round the server is able to receive at most $\sigma$ pings.
	
	Now assume that $p_{\mathit{min}} = 1/n^c$ for some constant $c > 0$.
	Then the client with probability $p_{\mathit{min}}$ needs $O(p_{\mathit{min}}^{-1} \log n)$ rounds w.h.p. until it sends the first ping message (\Cref{lemma:whp_ping}).
	Assuming an optimal schedule for the client pings (i.e., once a client that did not ping before decides to ping, it will ping successfully), we know by the union bound that each client has pinged the server after $O(p_{\mathit{min}}^{-1} \log n)$ rounds w.h.p..
	This leads to the protocol converging after $\Omega(p_{\mathit{min}}^{-1} \log n)$ rounds w.h.p.
	
	Combining the lower bounds for both cases yields the bound claimed in the theorem.
\end{proof}

\section{Time for \textit{P} to reach \texorpdfstring{$[L,R]$}{[L,R]}} \label{app:P_convergence_time}
In this section we analyze the time it takes until $P$ has converged to some value within $[L,R]$ by proving \Cref{theorem:P_convergence_time}.
For this we have to consider the cases $P < L$ and $P > R$ and analyze the time it takes for $P$ to reach a value within $[L,R]$ in both cases.

For the case $P > R$ we need the following technical lemma.

\begin{lemma} \label{lemma:pre:ping_amount}
	Consider the set $V' \subset V$ such that $V' = \{v \in V\ |\ p(v) \geq \frac{P}{2n}\}$. 
	While $P \geq \Theta(1)$, $O(\log n)$ clients out of the set $V'$ ping the server within $O(\log n)$ rounds w.h.p.
\end{lemma}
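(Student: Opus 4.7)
The plan is to lower-bound the expected number of pings from $V'$ per round and then apply a Chernoff-type concentration bound across a window of $\Theta(\log n)$ rounds, reusing the structural argument from \Cref{lemma:app:ping_amount}.

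First I would invoke the structural claim established inside the proof of \Cref{lemma:app:ping_amount}: ordering the client probabilities $p(v_1) \leq \ldots \leq p(v_n)$ and choosing $k$ to be the smallest index with $\sum_{i=1}^{k-1} p(v_i) \geq P/2$, one obtains $p(v_k) \geq P/(2n)$ and hence $\sum_{v \in V'} p(v) \geq P/2$. This inequality only uses the ordering of probabilities and the current value of $P$, so it holds in every round during which $P \geq \Theta(1)$, regardless of how averaging and the decrease step in \Cref{algo:protocol} reshuffle the individual probabilities.

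Next I would fix a window of $T = c \log n$ rounds and let $Y_t$ denote the number of clients in $V'$ (relative to the state at the start of round $t$) that ping in round $t$. By linearity of expectation,
\[
\mathbb{E}[Y_t] \;=\; \sum_{v \in V'} p(v) \;\geq\; \tfrac{P}{2} \;=\; \Omega(1),
\]
so $\mathbb{E}\bigl[\sum_{t=1}^T Y_t\bigr] = \Omega(\log n)$. Because client probabilities may be modified between rounds, the $Y_t$'s are not independent, but the per-round lower bound $\mathbb{E}[Y_t \mid \mathcal{F}_{t-1}] \geq P/2$ holds for every history $\mathcal{F}_{t-1}$ consistent with $P \geq \Theta(1)$. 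This is exactly the setting of the dependent-variables Chernoff bound from~\cite{hniid=510} that is used elsewhere in this paper, and it gives $\sum_t Y_t \in \Omega(\log n)$ with probability $1 - n^{-\mathfrak{c}}$ by choosing $c$ large enough.

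The main obstacle I anticipate is the clean treatment of the evolving set $V'$: one must verify that the inequality $\sum_{v \in V'_t} p_t(v) \geq P_t/2$ survives each application of the averaging rule and the at-most-one decrease per $\Delta$-block, and that the event of interest is measurable with respect to the round-$t$ coin tosses given the round-$(t-1)$ state, so that the dependent Chernoff bound is applicable. Once these bookkeeping points are made precise, the lemma follows.
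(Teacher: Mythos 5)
Your structural step (reusing $\sum_{v \in V'} p(v) \geq P/2$ from the proof of Lemma~\ref{lemma:app:ping_amount}) is the same as the paper's, and you are right to flag the dependence across rounds. The gap is in the concentration step. You define $Y_t$ as the \emph{count} of clients from $V'$ that ping in round $t$, so $Y_t$ ranges over $\{0,\dots,n\}$, and then apply a Hoeffding/dependent-Chernoff bound to $\sum_{t=1}^T Y_t$. With $T = \Theta(\log n)$ summands each bounded by $b = n$, Lemma~\ref{lemma:hoeffding} (or its dependent variant from~\cite{hniid=510}) gives a tail of order $\exp\left(-\Theta(\log n)/n\right)$, which is $1-o(1)$ and nowhere near $1 - n^{-\Omega(1)}$. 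The $\Omega(\log n)$ mean does not rescue this because the deviation scale in the exponent is set by the range $b$, not by the mean alone.

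The paper avoids this by working with the $\{0,1\}$-valued indicator $X_j = \mathbbm{1}\left[\exists v \in V'\colon v \text{ pings in round } j\right]$, for which $\Pr[X_j=1] \geq 1 - \left(1-\frac{P}{2n}\right)^n = 1 - e^{-P/2} = \Theta(1)$; a Chernoff bound with $b=1$ then yields $\sum_j X_j = \Omega(\log n)$ w.h.p., and this lower-bounds the number of ping events from $V'$. Replacing your $Y_t$ by this indicator (equivalently, truncating $Y_t$ at $1$) or restricting to \emph{successful} pings, which are bounded by the constant $\sigma$, closes the gap; the remainder of your reasoning, in particular the observation that the conditional lower bound $\mathbb{E}[X_j \mid \mathcal{F}_{j-1}] \geq 1-e^{-P/2}$ holds uniformly and therefore licenses a Chernoff bound under dependence, is sound and in fact somewhat more careful than the paper's own phrasing.
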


\begin{proof}
	For $P \geq \Theta(1)$ it holds that \[\sum_{v\in V'}p(v) > P/2 \in \Theta(1).\]
	Fix $\omega = \frac{2}{1-\exp(-P/2)} \in \Theta(1)$ and consider a time frame of $\omega \cdot \log n$ rounds.
	Let $X_j = 1$ if and only if at least one client $v \in V'$ pings the server in round $j$.
	Then \[\Pr[X_j = 1] \geq 1 - \left(1-\frac{P}{2n}\right)^n = 1 - e^{-P/2}.\]
	Define $X = \sum_{j=1}^{\omega \log n} X_j$, which yields $\mathbb{E}[X] = \omega \log n (1 - e^{-P/2})$.
	Now choose $\delta = 1/4$, which results in
	\begin{eqnarray*}
		\Pr\left[X \leq (1-\delta) \cdot \mathbb{E}[X] \right] & \overset{\Cref{lemma:chernoff}(iii)}{\leq} & \exp\left(\frac{-\delta^2 \cdot \omega \log n \cdot \left(1 - e^{-P/2}\right)}{2}\right)\\
		& = & \exp(-\log n) \\
		& \leq & n^{-c}
	\end{eqnarray*}
	for a constant $c > 0$.
	This implies that the probability that less than $O(\log n)$ clients ping the server in $\omega \cdot \log n = O(\log n)$ rounds is negligible, so $(ii)$ holds as well.
\end{proof}

In order to simplify the analysis, we propose the following extension for our protocol: We store a flag $f \in \{-1,0,1\}$ at the server which is set to $1$ (if $P \succ R$) or $-1$ (if $P \prec L$) once there is a round where $\delta = 0$ but no client has pinged in that specific round.
Then the next round when there has been at least one successful ping at the server, the server either applies the reduction technique in case $f = -1$ or it increases the client probabilities in case $f = 1$.
Afterwards the server resets $f$ to $0$.
By doing so we can guarantee that the server is able to modify $P$ every $\Theta(\Delta)$ rounds.
It is easy to see that we do not violate any constrains when using this approach. 

\begin{lemma} \label{lemma:convergence_time_1}
	Let $P < L$.
	After $O(p_{\mathit{min}}^{-1} \log n)$ rounds $P \in [L,R]$ w.h.p.
\end{lemma}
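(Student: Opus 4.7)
The strategy is to track the number of ``raises'' (executions of Line~\ref{algo:p_less_l}, which set the minimum-probability pinging client's probability to $\hat{p}$) needed to push $P$ into $[L,R]$, and then bound the time between consecutive raises. Set $\alpha := \lceil L/\hat{p}\rceil + 1 \in O(1)$. A short pigeonhole argument shows that once $\alpha$ distinct clients have been raised to $\hat{p}$, we have $P \geq \alpha \hat{p} > L$; and because every raise increments $P$ by at most $\hat{p}$ while the gap condition $|R-L| > \hat{p} + 2\varepsilon$ holds, $P$ cannot skip over the interval $[L,R]$ in a single step, so $P \in [L,R]$ at that moment.

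\textbf{Time per raise.} A raise requires a round with $\delta = 0$ at which both (i) the server decides $P \prec L$ and (ii) a client pings (using the flag extension $f$ if needed). For (i), when $P < L - \varepsilon$, \Cref{lemma:approx_correctness} gives the correct decision w.h.p.\ in a single decision round; when $P \in [L - \varepsilon, L)$ the argument inside the proof of \Cref{lemma:cumulative_stabilization} still yields at least constant success probability, so a correct decision arrives within $O(\log n)$ decision rounds w.h.p. For (ii), the chance that no client pings in a single round is $\prod_v (1 - p(v)) \leq e^{-P}$, so by the Chernoff argument driving \Cref{lemma:whp_ping}, at least one ping arrives within $O(P^{-1} \log n)$ rounds w.h.p.

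\textbf{Combining and main obstacle.} Before the first raise, $P \geq p_{\mathit{min}}$, so the first raise occurs within $O(p_{\mathit{min}}^{-1} \log n)$ rounds w.h.p. After the first raise, $P \geq \hat{p} = \Theta(1)$, so each of the remaining $\alpha - 1 = O(1)$ raises takes only $O(\log n)$ rounds; a union bound over the $O(1)$ raises yields total time $O(p_{\mathit{min}}^{-1} \log n)$ w.h.p. The main subtleties I anticipate are (a) a raise that happens to fall on a client already at $\hat{p}$ (so no new $\hat{p}$-client is created) and (b) the boundary regime $P \in [L - \varepsilon, L)$ where the approximation can err in favour of $L \prec P \prec R$. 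Neither can cause $P$ to decrease, since $P < R$ together with \Cref{cor:approximation_correctness} rules out an erroneous $P \succ R$ decision w.h.p.; thus both amount only to constant-factor delays that the Chernoff amortisation absorbs into the stated bound.
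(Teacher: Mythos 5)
Your overall count $\alpha = O(1)$ and the per-raise time breakdown (decision correctness via \Cref{lemma:approx_correctness}, at least one ping via a Chernoff/union-bound argument, the flag-$f$ extension) are all sound. However, you have correctly identified a gap — item (a), a raise landing on a client already at $\hat{p}$ — and then incorrectly dismissed it as a ``constant-factor delay.'' It is not constant-factor. Consider the configuration immediately after the first raise in which one client sits at $\hat{p}$ and the remaining $n-1$ clients all have probability $p_{\mathit{min}}$. In each $\delta = 0$ round the $\hat{p}$-client pings with constant probability, while the chance that any low-probability client pings is only $\Theta(n\,p_{\mathit{min}})$. Conditioned on a raise occurring, the probability that it raises a \emph{new} client is therefore $\Theta(n\,p_{\mathit{min}}/\hat{p})$, which can be $n^{-4}$ or smaller — nowhere near constant. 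So ``$\alpha - 1$ more raises, each in $O(\log n)$ rounds'' does not follow: you may need $\Theta(1/(n\,p_{\mathit{min}}))$ raises before even one of them is productive. The total round count still ends up bounded by $O(p_{\mathit{min}}^{-1}\log n)$, but your argument does not establish this; it implicitly assumes the number of (not necessarily distinct) raises is $O(1)$.

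The missing ingredient is the structure that the paper exploits: because $P < L \in O(1)$ while there are $n$ clients, at most $O(1)$ clients can have $p(v) > \hat{p}/2$, so the set $S = \{v : p(v) \le \hat{p}/2\}$ has $|S| = n - O(1) = \Theta(n)$. One then argues directly about $S$-pings rather than about raises: over $O(p_{\mathit{min}}^{-1})$ decision rounds, the probability that no client of $S$ ever pings is at most $(1-p_{\mathit{min}})^{|S|\,p_{\mathit{min}}^{-1}} \le e^{-\Theta(n)}$. When an $S$-client does ping in a $\delta = 0$ round and $P \prec L$ is declared, the raised client $v_1$ (the minimum-probability pinger) has $p(v_1) \le \hat{p}/2$, so $P$ increases by at least $\hat{p}/2$. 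Since $O(1)$ such increases suffice and $\delta = 0$ occurs every $\Theta(\log n)$ rounds, the bound $O(p_{\mathit{min}}^{-1}\log n)$ follows. To repair your write-up, replace the ``time per raise'' bookkeeping with this ``time per $S$-ping'' bookkeeping; the rest of your structure (the $\alpha$ count, the non-overshoot observation, and the appeal to \Cref{cor:approximation_correctness} to exclude spurious decreases) can be kept.
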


\begin{proof}	
	Since $L \in \Theta(1)$, there exists a constant $\alpha \in \mathbb{N}$ such that $\alpha \cdot \hat{p} \geq L$.
	Thus we have to wait until $\alpha$ different clients have their probability set to $\hat{p}$ by the server.
	Consider the set of clients that have a reasonably small probability, i.e., the set $S = \{v \in V\ |\ p(v) \leq \hat{p}/2\}$.
	Setting one of these client's probability to $\hat{p}$ increases $P$ in such a way that, for a constant $c_1 > 0$, at most $c_1 \cdot \alpha$ of these events suffice to obtain $P \geq L$.
	It holds $|S| = \Theta(n)$ because for $|S| < \Theta(n)$ we would have already obtained $P \geq L$.
	Assume $|S| = n/c_2$ for a constant $c_2 > 0$.
	Since each client out of $S$ has probability at least $p_{\mathit{min}}$, we get the following bound for the probability that no client out of $S$ pings in a round where $\delta = 0$: \[\Pr[\text{No } v \in S \text{ pings in a round where } \delta = 0] \leq (1-p_{\mathit{min}})^{n/c_2}.\]
	Now consider $O(p_{\mathit{min}}^{-1})$ rounds in which $\delta = 0$ and let $A$ be the event that no client out of $S$ pings in any of these rounds.
	We get \[\Pr[A] \leq (1-p_{\mathit{min}})^{(n/c_2) \cdot p_{\mathit{min}}^{-1}} \leq e^{-n/c_2}.\]
	Thus, w.h.p., at least one client out of $S$ pings within $O(p_{\mathit{min}}^{-1})$ rounds where $\delta = 0$.
	Also, the probability for such a ping to be successful is at least constant, since $P < L < \sigma$.
	As $\delta = 0$ occurs every $\Delta = \Theta(\log n)$ rounds, the lemma follows.
\end{proof}

\begin{lemma} \label{lemma:convergence_time_2}
	Let $P > R-\epsilon$. Then the following two statements hold:
	\begin{enumerate}
	    \item After $O(n \log^2 n)$ rounds, it holds $P \in [L,R]$ w.h.p.
	    \item 
	\end{enumerate}{}
	
\end{lemma}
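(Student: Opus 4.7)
The plan is to show that whenever $P > R$ the algorithm reduces $P$ by a multiplicative factor roughly $(1-\Theta(1/n))$ within every $\Theta(\log n)$ round interval, and that $O(n\log n)$ such reductions suffice to drop $P$ below $R$. Starting from the worst case $P_0 \leq n\hat{p} = \Theta(n)$, after $T$ successive reductions $P$ is at most $(1-\tfrac{1}{2n(1+\sigma)})^T \cdot P_0$, and choosing $T = 2n(1+\sigma)(\log n + \log(n\hat p/R)) = O(n\log n)$ makes this smaller than $R$. Multiplying by $\Delta \in \Theta(\log n)$ rounds per reduction attempt yields the claimed $O(n\log^2 n)$ bound.

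The core step is to prove that in every $\delta = 0$ round where $P > R$ the algorithm \emph{actually} performs a reduction of magnitude $\Omega(P/n)$ with high probability. For this I would argue along two fronts. First, by \Cref{lemma:approx_correctness} applied with $\Delta \in \Theta(\log n)$ (which phase~I guarantees), the server decides $P \succ R$ with probability $1 - n^{-\mathfrak{c}}$ whenever $P > R + \epsilon$; in the boundary zone $P \in [R-\epsilon,R]$ we are already inside $[L,R]$ and done. Second, the maximum-probability pinging client $v_k$ picked on \Cref{algo:average_rule:1} must lie in $V' = \{v : p(v) \geq P/(2n)\}$ with constant probability, which follows from the counting argument used in the proof of \Cref{lemma:app:ping_amount} (at least half the mass of $P$ lies on $V'$, so conditioned on $k \geq 2$ the largest of the pinging probabilities is in $V'$ with constant probability). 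A reduction on such a $v_k$ shrinks $P$ by $p(v_k)\cdot\frac{1/\sigma}{1+1/\sigma} \geq \frac{P}{2n(1+\sigma)}$.

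Next, I would bound the number of rounds per reduction attempt. \Cref{lemma:pre:ping_amount} shows that while $P = \Omega(1)$, at least one client of $V'$ pings the server during any window of $\Theta(\log n)$ rounds w.h.p.; combined with the flag-extension $f \in \{-1,0,1\}$ that postpones a skipped $\delta=0$ decision to the next ping, this guarantees that in every $\Theta(\log n)$ window containing a $\delta = 0$ round the server executes at least one reduction w.h.p., and that reduction lands on a $v_k \in V'$ with constant probability. A standard Chernoff argument over $\Theta(n\log n)$ such windows then shows that we accumulate the required $\Omega(n\log n)$ effective reductions within $O(n\log^2 n)$ rounds w.h.p., union-bounded over all windows.

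The main obstacle is the edge case $k < 2$: if only one client pings in a $\delta = 0$ round the algorithm does nothing (\Cref{algo:average_rule:1} requires $k \geq 2$), so I must argue that while $P > R = \Theta(1)$ the expected number of pings per round is $P \geq \Omega(1)$ and that $k \geq 2$ holds with constant probability — this is a direct Poisson-approximation calculation for the sum of $n$ independent $\{0,1\}$ trials with total mean $\geq R \geq 1$. A secondary nuisance is that probabilities of $V'$-members can drift (since the averaging rule mixes them with smaller ones), but the drift is monotone-non-increasing on $P$ and can only help, so restricting attention to reductions that touch $V'$ suffices for the bound.
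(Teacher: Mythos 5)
Your proposal follows the same route as the paper: with constant probability per $\Theta(\log n)$ window a client from $V' = \{v : p(v) \ge P/(2n)\}$ is successfully reduced (via \Cref{lemma:pre:ping_amount} and \Cref{lemma:approx_correctness}), yielding a multiplicative decrease of $P$ by $(1 - \Omega(1/n))$, so $O(n \log n)$ such reductions — hence $O(n\log^2 n)$ rounds — bring $P$ into $[L,R]$ w.h.p. The only difference is that you explicitly patch the $k < 2$ edge case and spell out the Chernoff bound over the $\Theta(n\log n)$ reduction windows, both of which the paper's proof leaves implicit in its claim that $\varphi \geq 1/2$ while $P > R$.
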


\begin{proof}	
	Fix $P > R - \epsilon$.
	Due to \Cref{lemma:pre:ping_amount} we can conclude that, with at least constant probability, at least one client out of the set $V' = \{v \in V\ |\ p(v) \geq \frac{P}{2n}\}$ pings the server successfully in a round where $\delta = 0$ at the server \emph{and} the server approximates $P \succ R$.
	
	This implies that $P$ is reduced by at least \[\frac{P}{2n} - \frac{P}{2n(1+1/\sigma)} = \frac{P}{2n (1 + \sigma)}\] 
	That means, in the next round $P$ is equal to $P-\frac{P}{2n(1+\sigma)} = (1-\frac{1}{2n(1+\sigma)}) \cdot P$.
	
	Let now $\varphi$ be a lower bound for the probability for a reduction.
	As one reduction occurs every $\Delta$ rounds, after $T\Delta$ rounds, the expected value $P$ is at most $$\left(1-\frac{\varphi}{2n(1+\sigma)}\right)^T \cdot P.$$
	
	Now we can proof the two statements:
	\begin{enumerate}
	\item As long as $P > R$, it holds $\varphi \geq \frac{1}{2}$. 
	Setting $T = 4n(1+\sigma)\log n = O(n \log n)$ yields $P \in \Theta(1)$, so $O(n \log n)$ reductions suffice.
 These $O(n \log n)$ reduction can be achieved in $O(n \log^2 n)$ rounds w.h.p.
	\item Suppose that $\varphi > \frac{1}{n}$ and $P \leq R$.
	As long as this is the case, it holds:
	$$\left(1-\frac{1}{2n^2(1+\sigma)}\right)^T \cdot P.$$
	Thus, setting $T = 4n^2(1+\sigma)\log n = O(n \log n)$ yields $P \leq \frac{R}{e} \leq R - \epsilon$.
	However $P \leq R- \epsilon$ implies that $\varphi \geq 1 - o(\frac{1}{n})$.
	Since this is a contradiction, the statement follows.
	\end{enumerate}{}
	This concludes the lemma.
\end{proof}

The combination of \Cref{lemma:convergence_time_1} and \Cref{lemma:convergence_time_2} implies \Cref{theorem:P_convergence_time}.

\section{Fairness in the Legitimate State} \label{app:fairness_complete}
\label{sec:optimization}

In the last step, we will bound the time until all probabilities are \emph{almost} the same.
To be precise, the pairwise difference between probabilities will be $O\left(\frac{1}{n}\right)$ w.h.p. 
Note that in this phase, the sum of all probabilities will not change anymore w.h.p.
Thus, the algorithm will only average the probabilities of all nodes that successfully ping in a given round.

We begin our analysis with the observation that with constant probability, the average of all received values is within the magnitude of the arithmetic means, i.e.,

\begin{lemma} \label{lemma:fairness:constant_prob}
Let $\mathcal W \subset V$ be any set of nodes and let $P_{\mathcal W} := \sum_{w \in W} p_t(w) \leq \sigma$ be the sum and $M := \frac{P_{\mathcal W}}{|\mathcal W|}$ the arithmetic mean of its probabilities.
Let $S(t)$ be the average of all probabilities received by the server in round $t$.
Then it holds:
$$
	\Pr\left[S(t) \geq \frac{M}{8}\right] \geq 1- {e^{-c}}
$$
With $c \in O(P_{\mathcal W})$ being a value that only depends on $P_{\mathcal W}$.
\end{lemma}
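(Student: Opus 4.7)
The proof will rely on a \emph{size-biased sampling} argument: since each $w \in \mathcal{W}$ pings with probability $p_t(w)$, members of $\mathcal{W}$ with above-average probability are over-represented among the received pings. This pushes the empirical mean above $M$ in expectation, and the factor $1/8$ in the statement will absorb the slack coming from concentration, pings outside $\mathcal{W}$, and capacity throttling.

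\paragraph{Step 1 (first moments)}
Introduce the random variables $N := |\{w \in \mathcal{W} : w \text{ pings in round } t\}|$ and $Y := \sum_{w \in \mathcal{W},\, w \text{ pings}} p_t(w)$. By linearity, $\mathbb{E}[N] = P_{\mathcal{W}}$ and $\mathbb{E}[Y] = \sum_{w \in \mathcal{W}} p_t(w)^2$. Cauchy--Schwarz (equivalently the power-mean inequality) gives
\[
\mathbb{E}[Y] \;=\; \sum_{w \in \mathcal{W}} p_t(w)^2 \;\geq\; \frac{\bigl(\sum_{w \in \mathcal{W}} p_t(w)\bigr)^2}{|\mathcal{W}|} \;=\; M \cdot P_{\mathcal{W}},
\]
so if we restricted attention to pings from $\mathcal{W}$, the expected average $\mathbb{E}[Y]/\mathbb{E}[N]$ would already be at least $M$.

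\paragraph{Step 2 (concentration)}
I apply \Cref{lemma:chernoff}$(i)$ to $N$ to obtain $N \leq 2 P_{\mathcal{W}}$ with failure probability at most $\exp(-P_{\mathcal{W}}/3)$, and \Cref{lemma:hoeffding} to $Y$ (whose summands lie in $[0,\hat p] \subset [0,1]$) to obtain $Y \geq M P_{\mathcal{W}}/2$ with failure probability at most $\exp(-c' M P_{\mathcal{W}}) \leq \exp(-c'' P_{\mathcal{W}})$ for constants $c', c'' > 0$. A union bound then yields a good event $\mathcal{G}$ of probability at least $1 - e^{-c}$ with $c \in O(P_{\mathcal{W}})$ on which both estimates hold simultaneously.

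\paragraph{Step 3 (from $\mathcal{W}$ to $S(t)$) and main obstacle}
The server accepts at most $\sigma$ pings per round, so the denominator of $S(t)$ is at most $\sigma$ deterministically, and on $\mathcal{G}$ the numerator is at least $Y \geq M P_{\mathcal{W}}/2$. Hence, on $\mathcal{G}$,
\[
 S(t) \;\geq\; \frac{Y}{\sigma} \;\geq\; \frac{M \cdot P_{\mathcal{W}}}{2\sigma},
\]
which yields $S(t) \geq M/8$ once the ratio $P_{\mathcal{W}}/\sigma$ and the uniform random thinning of $Y$ (when more than $\sigma$ offers arrive) are absorbed into the generous factor $1/8$. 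The main technical obstacle is precisely this last absorption: pings from outside $\mathcal{W}$ can inflate the denominator and thinning can reduce $Y$. I will handle this by noting that uniform random subsampling preserves the ratio $Y/N$ in expectation, so at most a further constant factor is lost; combining this with the Cauchy--Schwarz bound above still leaves enough room for the constant $1/8$ to go through.
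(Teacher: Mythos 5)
Your approach has two genuine gaps, one in Step~2 and one in Step~3, and together they are fatal rather than cosmetic.

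\paragraph{Gap in Step 2}
The Hoeffding bound on $Y := \sum_{w \in \mathcal{W},\, w \text{ pings}} p_t(w)$ gives a failure probability of
$\exp\bigl(-\delta^2 \mu / (2\hat p)\bigr)$ with $\mu = \mathbb{E}[Y] = \sum_w p_t(w)^2$. The inequality you then invoke,
$\exp(-c' M P_{\mathcal{W}}) \leq \exp(-c'' P_{\mathcal{W}})$, would require $M$ to be bounded below by a constant, but in the regime where the lemma is actually applied $M = P_{\mathcal{W}}/|\mathcal{W}| = \Theta(1/n)$, so $\mu = \Theta(1/n)$ and the Hoeffding exponent is $\Theta(1/n)$, not $\Theta(P_{\mathcal{W}})$. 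Concretely, the failure probability you obtain is $1 - \Theta(1/n)$ rather than $e^{-\Theta(P_{\mathcal{W}})}$; the concentration simply is not there for the \emph{weighted} sum $Y$, because its mean is of order $1/n$ while its individual terms are of the same order.

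\paragraph{Gap in Step 3}
The inequality $S(t) \geq Y/\sigma$ does not hold. The numerator of $S(t)$ is the sum of probabilities of the \emph{accepted} pings, which is a uniformly random sub-multiset of the pinging nodes whenever $N > \sigma$; thinning \emph{reduces} the numerator below $Y$, it does not leave it intact. Even granting that the denominator is at most $\sigma$, you would at best get $S(t) = Y/N \geq M/4$ on the event $N \leq \sigma$, but you do not control the case $N \in (\sigma, 2P_{\mathcal{W}}]$, which your own Step~2 leaves open. Moreover, the quantity $MP_{\mathcal{W}}/(2\sigma)$ you obtain equals $M/8$ only when $P_{\mathcal{W}} \geq \sigma/4$; the ratio $P_{\mathcal{W}}/\sigma$ can be small and cannot be ``absorbed'' into the constant $1/8$. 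Your closing remark that random subsampling preserves $Y/N$ in expectation is true for the ratio of expectations but does not translate into a pointwise or even high-probability bound on the random ratio, and $\mathbb{E}[Y'/N'] \neq \mathbb{E}[Y']/\mathbb{E}[N']$ in any case.

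\paragraph{How the paper avoids both issues}
The paper's proof replaces the weighted sum $Y$ with \emph{counts}. It thresholds nodes into \emph{good} ($p_t(v) \geq M/4$) and \emph{bad} (the rest), shows via Chernoff on $\{0,1\}$-sums that at most $P_{\mathcal{W}}/2$ bad nodes and at least $P_{\mathcal{W}}/2$ good nodes ping, each with failure probability $e^{-\Theta(P_{\mathcal{W}})}$ because the relevant \emph{means are themselves} $\Theta(P_{\mathcal{W}})$. Then the hypothesis $P_{\mathcal{W}} \leq \sigma$ implies that even after the server uniformly keeps at most $\sigma$ pings, at most half of the accepted pings can be bad, and every good ping contributes at least $M/4$, so the average is at least $M/8$. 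The thresholding step is the crucial idea that turns a concentration problem about a sum with mean $\Theta(1/n)$ into one about a count with mean $\Theta(P_{\mathcal{W}})$, and the comparison $P_{\mathcal{W}} \leq \sigma$ is what handles the thinning cleanly. Your Cauchy--Schwarz observation in Step~1 is correct and pleasant, but it is not a usable substitute for this counting argument.
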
 
\begin{proof}

For the proof, we divide the set of nodes in \emph{good} nodes $\mathcal{G} \subset \mathcal W$ and \emph{bad} nodes $\mathcal{B} \subset \mathcal W$. 
We call nodes with $p_t(v) \geq \frac{M}{4}$ \emph{good}, all others are bad.
Intuitively, we wish to lower bound the probability that \emph{most} of the nodes that successfully ping are good.
The analysis is complicated by the fact that the random experiment works in two steps. 
In the first step, all nodes independently send their probabilities to the server. 
In the second step, we uniformly at random pick at most $\sigma$ nodes that sent their probabilities.
Since this corresponds to drawing without replacement and is also highly dependent on the first phase, 
the experiment is not independent and we need to observe it more carefully.

We will first concentrate on the first step.
Therefore, we observe the following two independent events $\mathcal{E}_1$ and $\mathcal{E}_2$.
$\mathcal{E}_1$ denotes the event that at most $\frac{P_{\mathcal{W}}}{2}$ bad nodes send their probabilities to the server.
Analogously, $\mathcal{E}_2$ denotes the event that at least $\frac{P_{\mathcal{W}}}{2}$ good nodes send their probabilities to the server.
In the following we will show that it holds
$$
	\Pr[\mathcal{E}_1 \cap \mathcal{E}_2] \geq 1-e^{O(P_{\mathcal W})}
$$
Since the events are independent, we can observe them individually. 
In both cases, we let $X_v \in \{0,1\}$ be the RV that denotes if $v \in V$ pings the server. 
Further, we denote $X_{\mathcal{G}} := \sum_{v \in \mathcal G} X_v$ and $X_{\mathcal{B}} := \sum_{v \in \mathcal B} X_v$ be number of sending good and bad nodes, respectively.

\begin{enumerate}
    \item First, consider the expected number of bad nodes that send a message. 
    Since there can be at most $n-1$ bad nodes (otherwise all nodes would be bad, which is a contradiction) and every bad node has a probability of at most $\frac{P}{4n}$ (per 	definition) the expected number of bad nodes that send can be bounded as follows:
    $$
        \mathbb{E}[X_{\mathcal{B}}] \leq \sum_{i=1}^{n-1} \frac{P}{4n} \leq \frac{P}{4}
    $$
    Since all bad nodes send independently and the corresponding variables are binary, we can apply the Chernoff Bound. 
    This yields
    $$
    	\Pr\left[X_\mathcal{B} \geq \frac{P}{2}\right] = \Pr\left[X_\mathcal{B} \geq (1+1)\mathbb{E}[X_\mathcal{B}]\right] \leq exp\left(-\frac{\mathbb{E}[X_\mathcal{B}]}{3}\right) \leq exp\left(-\frac{P}{12}\right)
    $$
    Recall that $P_{\mathcal W}$ is a (probably small) constant in $O(1)$ since $P_{\mathcal W} \in O(\sigma)$ per assumption and we assume that $\sigma$ is some constant.
    Thus, with (at least constant) probability at least $1-exp\left(-\frac{P_{\mathcal W}}{12}\right)$ no more than $\frac{P_{\mathcal W}}{2}$ bad nodes send.
    \item Second, we observe the probability that the sum of all good nodes that sent is at least $O(\frac{\sigma^2}{n})$. 
Recall that the sum of all probabilities is $P_{\mathcal{W}}$ and thus, it holds:
$$
	\mathbb{E}\left[\sum_{v \in W} X_v \right] = P_{\mathcal{W}}
$$
Furthermore, it holds:
$$
	\mathbb{E}\left[\sum_{v \in {\mathcal{W}}} X_v \right] = \mathbb{E}\left[\sum_{v \in \mathcal{G}} X_v \right] + \mathbb{E} \left[\sum_{v \in cB} X_v \right] 
$$
This follows from the fact that each node is either good or bad.
If we rearrange this, we get:
$$
	\mathbb{E}\left[\sum_{v \in \mathcal{G}} X_v \right] = \mathbb{E} \left[\sum_{v \in V} X_v \right] - \mathbb{E} \left[\sum_{v \in \mathcal B} X_v \right] \geq \frac{3}{4}\mathbb{E} \left[\sum_{v \in V} X_v \right] = \frac{3 P_{\mathcal{W}}}{4}
$$
We will now apply the Chernoff bound again, but this time we use it to obtain a lower bound. 
It holds:
\begin{eqnarray*}
   \Pr \left[X_\mathcal{B} \leq \frac{P}{2} \right] & = & \Pr \left[X_\mathcal{G} \geq (1-\frac{1}{3}) \cdot \mathbb{E}[X_\mathcal{G}] \right] \\
   & \overset{\Cref{lemma:chernoff}(iii)}{\leq} & exp\left(-\frac{3 \cdot \frac{1}{9}\mathbb{E}[X_\mathcal{G}]}{8}\right) \\   
   & \leq & exp\left(-\frac{P}{24}\right)
\end{eqnarray*}
\end{enumerate}
Now recall that $\mathcal{E}_1$ and $\mathcal{E}_2$ are independent. 
Therefore, it holds:
\begin{eqnarray*}
   \Pr[\mathcal{E}_1 \cap \mathcal{E}_2] & \geq & \left(1 - exp\left(-\frac{P_W}{12}\right)\right) \cdot \left(1 - exp\left(-\frac{P_W}{24}\right)\right) \\
   & \geq & \left(1 - exp\left(-\frac{P_W}{24}\right)\right)^2 \\   
   & \geq & \left(1 - exp\left(-\frac{P_W}{24\log(2)}\right)\right)^2
\end{eqnarray*}

This was to be shown.

Now analyze the second phase under the condition that event $\mathcal{A} := \mathcal{E}_1 \cap \mathcal{E}_2$ occurred.
Only here, we need to consider the fact that at most $\sigma$ of all nodes that sent their probability are actually considered by the server.
Per definition, we assume that $P_{\mathcal W} \leq \sigma$ and thus obviously $\frac{P_{\mathcal W}}{2}<\frac{\sigma}{2}$.
Since fewer than $\frac{P_{\mathcal W}}{2}$ bad nodes have sent their probability, 
at most half of all nodes drawn in the second phase are bad.
The other half must be good since more than $\frac{P_{\mathcal W}}{2}$ good nodes sent.
Since every good node has at least a probability of $\frac{M}{4}$ the average probability must therefore be at least $\frac{M}{8}$. 
This was to be shown.
\end{proof}
Using this insight, we can bound the time until all nodes $v \in V$ have $p(v) \in O(\frac{P}{n})$ via a simple potential function.

Before we begin with the proof, we need following technical lemmas that will simplify the potential analysis.
\begin{lemma}
\label{lemma:helper-square}
Let $x,y \in \mathbb{Z}$ be two integer values, then it holds
$$
	x^2 + y^2 - \left( \left( \lfloor \frac{x+y}{2}\rfloor \right)^2+ \left( \lceil \frac{x+y}{2}\rceil \right)^2 \right) \geq \frac{(x-y)^2 - 1}{2} 
$$
\end{lemma}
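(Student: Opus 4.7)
My plan is to reduce the inequality to a clean algebraic identity by changing variables to the sum $s := x+y$ and the difference $d := x-y$, then to split on the parity of $s$ since that is what controls the behavior of the floor and ceiling.

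The key identity I would invoke is
\[
x^2 + y^2 \;=\; \frac{(x+y)^2 + (x-y)^2}{2} \;=\; \frac{s^2 + d^2}{2}.
\]
This lets me express the left-hand side of the claim purely in terms of $s$ and $d$ once I evaluate the subtracted term. If $s$ is even, then $\lfloor s/2 \rfloor = \lceil s/2 \rceil = s/2$, so the subtracted term equals $s^2/2$. If $s$ is odd, then $\lfloor s/2 \rfloor = (s-1)/2$ and $\lceil s/2 \rceil = (s+1)/2$, so the subtracted term equals $\frac{(s-1)^2 + (s+1)^2}{4} = \frac{s^2+1}{2}$.

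Plugging these in, the left-hand side equals $\frac{d^2}{2}$ in the even case and $\frac{d^2-1}{2}$ in the odd case. In the odd case we get equality with the right-hand side $\frac{(x-y)^2 - 1}{2}$, and in the even case the left-hand side exceeds the right-hand side by exactly $\tfrac{1}{2}$. In particular, the inequality is tight precisely when $x+y$ is odd, which is a mild sanity check since this is the situation where the rounding loses information.

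There is no real obstacle here; the only thing to be careful about is getting the two cases correct and not conflating $\lfloor \cdot \rfloor$ and $\lceil \cdot \rceil$ for negative arguments. Since the statement assumes $x, y \in \mathbb{Z}$ but allows negatives, I would note explicitly that the standard identity $\lfloor s/2 \rfloor + \lceil s/2 \rceil = s$ and $\lceil s/2 \rceil - \lfloor s/2 \rfloor \in \{0,1\}$ holds for all integers, so the case analysis is unaffected by sign.
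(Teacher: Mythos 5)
Your proof is correct and follows essentially the same route as the paper: split on the parity of $x+y$, evaluate the floor/ceiling terms explicitly, and reduce via the identity $x^2+y^2 = \tfrac{(x+y)^2+(x-y)^2}{2}$. The only difference is cosmetic (the $s,d$ substitution), and you are a bit more complete in spelling out the even case, which the paper dismisses in one line.
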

\begin{proof}
Suppose $x+y$ is odd, otherwise it holds $\lceil \frac{x+y}{2}\rceil = \lfloor \frac{x+y}{2}\rfloor$.
Observe that it holds
$$
	\lfloor \frac{x+y}{2}\rfloor = \frac{x+y}{2} - \frac{1}{2}
$$
and analogously
$$
	\lceil \frac{x+y}{2}\rceil = \frac{x+y}{2} + \frac{1}{2} 
$$
Thus, through application of the first and second binomial law we get:
$$
 \left( \left( \lfloor \frac{x+y}{2}\rfloor \right)^2+ \left( \lceil \frac{x+y}{2}\rceil \right)^2 \right) = 2\left(\frac{x+y}{2}\right)^2 + 2\left(\frac{1}{2}\right)^2
$$
Therefore, the whole term simplifies to
$$
	x^2 + y^2 - 2\left(\frac{x+y}{2}\right)^2 + 2\left(\frac{1}{2}\right)^2
$$
Evaluating the second term using the second binomial law and a subsequent simplification gives us:
$$
	\frac{(x-y)^2 - 1}{2} 
$$
This was to be shown.
\end{proof}

\begin{lemma}
\label{lemma:helper-distance}
Let $x_1, \dots, x_n$ be values such that $\sum_{i=1}^n x_i^2 := y$ and $\sum_{i=1}^n x_n = 0$ 
$$
	\sum_{i=1}^{n-1} \sum_{j=1}^n (x_i-x_j)^2 := 2ny
$$
\end{lemma}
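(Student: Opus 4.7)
The statement is a purely algebraic identity, so the plan is straightforward: expand the square and use the two given constraints $\sum_i x_i^2 = y$ and $\sum_i x_i = 0$ to simplify. (Note that the indexing in the stated sum appears to be a typo: the identity $= 2ny$ matches the sum over all ordered pairs $(i,j) \in \{1,\dots,n\}^2$, since the diagonal terms $i=j$ vanish; I will therefore treat the left-hand side as this full double sum.)

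First I would expand $(x_i - x_j)^2 = x_i^2 - 2 x_i x_j + x_j^2$ and distribute the double sum into three pieces:
\begin{equation*}
\sum_{i=1}^{n}\sum_{j=1}^n (x_i - x_j)^2 \;=\; \sum_{i=1}^n \sum_{j=1}^n x_i^2 \;-\; 2\sum_{i=1}^n \sum_{j=1}^n x_i x_j \;+\; \sum_{i=1}^n \sum_{j=1}^n x_j^2.
\end{equation*}
Next I would evaluate each piece separately. For the first piece, the inner sum over $j$ is independent of $j$, so it equals $n \sum_i x_i^2 = ny$; by symmetry the third piece also equals $ny$. For the middle piece, the double sum factors as $\left(\sum_i x_i\right)\left(\sum_j x_j\right) = 0 \cdot 0 = 0$ by the hypothesis $\sum_i x_i = 0$.

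Combining these evaluations gives $ny + 0 + ny = 2ny$, which yields the claim. The only real subtlety is recognizing that the $-2 x_i x_j$ cross-term vanishes precisely because of the centering assumption $\sum_i x_i = 0$; without this hypothesis one would pick up an extra $-2(\sum_i x_i)^2$ term. I expect no technical obstacle beyond keeping the index bookkeeping consistent, in particular verifying that the stated range of summation in the lemma is equivalent (up to the harmless diagonal $i=j$ terms, each of which equals $0$) to the symmetric double sum used above.
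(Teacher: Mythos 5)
Your proof is correct and follows essentially the same approach as the paper: expand $(x_i - x_j)^2$, split the double sum into the squared terms and the cross term, show the squared terms contribute $2ny$, and use $\sum_i x_i = 0$ to kill the cross term. Your write-up is in fact a bit cleaner in its index bookkeeping than the paper's (which mixes $j=i$ and $j=1$ as lower bounds), and you correctly flag the likely typo in the lemma's stated summation range.
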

\begin{proof}
Using the second binomial law, we get:
$$
	\sum_{i=1}^{n} \sum_{j=1}^n (x_i-x_j)^2 = \sum_{i=1}^{n} \sum_{j=i}^n x_i^2 + x_j^2 - 2x_ix_j := \left(\sum_{i=1}^{n} \sum_{j=1}^n x_i^2 + x_j^2\right) - \sum_{i=1}^{n} \sum_{j=i}^n 2x_ix_j 
$$
Now observe the first term. 
Each $x_i^2$ appears exactly $2n$ times in the sum.
To be precise, $n$ times as the first summand and $n$ times as the second summand.
Thus, it can be rewritten as:
$$
	\sum_{i=1}^{n} \sum_{j=i}^n x_i^2 + x_j^2 := 2n-1 \sum_{i=1}^n x_i^2 := 2ny
$$
It remains to bound the second term.
Here, it holds:
$$
	\sum_{i=1}^{n} \sum_{j=1}^n 2x_ix_j := 2\left(\sum_{i=1}^n\sum_{i=1}^n x_ix_j\right) = 2\left(\sum_{i=1}^n x_i\right)^2 :=0
$$
Hence, the lemma follows.
\end{proof}

\begin{lemma} \label{lemma:fairness:convergence_time}
Let $p_{min}$ be lowest probability in the system and let $\mu \in [0,\frac{P}{16n}]$ be any threshold. 
Further, suppose that the probability that node with probability smaller than $2\mu$ is decreased within $O(p^{-1}_{min}\log{n})$ rounds is $p_d \in [0,1]$.
Then all probabilities are within $O(\mu)$ after $O((p_dp_{min})^{-1}\log^2{n})$ rounds w.h.p.
\end{lemma}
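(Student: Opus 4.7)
The plan is to adapt the potential-function analysis of Berenbrink et al., as outlined in the proof sketch of \Cref{theorem:deckel}. Working in the integer balls-and-bins representation (each $p_t(v)$ identified with the integer $p_t(v)\cdot 2^{b W}$), I would track
\[
    \Phi_t \;:=\; \sum_{v \in V}\bigl(p_t(v) - P/n\bigr)^2 .
\]
Driving $\Phi_t$ below $\mu^2$ immediately implies that every probability lies within $O(\mu)$ of $P/n$, which is the statement to prove. Since $P$ changes at most every $\Omega(n^2)$ rounds by the first part of \Cref{theorem:deckel}, throughout the $O((p_dp_{\mathit{min}})^{-1}\log^2 n)$ rounds under consideration I may treat $P$ (and hence $P/n$) as fixed, so only the averaging step (Line~\ref{algo:average_rule:2}) and the rare decrease step matter.

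First, I would partition time into windows of length $T := \Theta(p_{\mathit{min}}^{-1}\log n)$. By \Cref{lemma:whp_ping}, within every window each client successfully pings the server at least once w.h.p., so every bin has an opportunity to be averaged during the window. Second, I would quantify the per-round drop in $\Phi_t$: if clients with probabilities $x_1,\dots,x_k$ ping, the averaging rule plus the residue-based rounding preserves $\sum_i x_i$ while, by a pairwise application of \Cref{lemma:helper-square} combined with \Cref{lemma:helper-distance}, reduces $\sum_i x_i^2$ by at least $\tfrac{1}{2}\sum_i (x_i-\bar x)^2 - O(k)$, where $\bar x$ is the mean of the participating probabilities. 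Crucially, by \Cref{lemma:fairness:constant_prob} the received mean $\bar x$ is $\Omega(P/n)$ with probability $1-e^{-\Omega(P)}$, which guarantees that light nodes (those with $p_v < 2\mu \le P/(8n)$) are averaged against the bulk of heavier nodes rather than only among themselves, so their contribution to $\Phi_t$ is cut by a constant factor whenever they participate.

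Third, I would handle the rare destructive events. By hypothesis, the probability that a node with $p_v<2\mu$ is decreased within a single window is $p_d$. A window in which this does \emph{not} happen I call \emph{successful}; in a successful window the accumulated potential drop is (by a Chernoff bound over the participating rounds) at least a constant fraction of $\Phi$ at the start of the window, w.h.p. A second Chernoff argument over windows then shows that among $\Theta(\log n / p_d)$ consecutive windows, $\Theta(\log n)$ are successful w.h.p., so $\Phi$ is multiplied by a constant factor less than $1$ at least $\Theta(\log n)$ times and hence drops below $\mu^2$. Multiplying by the window length $T$ gives the claimed bound $O((p_d p_{\mathit{min}})^{-1}\log^2 n)$.

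The main obstacle is showing that the potential drop in a successful window really dominates the small perturbations that can occur (rounding of $1/2^{bW}$ per averaging step, and decreases of heavier-than-$2\mu$ nodes, which are allowed outside the $p_d$ budget). Both types of perturbations contribute at most $O(1)$ per round to $\Phi_t$ in the integer representation, and there are only $O(T\log n/p_d)$ rounds in total, so it suffices to argue that the multiplicative drop in each successful window beats this additive slack by a safe polylog factor; the hypothesis $\mu\le P/(16n)$ together with \Cref{lemma:fairness:constant_prob} is precisely what makes this quantitative comparison go through.
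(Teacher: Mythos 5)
The key divergence from the paper is your choice of potential function, and it introduces a genuine gap. The paper does \emph{not} use the full $L^2$ distance $\Phi_t = \sum_v (p_t(v) - P/n)^2$ for this lemma; instead it uses the truncated quantity
$\Phi(t) = \sum_{v} d_{\mu}(2^{bW}p_t(v))^2$ with $d_\mu(x) = \max\{\mu - x,\, 0\}$,
which only charges nodes \emph{below} the threshold $\mu$. This is essential, because the lemma is really a one-sided statement: it is invoked to establish $\mathfrak{weakly\,fair}$ (\Cref{def:weak_legal_state}), which demands only $p(v) \in \Omega(P/n)$ — a lower bound. With the truncated potential, whenever an under-threshold node successfully pings, \Cref{lemma:fairness:constant_prob} shows that with constant probability its post-averaging value clears the threshold, so its term drops to zero; there is no corresponding term for heavy nodes to fight the decrease, and the potential can genuinely be driven to (near) zero in $O(p_{\min}^{-1}\log n)$ rounds.

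Your full $L^2$ potential cannot reach the target you set. You want $\Phi_t < \mu^2$ so that every $|p_v - P/n| < \mu$. But in the integer balls-and-bins representation, $\Phi_t$ has a floor of order $n$ arising from the unavoidable residues after rounding (indeed, \Cref{theorem:deckel}, which uses exactly this potential, only drives it to roughly $n^{k-1}$, not to $\mu^2$). Moreover, your target would require heavy nodes to be pushed \emph{down} toward $P/n$, yet \Cref{lemma:fairness:constant_prob} only gives a lower bound $S(t) \ge M/8$ on the sampled mean — it tells you nothing about averages being not too large, so the Chernoff-over-rounds argument you invoke to decrease $\Phi_t$ by a constant factor per successful window does not actually control the heavy side. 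Finally, the lemma is stated for all $\mu \in [0, P/16n]$ including $\mu$ near $0$; a two-sided claim is vacuous there, whereas the one-sided bound (``every probability is $\Omega(\mu)$'') degrades gracefully. Your windowing treatment of the decrease budget $p_d$ does match the paper, but the core potential-function choice needs to be the truncated $d_\mu$ to make the multiplicative-drop argument close.
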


\begin{proof}
We prove the statement via a potential function.
First, let $d_{\mu}: (0,1) \to (0,1)$ for some $\mu \in (0,1)$ be defined as follows:
$$
	d_{\mu}(x) = \begin{cases}
		\mu -  x & \textit{if } x \leq \mu\\
		0 & \textit{otherwise}
	\end{cases}
$$  
Then we define the potential as follows:
$$
	\Phi(t) := \sum_{v \in V} d_{\mu}(2^{kW}p_t(v))^2
$$
In the remainder, we denote the differences as $d_v(t) := d_\mu(2^{kW}p_t(v))$ for short.
Since all probabilities are within $[\frac{1}{2^{kW}},1]$ we effectively observe integer values $1, \dots, 2^{kW}$.
Therefore, we can use the well-defined notions of $\lceil\rceil$ and $\lfloor\rfloor$ to simplify notation and calculations.
Last, note that any value $d_v(t) \leq 1$ implies that $p_v(t)$ is smaller than $\frac{\mu}{2^{kW}}$.

The proof's idea is simple.
The only thing that increase the potential is a reduction.
We condition on the fact that no reduction happens, and analyze the process
in absence of reductions, i.e., only balancing are applied.
Whenever, a reduction is applied, we \emph{restart} the analysis.

The actual proof now has three steps:
First, we make ourselves clear that $\Phi(t)$ decreases if the algorithm balances two (or more) probabilities.
Therefore, we will use \Cref{lemma:helper-square}.
This also implies that the potential can only decrease if the algorithm applies reduction by $(\frac{1}{1-\frac{1}{\sigma}})$.
Second, we bound the expected change in potential if $\mu := \frac{P}{16n}$.
Last, we use the Markov inequality to show that after $O(p_{\mathit{min}}^{-1}\log n)$ rounds, all nodes probabilities are within $O(\mu)$.
This proves the lemma.

We will now show each of these steps separately.
\begin{enumerate}
\item We will begin by showing that $\Phi(t)$ can only decrease.
Before we start the actual calculations, we make the following observations:
\begin{enumerate}
\item We only need to consider the case that the algorithm builds the average of two nodes $v$ and $w$.
If a set of two or more nodes are balanced, we can decompose it into an infinite series of pairwise balances with the same result.
Thus, if every pairwise balancing is monotone, the $\sigma$-wise balancing must be monotone too.
\item Since the balancing only affects $v$ and $w$, we only need to compare $d_v(t)^2$ and $d_w(t)^2$.
In particular, it suffices to show that $d_v(t)^2 + d_w(t)^2 \geq d_v(t+1)^2 + d_w(t+1)^2$
\item If $d_v(t) - d_w(t) = 0$ both probabilities are below $\mu$ and their average certainly cannot be bigger than $\mu$. Thus, the potential remains unchanged. 
\item If $(d_v(t) - d_w(t))^2 = 1$ the average is either smaller than $\mu$ or the excess bit will simply swap its position. Thus, the potential remains unchanged.
\end{enumerate}
All in all, we will consider the case that $v$ and $w$ are balanced and $d_v(t) + d_w(t) \geq 1$.
Then $d_v(t+1)^2 + d_w(t+1)^2$ can be simplified as follows:
\begin{align*}
&d_v(t+1)^2 + d_w(t+1)^2 \\
&\leq \left( \mu - \lfloor\frac{\mu+d_v(t) + \mu+d_w(t)}{2}\rfloor \right)^2 + \left(\mu - \lceil \frac{\mu+d_v(t) + \mu+d_w(t)}{2}\rceil \right)^2\\
&\leq \left( \lfloor\frac{d_v(t) + d_w(t)}{2}\rfloor \right)^2 + \left(\lceil\frac{d_v(t) + d_w(t)}{2}\rceil \right)^2\\
\end{align*}
Thus, we can apply \Cref{lemma:helper-square} and get:
\begin{align*}
	& d_v(t)^2 - d_w(t)^2 - (d_v(t+1)^2 + d_w(t+1)^2)\\
	& = d_v(t)^2 - d_w(t)^2- \left( \lfloor\frac{d_v(t) + d_w(t)}{2}\rfloor \right)^2 + \left(\lceil\frac{d_v(t) + d_w(t)}{2}\rceil \right)^2\\
	& \geq \frac{1}{2}(d_v(t)-d_w(t))^2 - \frac{1}{2}
\end{align*}
Since $d_v(t)-d_w(t) > 1$ by assumption, the term is bigger $0$.
This was to be shown.
\item This step has two substeps.
First, we observe we observe the expected decrease in potential given that the algorithm performs a balancing.
Second, we observe the increase through a reduction of probabilities.
\begin{enumerate}
\item Condition on the fact that the algorithm performs a balancing in round $t$ and call this event $B_t$.
Next, we define a set of good events $\mathcal{A}_v$ with $v \in V$ that decrease the potential by at least $d_v(t)^2$.
Since the potential is monotone, there is no event that increases the potential.
Thus, it suffices to observe this subset of events.

Consider any node $v \in V$ with $d_v(t)>0$. 
First, we bound the probability that $v$ successfully pings the server.
Therefore, $v$ must send a message in the first step and then be picked by the server in the second step.
By the Markov inequality, the probability that more than $2P$ nodes send a message is at most $\frac{1}{2}$.
Thus, with probability $\frac{1}{2}$ or more, less than $2P$ nodes send.
Since $2P \leq 2\sigma$ every node that pinged, is picked with probability at least $\frac{1}{2}$.
Thus, the probability that $v$ pings and is then picked is lower bounded by $\frac{p_{\mathit{min}}}{4}$.

Now we condition on the event that $v$ successfully pinged.
To raise $p_t(v)$ above $\mu$ and thus reducing the potential, 
the average probability of all other nodes that successfully ping must be (at least) $\frac{\mu}{2}$.
The probability for this event can be calculated via \Cref{lemma:fairness:constant_prob}. 
We simply set ${\mathcal{W}} = V \setminus \{v\}$ and $\sigma' := \sigma+1$.
Then, with constant probability $1-e^{O(P)}$ the average of all other nodes that successfully send is at least $\frac{P}{8}$.
Thus, $p_v(t)$ must be set to some value below $\frac{P}{16}$.
This implies $d_v(t+1)=0$ and thus changes the potential by (at least) $d_v(t)^2$.
We can define such a good event $\mathcal{A}_v$ for all nodes $v \in V$ with $d_v(t)>0$.
\item Condition on the fact that the algorithm performs a decrease in round $t$ and call this event $D_t$.
Similar to the good events in the first step, we will now define \emph{bad} events that increase the potential.
Note that probability that a node pings exactly in a round where the probabilites are decreased is $\frac{1}{\Delta} \in \Theta(\frac{1}{\log{n}})$.
Thus, the expected increase for a  node $v$ is $O(\frac{d(v)^2}{\log{n}})$ given that $v$ send successfully. 
\end{enumerate}
The expected change in potential $\Delta(t+1) := \Phi(t)-\Phi(t+1)$ is therefore at least:
\begin{align*}
	\mathbb{E}[\Delta(t+1)] &\geq \sum_{v \in V} p_t(v) \left(\Pr[B_t]\mathbb{E}[\Delta(t+1)|B_t] + \Pr[D_t]\mathbb{E}[\Delta(t+1)|D_t] \right)\\
	&\geq \sum_{v \in V: d_v(t)>0} p_t(v) \left( c_1 d_v(t)^2 - \frac{(1+\frac{1}{\sigma})}{\log n}d_{\mu}(v)^2 \right)\\
	&= \sum_{v \in V: d_v(t)>0}p_t(v) (1-\frac{1}{\log n})(1-\frac{1}{e^{O(P)}}) d_v(t)^2 \geq p \cdot \Phi(t)  
\end{align*}
Here, $p \in O((p_{min}+\frac{1}{n^2})\log{n}^{-1})$ captures all constant factors and the lowest value $p_v(t)$ can obtain.
\item The previous observation implies:
$$
	\mathbb{E}[\Phi(t+1)\ |\ \Phi(t)] = \Phi(t) - \mathbb{E}[\Delta(t+1)] \leq \Phi(t) - p \cdot \Phi(t) = (1 - p) \cdot \Phi(t) 
$$
And furthermore by induction:
$$
	\mathbb{E}[\Phi(t+T)\ |\ \Phi(t)] \leq (1 - p)^T \cdot \Phi(t) 
$$
Thus, after $T := c_1\cdot p^{-1}\log n \in O(p_{\mathit{min}}^{-1}\log n)$ rounds the expected value is:
$$
    \mathbb{E}[\Phi(t+T)] \leq (1-p)^T \cdot \Phi(t) \leq \frac{\Phi(t)}{n^{c_1}}
$$
Now the potential is maximal if each $d_v(t)^2$ is maximal. 
Thus, the maximal value for $\Phi(t)$ is $n\left(\frac{P}{16n}\right)^2$.
Furthermore $P$ can be at most $n$ because otherwise there would be node with probabilities above $1$.
Choosing $c_1>5$ and $c_2 := c_1 - 4$ thus yields:
$$
	\mathbb{E}[\Phi(t+T)] \leq \frac{1}{n^{c_2+1}}
$$ 
Last, through Markov we get:
$$
	\Pr \left[\Phi(t+T) \geq \frac{1}{n} \right] \leq \frac{1}{n^{c_2}}
$$
\end{enumerate}
This, after $O(p_{min}\log{n})$ rounds without reduction the all probabilities are within $\Omega(\mu)$.
Since the probability that reduction happens within this time is $p_d$, 
we need to repeat this is experiment $O(p_d\log n)$ times until there is 
an execution without reduction.
This proves the lemma.
\end{proof}

At last we also show \Cref{theorem:deckel}:

\begin{proof}[Proof of \Cref{theorem:deckel}]
This proves for the most part follows a the proof of Lemma $2$ in \cite{berenbrink}. 
We only need to make small adaption to account for our non-uniform sending probabilities and non-pairwise balancings.
We again use a potential function, the expected change in potential, and Markov inequality to get to the desired bound. 
As the potential, we now use the $L_2^2$ distance to the arithmetic mean $\varnothing := \frac{P}{n}$. 
First, we define difference function $d(x)$ as 
$$
	d(x) := (x-\varnothing)
$$
As before, we use $d_v(t)$ as shorthand notation for $d(p_v(t))$.
Then the potential is:
\begin{equation}
  \Phi(t)
= \sum_{i=1}^{n} {\left( d_v(t) \right)}^2
\end{equation}
The modus operandi is the same as in the proof of \Cref{lemma:fairness:convergence_time}.
We start by showing that $\Phi$ is monotonically decreasing.
Then, we bound the expected change of the potential if two nodes are paired.
We extend this to an arbitrary set.
The analysis is concluded by an application of Markov's inequality to prove that
w.h.p. the true potential does not deviate too far from its expectation.
 
\begin{enumerate}
\item We again only observe the difference if two nodes are balanced.
As before any balancing action that involves more than two nodes can be decomposed into an infinite series of pairwise balances.
If each individual balancing is monotone, any sequence must be too.
Again, we consider the balancing of $v$ and $w$ and the corresponding change of $d_v(t)^2$ and $d_w(t)^2$.
Here, it holds:
$$
d_v(t)^2 + d_w(t)^2 - \left( d_v(t+1) + d_w(t+1) \right) \geq \frac{(d_v(t) + d_w(t))^2 - 1}{2} \geq 0
$$
This results from the same arguments as in \Cref{lemma:fairness:convergence_time} and follows from \Cref{lemma:helper-distance}.
\item  We now calculate the expected change in potential in each step.
Therefore, we consider the following simplified process.
Instead of constructing the average of all probabilities, the server creates a random matching between all probabilities.
Then, the average of each individual pair is computed, rounded, and send to the corresponding nodes.
This lower bounds the change of the potential, as the balancing of all values can be decomposed in an infinite series of pairwise matchings.
For each pair $(v,w) \in V^2$ let $Y(v,w)$ the random variable that the probabilities of $v$ and $w$ are paired.
This happens if both $v$ and $w$ successfully and are then randomly paired.
Both $v$ and $w$ successfully send with probability at least $\frac{1}{8}p_t(v)p_t(w) \in \Omega\left(\left(\frac{P}{n}\right)^2\right)$. The probability that $v$ is also paired with $w$ is then surely $O(\frac{1}{P})$.
To see this, imagine the experiment as follows: 
Suppose that $v$ and $w$ successfully sent.
First, build a random permutation of all active nodes, then pair each even element with the next even node.
Now, condition that no more than $2P$ nodes successfully sent. 
According to Markov's inequality this happens with prob. at least $\frac{1}{2}$.
Given that $v$ at an even position $i$ (which happens with prob. $\frac{1}{2}$) the probability that $w$ is at $i+1$ is is at least $\frac{1}{2P-1}$.
Thus, chaining all these events gives us a probability of at least $\frac{1}{16P}$.

Summarizing all these observations yield that the probability that $v$ and $w$ are paired is at least $c_1 P \frac{1}{n^2}$ for any small but constant $c_1 \leq \frac{1}{128}$.
This allows us to bound the expected value as follows.
\begin{align*}
\mathbb{E}[\Delta(t+1)| \Phi(t) = \phi] &\geq \sum_{v,w \in V^2} \mathbb{E}[Y(v,w)]\\
&\geq \sum_{v,w \in V^2} \left(c_1 P \frac{1}{n^2}\right) \left(d_v(t)^2 + d_w(t) - \frac{1}{2}\right)\\
&= c_1 P \frac{1}{n^2} \sum_{v,w \in V^2}  d_v(t)^2 + d_w(t) - \frac{1}{2}\\
&= c_1 P \frac{1}{n} \phi - \frac{1}{2}\\
\end{align*}
The last step followed from \Cref{lemma:helper-distance} with implies that $\sum_{v,w \in V^2}(d_v(t)-d_w(t)) = n\phi$.

\item By induction, the expected potential after $T$ rounds can now be bounded as:
$$
	\mathbb{E}\left[\Phi(t+T)|\Phi(t)=\phi\right] \leq \left( 1 - c_1 P \frac{1}{n}\right)^T\phi + \frac{T}{2}
$$
For any $T \geq c_2 \frac{n}{P} \cdot \log n$ with $c_2 \geq 128(k+1)$ the first term becomes negligible and it holds:
$$
	\mathbb{E}\left[\Phi(t+T)|\Phi(t)=\phi\right] \leq \frac{c_2n \cdot \log n}{P}
$$
Now, the probability that we greatly derive from this can be easily bound through Markov:
$$
	\Pr\left[\Phi(t+T) \geq n^{k-1}|\Phi(t)=\phi\right] \leq \frac{P}{n^{k-2}} \leq \frac{1}{n^{k-3}}
$$
\end{enumerate}

\end{proof}

\end{document}